\begin{document}

\title{Task Assignment on Multi-Skill Oriented Spatial Crowdsourcing (Technical Report)}

\author{\IEEEauthorblockN{Peng Cheng\IEEEauthorrefmark{1},
Xiang Lian\IEEEauthorrefmark{2},
Lei Chen\IEEEauthorrefmark{1}, 
Jinsong Han\IEEEauthorrefmark{3} and
Jizhong Zhao\IEEEauthorrefmark{3}}
\IEEEauthorblockA{\IEEEauthorrefmark{1}
Hong Kong University of Science and Technology, Hong Kong, China\\ Email: \{pchengaa, leichen\}@cse.ust.hk}
\IEEEauthorblockA{\IEEEauthorrefmark{2}University of Texas Rio Grande Valley, Texas, USA\\
Email: xiang.lian@utrgv.edu}
\IEEEauthorblockA{\IEEEauthorrefmark{3}Xi'an Jiaotong University, Shaanxi, China\\
Email: \{hanjinsong, zjz\}@mail.xjtu.edu.cn}}

\newcommand{\nop}[1]{}
\renewcommand{\algorithmicrequire}{\textbf{Input:}}
\renewcommand{\algorithmicensure}{\textbf{Output:}}
\newtheorem{definition}{Definition}
\newtheorem{theorem}{Theorem}
\newtheorem{lemma}{Lemma}
\newtheorem{remark}{Remark}
\newtheorem{corollary}{Corollary}
\newcommand{\narrowParagraph}{0}
\setlength{\parskip}{0ex}

\maketitle

\begin{abstract}
	With the rapid development of mobile devices and crowdsourcing platforms, the spatial crowdsourcing has attracted much attention from the database community. Specifically, the spatial crowdsourcing refers to sending location-based requests to workers, based on their current positions. In this paper, we consider a spatial crowdsourcing scenario, in which each worker has a set of qualified skills, whereas each spatial task (e.g., repairing a house, decorating a room, and performing entertainment shows for a ceremony) is time-constrained, under the budget constraint, and required a set of  skills. Under this scenario, we will study an important problem, namely \textit{multi-skill spatial crowdsourcing} (MS-SC), which finds an optimal worker-and-task assignment strategy, such that skills between workers and tasks match with each other, and workers' benefits are maximized under the budget constraint. We prove that the MS-SC problem is NP-hard and intractable. Therefore, we propose three effective heuristic approaches, including greedy, $g$-divide-and-conquer and cost-model-based adaptive algorithms to get worker-and-task assignments. Through extensive experiments, we demonstrate the efficiency and effectiveness of our MS-SC processing approaches on both real and synthetic data sets.
	
\end{abstract}

\section{Introduction}

With the popularity of GPS-equipped smart devices and wireless
mobile networks \cite{deng2013maximizing,kazemi2012geocrowd}, nowadays people can easily identify and participate
in some location-based tasks that are close to their current
positions, such as taking photos/videos, repairing houses, and/or
preparing for parties at some spatial locations. Recently, a new
framework, namely \textit{spatial crowdsourcing}
\cite{kazemi2012geocrowd}, for employing workers to conduct spatial
tasks, has emerged in both academia (e.g., the database community
\cite{chen2014gmission}) and industry (e.g., TaskRabbit
\cite{taskrabbit}). A typical spatial crowdsourcing platform (e.g.,
gMission \cite{chen2014gmission} and MediaQ \cite{kim2014mediaq})
assigns a number of moving \textit{workers} to do \textit{spatial
	tasks} nearby, which requires workers to physically move to some
specified locations and accomplish these tasks.

Note that, not all spatial tasks are as simple as taking a photo or
video clip (e.g., street view of Google Maps
\cite{GoogleMapStreetView}), monitoring traffic conditions (e.g.,
Waze \cite{waze}), or reporting local hot spots (e.g., Foursquare
\cite{foursquare}), which can be easily completed by providing
answers via camera, sensing devices in smart phones, or naked eyes,
respectively. In contrast, some spatial tasks can be rather complex,
such as repairing a house, preparing for a party, and performing
entertainment shows for a ceremony, which may consist of several
steps/phases/aspects, and require demanding professional skills from
workers. In other words, these complex tasks cannot be simply
accomplished by normal workers, but require the skilled workers with
specific expertise (e.g., fixing roofs or setting up the stage).

\begin{figure}[t]\centering
	\scalebox{0.8}[0.8]{\includegraphics{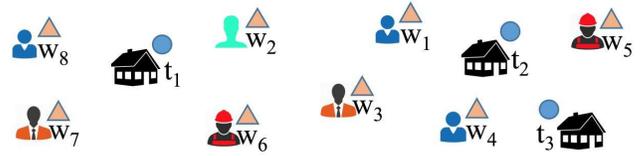}}\vspace{-1ex}
	\caption{\small An Example of Repairing a House in the Multi-Skill Spatial Crowdsourcing System.}\vspace{-5ex}
	\label{fig:bbq_example}
\end{figure}

\begin{table}\vspace{-3ex}
	\parbox[b]{.44\linewidth}{
		\caption{\small Worker/Task Skills}\label{tab:marker_skill}\vspace{-2ex}
		{\small\scriptsize
			\begin{tabular}{l|l}
				{\bf worker/task} & {\bf  skill key set}\\
				\hline \hline
				$w_1$, $w_4$ $w_8$& \{$a_1$, $a_4$, $a_6$\}\\
				$w_2$& \{$a_5$\}\\
				$w_3$, $w_7$& \{$a_2$,  $a_3$\}\\
				$w_5$, $w_6$& \{$a_1$,  $a_5$\}\\
				\hline
				$t_1$, $t_2$, $t_3$& \{$a_1$ $\sim$ $a_6$\}\\
				\hline
			\end{tabular}
		}
		\vspace{-1ex}
	}
	\hfill
	\parbox[b]{.52\linewidth}{
		\caption{\small Descriptions of Skills}\label{tab:skill_description}\vspace{-2ex}
		{\small\scriptsize
			\begin{tabular}{c|l}
				{\bf skill key} & {\bf \quad  skill description}\\
				\hline \hline
				$a_1$& painting walls\\
				$a_2$& repairing roofs\\
				$a_3$& repairing floors\\
				$a_4$& installing pipe systems\\
				$a_5$& installing electronic components\\
				$a_6$& cleaning\\
				\hline
			\end{tabular}
		}
		\vspace{-1ex}
	}
	\vspace{-4ex}
\end{table}

Inspired by the phenomenon of complex spatial tasks, in this paper,
we will consider an important problem in the
spatial crowdsourcing system, namely \textit{multi-skill spatial
	crowdsourcing} (MS-SC), which assigns multi-skilled workers to those
complex tasks, with the matching skill sets and high scores of the
worker-and-task assignments.

In the sequel, we will illustrate  the
MS-SC problem by a motivation example of repairing a house.

\nop{
	
	The spatial tasks in the spatial crowdsourcing system can be
	checking the display shelves at neighborhood stores, taking a
	photo/video (e.g., street view of Google Maps
	\cite{GoogleMapStreetView}), monitoring traffic conditions (e.g.,
	Waze \cite{waze}), or reporting local hot spots (e.g., Foursquare
	\cite{foursquare}). Since simple tasks are easy for workers to
	accomplish (i.e., workers only need to move to some locations, and
	provide the answer via naked eyes, camera, or sensing devices in
	smart phones), they do not require expert skills from workers. Thus,
	any worker can be assigned to do simple tasks.

	On the other hand, however, spatial tasks can be as complex as
	repairing a house, decorating a room, performing entertainment shows
	for a ceremony, or reporting an ad-hoc news, which consist of
	several steps/phases/aspects, and require demanding professional
	skills from workers. In this case, the spatial crowdsourcing
	platform has to assign multi-skilled workers only to those complex
	tasks that exactly require their skills, which is more challenging
	than the assignment with simple tasks.

}

\vspace{0.5ex}\noindent {\bf Example 1 (Repairing a House).} {\it
	Consider a scenario of the spatial crowdsourcing in Figure
	\ref{fig:bbq_example}, where a user wants to repair a house he/she
	just bought, in order to have a good living environment for his/her
	family. However, it is not an easy task to repair the house, which
	requires many challenging works (skills), such as repairing
	roofs/floors, replacing/installing pipe systems and
	electronic components, painting walls, and finally cleaning rooms.
	There are many skilled workers that can accomplish one or some of
	these skill types. In this case, the user can post a spatial task
	$t_1$, as shown in Figure \ref{fig:bbq_example}, in the spatial
	crowdsourcing system, which specifies a set of required skills
	(given in Tables \ref{tab:marker_skill} and \ref{tab:skill_description}) for the
	house-repairing task, a valid time period to repair, and the maximum
	budget that he/she would like to pay.
	
	In Figure \ref{fig:bbq_example}, around the spatial location of
	task $t_1$, there are 8 workers, $w_1 \sim w_8$, each of whom has a
	different set of skills as given in Table \ref{tab:marker_skill}.
	For example, worker $w_1$ has the skill set $\{\text{painting walls},$
	$\text{installing pipe systems},$ $\text{cleaning}\}$.

	To accomplish the spatial task $t_1$ (i.e., repair the house), the
	spatial crowdsourcing platform needs to select a best subset of
	workers $w_i$ ($1\leq i\leq 8$), such that the union of their skill
	sets can cover the required skill set of task $t_1$, and, moreover,
	workers can travel to the location of $t_1$ with the maximum net
	payment under the constraints of arrival times, workers' moving ranges,  and budgets. For
	example, we can assign task $t_1$ with 3 workers $w_2$, $w_7$, and
	$w_8$, who are close to $t_1$, and whose skills can cover all the
	required skills of $t_1$.
	
}

\nop{
	
	\vspace{0.5ex}\noindent {\bf Example 2 (Performing Ad Hoc News
		Reporting).} {\it A task requester wants to report a breaking news
		suddenly occurring at a remote location. The news reporting needs
		different skills such as taking photos, recording audio/videos,
		interviewing people, news background analysis, and writing the news.
		In this situation, the task requester can post a spatial task of
		news reporting with the specified location to the spatial
		crowdsourcing system, which includes the information such as the
		budget, a required skill set, and the valid period. The system can
		then assign those workers, who are near the specified location and
		have the required news reporting skills, to the news reporting task.}
	
}

Motivated by the example above, in this paper, we will
formalize the MS-SC problem, which aims to efficiently assign
workers to complex spatial tasks, under the task constraints of
valid time periods and maximum budgets, such that the required skill
sets of tasks are fully covered by those assigned workers, and the
total score of the assignment (defined as the total profit of
workers) is maximized.

Note that, existing works on spatial crowdsourcing focused on
assigning workers to tasks to maximize the total number of completed
tasks \cite{kazemi2012geocrowd}, the number of performed tasks for a
worker with an optimal schedule \cite{deng2013maximizing}, or the
reliability-and-diversity score of assignments
\cite{cheng2014reliable}. However, they did not take into account
multi-skill covering of complex spatial tasks, time/distance constraints, and the assignment
score with respect to task budgets and workers' salaries (excluding
the traveling cost). Thus, we cannot directly apply prior solutions to
solve our MS-SC problem.

In this paper, we first prove that our MS-SC problem in the spatial
crowdsourcing system is NP-hard, by reducing it from the \textit{Set
	Cover Problem} (SCP) \cite{karp1972reducibility}. As a result, the
MS-SC problem is not tractable, and thus very challenging to achieve
the optimal solution. Therefore, in this paper, we will tackle the
MS-SC problem by proposing three effective approximation approaches,
greedy, $g$-\textit{divide-and-conquer} ($g$-D\&C), and
cost-model-based adaptive algorithms, which can efficiently compute
worker-and-task assignment pairs with the constraints/goals of
skills, time, distance, and budgets.

Specifically, we make the following contributions.
\begin{itemize}[leftmargin=*]
	
	\item We formally define the \textit{multi-skill spatial
		crowdsourcing} (MS-SC) problem in Section \ref{sec:problem_def}, under the constraints of multi-skill covering, time, distance, and budget for spatial workers/tasks in the spatial crowdsourcing system.
	
	\item We prove that the MS-SC problem is NP-hard, and thus
	intractable in Section \ref{sec:reduction}.
	
	\item We propose efficient approximation approaches, namely greedy,
	$g$-divide-and-conquer, and cost-model-based adaptive
	algorithms to tackle the MS-SC problem in Sections \ref{sec:greedy}, \ref{sec:g_D&C}, and \ref{sec:adpative}, respectively.
	
	\item We conduct extensive experiments on real and synthetic data sets,
	and show the efficiency and effectiveness of our MS-SC approaches in Section \ref{sec:exper}.
\end{itemize}

Section \ref{sec:framework} introduces a general framework for our MS-SC problem in spatial crowdsourcing systems. Section \ref{sec:related} reviews previous works on spatial
crowdsourcing. Finally, Section \ref{sec:conclusion} concludes this
paper.

\section{Problem Definition}
\label{sec:problem_def}

In this section, we present the formal definition of the multi-skill
spatial crowdsourcing, in which we assign multi-skilled
workers with time-constrained complex spatial tasks.

\subsection{Multi-Skilled Workers}

We first define the multi-skilled workers in spatial crowdsourcing
applications. Assume that $\Psi=\{a_1, a_2, ..., a_k\}$ is a
universe of $k$ abilities/skills. Each worker has one or multiple
skills in $\Psi$, and can provide services for spatial tasks that
require some skills in $\Psi$.

\begin{definition} $($Multi-Skilled Workers$)$ Let
	$W_p$ $=\{w_1,$ $w_2,$ $...,$ $w_n\}$ be a set of $n$ multi-skilled
	workers at timestamp $p$. Each worker $w_i$ ($1\leq i\leq n$) has a
	set, $X_i$ $(\subseteq \Psi)$, of skills, is located at position
	$l_i(p)$ at timestamp $p$, can move with velocity $v_i$, and has a maximum moving distance $d_i$.
	\qquad $\blacksquare$ \label{definition:worker}
\end{definition}

In Definition \ref{definition:worker}, the multi-skilled workers
$w_i$ can move dynamically with speed $v_i$ in any direction, and at
each timestamp $p$, they are located at spatial places $l_i(p)$, and
prefer to move at most $d_i$ distance from $l_i(p)$. They can freely
join or leave the spatial crowdsourcing system. Moreover, each
worker $w_i$ is associated with a set, $X_i$, of skills, such as
taking photos, cooking, and decorating rooms.

\subsection{Time-Constrained Complex Spatial Tasks}

Next, we define complex spatial tasks in the spatial crowdsourcing
system, which are constrained by deadlines of arriving at task
locations and budgets.

\begin{definition}
	$($Time-Constrained Complex Spatial Tasks$)$ Let $T_p=\{t_1, t_2,
	..., t_m\}$ be a set of time-constrained complex spatial tasks at
	timestamp $p$. Each task $t_j$ ($1\leq j\leq m$) is located
	at a specific location $l_j$, and workers are expected to reach the
	location of task $t_j$ before the arrival deadline $e_j$. Moreover, to complete the
	task $t_j$, a set, $Y_j$ $(\subseteq \Psi)$, of skills is
	required for those assigned workers. Furthermore, each task $t_j$ is
	associated with a budget, $B_j$, of salaries for workers. \qquad
	$\blacksquare$ \label{definition:task}
\end{definition}

As given in Definition \ref{definition:task}, usually, a task
requester creates a time-constrained spatial task $t_j$, which
requires workers physically moving to a specific location $l_j$, and
arriving at $l_j$ before the arrival deadline $e_j$. Meanwhile, the task
requester also specifies a budget, $B_j$, of salaries, that is, the
maximum allowance that he/she is willing to pay for workers. This
budget, $B_j$, can be either the reward cash or bonus points in the
spatial crowdsourcing system.

Moreover, the spatial task $t_j$ is often complex, in the sense that
it might require several distinct skills (in $Y_j$) to be conducted.
For example, a spatial task of repairing a house may require
several skills, such as repairing floors, painting walls and cleaning.

\subsection{The Multi-Skill Spatial Crowdsourcing Problem}

In this subsection, we will formally define the
multi-skill spatial crowdsourcing (MS-SC) problem, which assigns spatial
tasks to workers such that workers can cover the skills required by
tasks and the assignment strategy can achieve high scores.

\vspace{0.5ex}\noindent {\bf Task Assignment Instance Set.} Before
we present the MS-SC problem, we first introduce the concept of task
assignment instance set.

\begin{definition}
	$($Task Assignment Instance Set$)$ At timestamp $p$, given a worker
	set $W_p$ and a task set $T_p$, a \textit{task assignment instance
		set}, denoted by $I_p$, is a set of worker-and-task assignment pairs
	in the form $\langle w_i, t_j\rangle$, where each worker $w_i \in
	W_p$ is assigned to at most one spatial task $t_j\in T_p$.
	
	Moreover, we denote $CT_p$ as the set of completed tasks $t_j$ that can be reached
	before the arrival deadlines $e_j$, and accomplished by those assigned workers in $I_p$.\qquad
	$\blacksquare$ \label{definition:instance}
\end{definition}

Intuitively, the task assignment instance set $I_p$ is one possible
(valid) worker-and-task assignment between worker set $W_p$ and task
set $T_p$. Each pair $\langle w_i, t_j\rangle$ is in $I_p$, if and
only if this assignment satisfies the constraints of task $t_j$,
with respect to distance (i.e., $d_i$), time (i.e., $e_j$), budget
(i.e., $B_j$), and skills (i.e., $Y_j$).

In particular, for each pair $\langle w_i, t_j\rangle$, worker $w_i$
must arrive at location $l_j$ of the assigned task $t_j$ before its
arrival deadline $e_j$, and can support the skills required by task $t_j$,
that is, $X_i \bigcap Y_j \neq \emptyset$. The distance between
$l_i(p)$ and $l_j$ should be less than $d_i$. Moreover, for all
pairs in $I_p$ that contain task $t_j$, the required skills of task
$t_j$ should be fully covered by skills of its assigned workers,
that is, $Y_j \subseteq \cup_{\forall \langle w_i, t_j\rangle \in
	I_p}X_i$.

To assign a worker $w_i$ to a task $t_j$, we need to pay him/her
salary, $c_{ij}$, which is related to the traveling cost from the
location, $l_i(p)$, of worker $w_i$ to that, $l_j$, of task $t_j$.
The traveling cost, $c_{ij}$, for vehicles can be calculated by the
unit gas price per gallon times the number of gallons needed for the
traveling. For the public transportation, the cost $c_{ij}$ can be
computed by the fees per mile times the traveling distance. For
walking, we can also provide the compensation fee for the worker
with the cost $c_{ij}$ proportional to his/her traveling distance.

Without loss of generality, we assume that the cost, $c_{ij}$, is
proportional to the traveling distance, $dist(l_i(p), l_j)$, between
$l_i(p)$ and $l_j$, where $dist(x, y)$ is a distance function
between locations $x$ and $y$. Formally, we have: $c_{ij} = C_i\cdot
dist(l_i(p), l_j)$, where $C_i$ is a constant (e.g.,
gas/transportation fee per mile).

Note that, for simplicity, in this paper, we use Euclidean distance
as our distance function (i.e., $dist(x, y)$). We can easily extend
our proposed approaches in this paper by considering other distance
function (e.g., road-network distance), under the framework of the
spatial crowdsourcing system, and would like to leave the topics of
using other distance metics as our future work.

\vspace{0.5ex}\noindent {\bf The MS-SC Problem.} In the sequel, we
give the definition of our \textit{multi-skill spatial
	crowdsourcing} (MS-SC) problem.

\begin{definition}
	(Multi-Skill Spatial Crowdsourcing Problem) Given a time interval $P$, the problem of
	\textit{multi-skill spatial crowdsourcing} (MS-SC) is to assign the
	available workers $w_i\in W_p$ to spatial tasks $t_j\in T_p$, and to obtain a
	task assignment instance set, $I_p$, at each timestamp $p \in P$, such that:
	
	\begin{enumerate}
		\item any worker $w_i\in W_p$ is assigned to only one spatial
		task $t_j\in T_p$ such that his/her arrival time at location $l_j$
		before the arrival deadline $e_j$, the moving distance is
		less than the worker's maximum moving distance $d_i$, and 
		all workers assigned to $t_j$ have skill sets fully covering $Y_j$;
		\item the total traveling cost of all the assigned workers to task $t_j$ does not exceed the budget of
		the task, that is, $\sum_{\forall \langle w_i, t_j\rangle \in
			I_p}c_{ij}$ $\leq B_j$; and
		\item the total score, $\sum_{p \in P}S_p$, of the task assignment
		instance sets $I_p$ within the time interval $P$ is maximized,
	\end{enumerate}
	
	where it holds that:
	
	{\small\vspace{-4ex}
		\begin{eqnarray}
			S_p &=& \sum_{t_j \in CT_p}B'_j, and \label{eq:assignment_score}\\
			B'_j &=& B_j - \sum_{\langle w_i, t_j\rangle \in I_p}c_{ij}.\label{eq:flexible_budget}
		\end{eqnarray}
		\vspace{-2ex}
		\label{definition:MS_SC}}
\end{definition}

In Definition \ref{definition:MS_SC}, our MS-SC problem aims to
assign workers $w_i$ to tasks $t_j$ such that: (1) workers $w_i$ are
able to reach locations, $l_j$, of tasks $t_j$ on time and
cover the required skill set $Y_j$, and the moving distance is less than $d_i$; (2) the total traveling cost
of all the assigned workers should not exceed budget $B_j$; and
(3) the total score, $\sum_{p \in P}S_p$, of the task-and-worker assignment within time interval $P$ should be maximized.

After the server-side assignment at a timestamp $p$, those assigned
workers would change their status to unavailable, and move to the
locations of spatial tasks. Next, these workers will become
available again, only if they finish/reject the assigned tasks.

\vspace{0.5ex}\noindent {\bf Discussions on the Score $S_p$.}
Eq.~(\ref{eq:assignment_score}) calculates the score, $S_p$, of a
task-and-worker assignment by summing up \textit{flexible budgets},
$B_j'$ (given by Eq.~(\ref{eq:flexible_budget})), of all the
completed tasks $t_j\in CT_p$, where the \textit{flexible budget} of
task $t_j$ is the remaining budget of task $t_j$ after paying
workers' traveling costs. Maximizing scores means maximizing the 
number of accomplished tasks while minimizing the traveling cost of workers.

Intuitively, each task $t_j$ has a maximum budget $B_j$, which
consists of two parts, the traveling cost of the assigned workers
and the flexible budget. The former cost is related to the
total traveling distance of workers, whereas the latter one can be
freely and flexibly used for rewarding workers for their
contributions to the task. Here, the distribution of the flexible
budget among workers can follow existing incentive mechanisms in
crowdsourcing \cite{rula2014no, yang2012crowdsourcing}, which
stimulate workers who did the task better (i.e., with more
rewards).

Note that, in Eq.~(\ref{eq:assignment_score}), the score, $S_p$, of
the task assignment instance set $I_p$ only takes into account those
tasks that can be completed by the assigned workers (i.e., tasks in
set $CT_p$). Here, a task can be completed, if the assigned workers
can reach the task location before the deadline and finish the task
with the required skills.

Since the spatial crowdsourcing system is quite dynamic, new
tasks/workers may arrive at next timestamps. Thus, if we cannot find
enough/proper workers to do the task at the current timestamp $p$,
the task is still expected to be successfully assigned with workers
and completed in future timestamps. Meanwhile, the task requester
can be also informed by the spatial crowdsourcing system to increase
the budget (i.e., with higher budget $B_j$, we can find more skilled
candidate workers that satisfy the budget constraint). Therefore, in
our definition of score $S_p$, we would only consider those tasks in
$CT_p$ that can be completed by the assigned workers at timestamp
$p$, and maximize this score $S_p$.

\subsection{Hardness of Multi-Skill Spatial Crowdsourcing Problem}
\label{sec:reduction}

With $m$ time-constrained complex spatial tasks and $n$
multi-skilled workers, in the worst case, there are an exponential
number of possible task-and-worker assignment strategies, which
leads to high time complexity, $O((m + 1)^n)$. In this subsection,
we prove that our MS-SC problem is NP-hard, by reducing a well-known
NP-hard problem, \textit{set cover problem} (SCP)
\cite{vazirani2013approximation}, to the MS-SC problem.

\begin{lemma} (Hardness of the MS-SC Problem)
	The problem of the Multi-Skill Spatial Crowdsourcing (MS-SC) is
	NP-hard. \label{lemma:np}\vspace{-1ex}
\end{lemma}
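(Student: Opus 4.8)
The plan is to show that the MS-SC problem is NP-hard by a polynomial-time reduction from the classic \emph{Set Cover Problem} (SCP). Recall an SCP instance: a universe $U=\{u_1,\dots,u_k\}$, a collection of subsets $\mathcal{S}=\{S_1,\dots,S_n\}$ with $S_i\subseteq U$, and an integer $\kappa$; the question is whether some $\kappa$ of the sets in $\mathcal{S}$ cover $U$. Given such an instance, I would build an MS-SC instance as follows. Let the skill universe be $\Psi=U$, i.e.\ one skill $a_\ell$ for each element $u_\ell\in U$. Create $n$ workers $w_1,\dots,w_n$, where worker $w_i$ has skill set $X_i=S_i$. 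Create a single task $t_1$ with required skill set $Y_1=U=\Psi$. All $n$ workers and the task are placed so that every worker is within both its maximum moving distance $d_i$ and the task's arrival deadline $e_1$ of $t_1$ (e.g.\ co-locate everyone, or place workers on a tiny circle around $t_1$ so travel costs are uniform and negligible). Set each unit cost $C_i=1$ (or a common small constant), so that $c_{i1}$ equals the tiny common travel distance $\delta$, and set the budget $B_1=\kappa\cdot\delta$ (more precisely, choose $\delta$ small enough and set $B_1$ so that the budget admits at most $\kappa$ workers but at least that many are affordable). This construction is clearly computable in polynomial time.

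Next I would establish the equivalence. If the SCP instance is a YES-instance, witnessed by sets $S_{i_1},\dots,S_{i_\kappa}$ covering $U$, then assigning exactly the corresponding workers $w_{i_1},\dots,w_{i_\kappa}$ to $t_1$ gives a valid assignment: the union of their skill sets covers $Y_1$, each satisfies the distance/deadline constraints by construction, and the total travel cost is $\kappa\delta=B_1$, so the budget constraint (condition~2 of Definition~\ref{definition:MS_SC}) holds; hence $t_1\in CT_1$ and the score is $S_1=B_1-\kappa\delta$ plus, if we build it so the score is positive, a strictly positive flexible-budget term, so the optimal score is at least this value and in particular $t_1$ is completable. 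Conversely, if the optimal MS-SC assignment completes $t_1$, then the set of assigned workers has skill-set union covering $U$, and the budget constraint $\sum c_{i1}\le B_1=\kappa\delta$ forces at most $\kappa$ of them; the corresponding subsets of $\mathcal{S}$ thus form a set cover of size $\le\kappa$, so the SCP instance is a YES-instance. Therefore the decision version of MS-SC (``is there an assignment with score $\ge$ some threshold'' or equivalently ``can $t_1$ be completed within budget $B_1$'') is at least as hard as SCP, and since MS-SC is clearly in NP for its natural decision version, it is NP-hard; consequently the optimization problem over $P$ (even for a single timestamp and single task) is NP-hard.

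The main obstacle --- and the point needing the most care --- is getting the budget/cost bookkeeping exactly right so that the budget constraint genuinely encodes the cardinality bound $\kappa$ of SCP rather than something weaker. Because the score $S_p$ rewards \emph{minimizing} travel cost (via the flexible budget $B'_j$), one must ensure that an optimal MS-SC solution does not circumvent the cardinality restriction by, say, using many workers with zero travel cost; this is why I co-locate all workers at a common tiny distance $\delta>0$ from $t_1$ (or equivalently give them a common positive unit cost), making the number of assigned workers exactly proportional to the incurred cost. One also has to confirm that partial coverage is never beneficial --- which holds because a task contributes to $S_p$ only if it lies in $CT_p$, i.e.\ only if its required skill set is \emph{fully} covered --- so the optimizer has no incentive to under-cover. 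With $\delta$ chosen small enough relative to the integrality of $\kappa$, the inequality $\sum c_{i1}\le B_1$ is satisfiable if and only if at most $\kappa$ workers are used, closing the reduction. Everything else (polynomial-time construction, membership in NP, the two directions of the equivalence) is routine once this quantitative correspondence is pinned down.
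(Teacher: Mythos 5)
Your reduction is correct in substance and shares the same skeleton as the paper's proof (one skill per universe element, one worker per subset, a single task whose required skill set is the whole universe, and generous distance/deadline settings), but the two arguments encode set cover differently. The paper reduces from the \emph{weighted} (minimum-cost) set cover optimization problem: it maps each subset's cost $c_i$ directly to the travel cost $c_{ij}$, sets the budget $B_j=\sum_i c_{ij}$ so the budget constraint never binds, and observes that maximizing $S_p=B_j-\sum_{i\in K'}c_{ij}$ is literally minimizing the total cost of the chosen sets; this is an optimization-to-optimization reduction with no quantitative bookkeeping at all. You instead reduce from the \emph{decision} (cardinality) version: uniform travel cost $\delta$ for every worker and budget $B_1\approx\kappa\delta$, so the budget constraint itself encodes the bound of $\kappa$ sets. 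Your route has the advantage of arguing from the NP-complete decision problem, which is the cleaner complexity-theoretic framing, but it is exactly where your one loose end sits: with $B_1=\kappa\delta$, a minimum cover of size exactly $\kappa$ yields score $0$, which is indistinguishable from the score of leaving $t_1$ uncompleted (the sum over $CT_p$ is then empty), so the threshold question ``optimal score $\ge 0$'' does not separate YES from NO. You flag this and gesture at the fix; to close it, take $B_1=\kappa\delta+\epsilon$ with $0<\epsilon<\delta$, so that a cover of size $\le\kappa$ gives optimal score $\ge\epsilon>0$ while any covering assignment in a NO-instance needs at least $(\kappa+1)\delta>B_1$ in travel cost and is infeasible, making ``optimal score $>0$'' the correct criterion. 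With that adjustment your proof is complete, and slightly more self-contained than the paper's, which implicitly relies on the NP-hardness of the weighted optimization variant of SCP.
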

\begin{proof}
	Please refer to Appendix A.
\end{proof}

Since the MS-SC problem involves multiple spatial tasks whose skill
sets should be covered, we thus cannot directly use existing
approximation algorithms for SCP (or its variants) to solve the
MS-SC problem. What is more, we also need to find an assignment
strategy such that workers and tasks match with each other (in terms
of traveling time/cost, and budge constraints), which is more
challenging.

Thus, due to the NP-hardness of our MS-SC problem,  in subsequent
sections, we will present a general framework for MS-SC processing and design 3 heuristic algorithms, namely greedy,
$k$-divide-and-conquer, and cost-model-based adaptive approaches, to
efficiently retrieve MS-SC answers.

\begin{table}
	\begin{center}\vspace{-5ex}
		\caption{\small Symbols and Descriptions.} \label{table0}
		{\small\scriptsize
			\begin{tabular}{l|l}
				{\bf Symbol} & {\bf \qquad \qquad \qquad\qquad\qquad Description} \\ \hline \hline
				$T_p$   & a set of $m$ time-constrained spatial tasks $t_j$ at timestamp $p$\\
				$W_p$   & a set of $n$ dynamically moving workers $w_i$ at timestamp $p$\\
				$e_j$   & the deadline of arriving at the location of task $t_j$\\
				$l_i(p)$  & the position of worker $w_i$ at timestamp $p$\\
				$l_j$   & the position of task $t_j$\\
				$X_i$   & a set of skills that worker $w_i$ has\\
				$Y_j$   & a set of the required skills for task $t_j$ \\
				$d_i$   & the maximum moving distance of worker $w_i$ \\
				$B_j$   & the maximum budget of task $t_j$ \\
				$I_p$   & the task assignment instance set at timestamp $p$ \\
				$CT_p$   & a set of tasks that are assigned with workers at timestamp $p$ and\\
				& \qquad can be completed by these assigned workers\\
				$C_i$ & the unit price of the traveling cost of worker $w_i$\\
				$c_{ij}$ & the traveling cost from the location of worker $w_i$ to that of task $t_j$\\
				$S_p$   & the score of the task assignment instance set $I_p$\\
				$\Delta S_p$   & the score increase when changing the pair assignment\\ \hline
				\hline
			\end{tabular}
		}
	\end{center}\vspace{-6ex}
\end{table}

Table \ref{table0} summarizes the commonly used symbols.

\section{Framework of Solving MS-SC Problems}
\label{sec:framework}

In this section, we present a general framework,  
namely {\sf MS-SC\_Framework}, in Figure \ref{alg:framework} for solving the
MS-SC problem, which greedily assigns workers with spatial tasks for
multiple rounds. For each round, at timestamp $p$, we first retrieve
a set, $T_p$, of all the available spatial tasks, and a set, $W_p$,
of available workers (lines 2-3). Here, the available task set $T_p$
contains existing spatial tasks that have not been assigned with
workers in the last round, and the ones that newly arrive at the
system after the last round. Moreover, set $W_p$ includes those
workers who have accomplished (or rejected) the previously assigned
tasks, and thus are available to receive new tasks in the current
round.

In our spatial crowdsourcing system, we organize both sets $T_p$ and
$W_p$ in a cost-model-based grid index. For the sake of space
limitations, details about the index construction can be found in
Appendix E. Due to
dynamic changes of sets $T_p$ and $W_p$, we also update the grid
index accordingly (line 4). Next, we utilize the grid index to
efficiently retrieve a set, $S$, of valid worker-and-task candidate
pairs (line 5). That is, we obtain those pairs of workers and tasks,
$\langle w_i, t_j\rangle$, such that workers $w_i$ can reach the
locations of tasks $t_j$ and satisfy the constraints of skill
matching, time, and budgets for tasks $t_j$. With valid pairs in set
$S$, we can apply our proposed algorithms, that is, \textit{greedy},
\textit{$g$-divide-and-conquer}, or \textit{adaptive
	cost-model-based} approach, over set $S$, and obtain a good
worker-and-task assignment strategy in an assignment instance set
$I_p$, which is a subset of $S$ (line 6).

Finally, for each pair $\langle w_i, t_j \rangle$ in the selected
worker-and-task assignment set $I_p$, we will notify worker $w_i$ to
do task $t_j$ (lines 7-8).

\begin{figure}[ht]
	\begin{center}\vspace{-2ex}
		\begin{tabular}{l}
			\parbox{3.1in}{
				\begin{scriptsize}
					\begin{tabbing}
						12\=12\=12\=12\=12\=12\=12\=12\=12\=12\=12\=\kill
						{\bf Procedure {\sf MS-SC\_Framework}} \{ \\
						\> {\bf Input:} a time interval $P$\\
						\> {\bf Output:} a worker-and-task assignment strategy within the time interval $P$\\
						\> (1) \> \> for each timestamp $p$ in $P$\\
						\> (2) \> \> \> retrieve all the available spatial tasks to $T_p$\\
						\> (3) \> \> \> retrieve all the available workers to $W_p$\\
						\> (4) \> \> \> update the grid index for current $T_p$ and $W_p$\\
						\> (5) \> \> \> obtain a set, $S$, of valid worker-and-task pairs from the index\\
						\> (6) \> \> \> use our \textit{greedy}, \textit{$g$-divide-and-conquer} or \textit{adaptive cost-model-based} approach\\
						\> \> \> \>  \> to obtain a good assignment instance set, $I_p \subseteq S$\\
						\> (7) \> \> \> for each pair $\langle w_i, t_j \rangle$ in $I_p$\\
						\> (8) \> \> \> \> inform worker $w_i$ to conduct task $t_j$\\
						\}
					\end{tabbing}
				\end{scriptsize}
			}
		\end{tabular}
	\end{center}\vspace{-3ex}
	\caption{\small Framework for Solving the MS-SC Problem.}
	\label{alg:framework}\vspace{-5ex}
\end{figure}

\nop{
	
	The available tasks include the ones arriving between the last and
	the current rounds, and those that are unfinished after last round.
	For any task $t_j$ of the latter type, we need to maintain the set
	of workers, $W_j$, who have accepted the assigned task $t_j$, and
	update the information of the task $t_j$. When we update the
	information of unfinished task $t_j$ after last round, we reduce its
	budget $B_j$ to $B'_j=B_j - \sum_{w_i \in W_j} c_{ij}$ and change
	its required skill set $Y_j$ to $Y'_j = Y_j / \cup_{w_i \in
		W_j}X_i$.

	Moreover, we allow workers to accomplish multiple tasks within a
	period of time. However,
	each worker can only be available to accept more tasks after the
	current assigned task has been rejected or finished.
	When we assign more than one task to
	a worker in a round, if he/she uses too
	much time to finish one task,
	he/she may miss the deadlines of the other tasks,
	which could harm the overall performance of the platform.
	
	In order to facilitate the processing of the MS-SC problem,
	we present an efficient cost-model-based indexing
	mechanism, which can maintain workers and tasks and help the
	retrieval of MS-SC answers. Different from the grid-index in
	our previous work \cite{cheng2014reliable}, the grid-index in
	this work utilize bitmap synopses to time- and space- efficiently
	organize/manipulate the sets of skills for workers and tasks.
	For the details of our grid-index, please refer to
	Appendix E of our technical report \cite{arxivReport}.
	
}

\section{The Greedy Approach}
\label{sec:greedy}

In this section, we will propose a greedy algorithm, which greedily
selects one worker-and-task assignment, $\langle w_i, t_j \rangle$,
at a time that can maximize the increase of the assignment score
(i.e., $\sum_{\forall p\in P} S_p$ as given in Definition
\ref{definition:MS_SC}). This greedy algorithm can be applied in
line 6 of the framework, {\sf MS-SC\_Framework}, in Fig.
\ref{alg:framework}.

\subsection{The Score Increase}
\label{subsec:score_increase}

Before we present the greedy algorithm, we first define the
increase, $\Delta S_p$, of score $S_p$ (given in
Eq.~(\ref{eq:assignment_score})), in the case where we assign a
newly available worker $w_i$ to task $t_j$. Specifically, from
Eqs.~(\ref{eq:assignment_score}) and (\ref{eq:flexible_budget}), we
define the score increase after assigning worker $w_i$ to task $t_j$
as follows:\vspace{-2ex}

{\small
	\begin{eqnarray}
		\Delta S_p = S_p-S_{p-1} = \Delta B_j' = \frac{|X_i \cap (Y_j -
			\widetilde{Y_j})|}{|Y_j|} \cdot B_j -
		c_{ij},\label{eq:score_increase}
	\end{eqnarray}\vspace{-2ex}}

\noindent where $\widetilde{Y_j}$ is the set of skills that have
been covered by those assigned workers (excluding the new worker
$w_i$) for task $t_j$.

In Eq.~(\ref{eq:score_increase}), $\frac{|X_i \cap (Y_j -
	\widetilde{Y_j})|}{|Y_j|}$ is the ratio of skills for task $t_j$
that have not been covered by (existing) assigned workers, but can
be covered by the new worker $w_i$. Intuitively, the first term in
Eq.~(\ref{eq:score_increase}) is the pre-allocated maximum budget
based on the number of covered skills by the new worker $w_i$,
whereas the second term, $c_{ij}$, is the traveling cost from
location of $w_i$ to that of $t_j$. Thus, the score increase,
$\Delta S_p$, in Eq.~(\ref{eq:score_increase}) is to measure the
change of score (i.e., flexible budget) $S_p$, due to the assignment
of worker $w_i$ to task $t_j$.

\subsection{Pruning Strategies}
\label{sub:pruning}

The score increase can be used as a measure to evaluate and decide
which worker-and-task assignment pair should be added to the task
assignment instance set $I_p$. That is, each time our greedy
algorithm aims to choose one worker-and-task assignment pair in $S$
with the highest score increase, which will be added to $I_p$ (i.e.,
line 6 of {\sf MS-SC\_Framework} in Fig. \ref{alg:framework}).
However, it is not efficient to enumerate all valid worker-and-task
assignment pairs in $S$, and compute score increases. That is, in
the worst case, the time complexity is as high as $O(m\cdot n)$,
where $m$ is the number of tasks and $n$ is the number of workers.
Therefore, in this subsection, we present three effective pruning
methods (two for pruning workers and one for pruning tasks) to quickly filter out false alarms of worker-and-task pairs
in set $S$.

\noindent\textbf{The Worker-Pruning Strategy.} When assigning available workers
to spatial tasks, we can rule out those valid worker-and-task pairs
in $S$, which contain either \textit{dominated} or \textit{high-wage
	workers}, as given in Lemmas \ref{lemma:dominate_worker} and
\ref{lemma:expensive_worker}, respectively, below.

We say that a worker $w_a$ is \textit{dominated by} a worker $w_b$
w.r.t. task $t_j$ (denoted as $w_a \succ_{t_j} w_b$), if it holds
that $X_a \subseteq X_b$ and $c_{aj}\geq c_{bj}$, where $X_a$ and
$X_b$ are skill sets of workers $w_a$ and $w_b$, and $c_{aj}$ and
$c_{bj}$ are the traveling costs from locations of workers $w_a$ and
$w_b$ to task $t_j$, respectively.

\begin{lemma} (Pruning Dominated Workers) Given two worker-and-task
	pairs $\langle w_a, t_j\rangle$ and $\langle w_b, t_j\rangle$ in
	valid pair set $S$, if it holds that $w_a \succ_{t_j} w_b$, then we
	can safely prune the worker-and-task pair $\langle w_a,
	t_j\rangle$. \label{lemma:dominate_worker}
\end{lemma}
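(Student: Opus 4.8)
The plan is to argue that replacing worker $w_a$ by worker $w_b$ in any assignment can only help — it never reduces the skill coverage of task $t_j$, and it never increases the traveling cost charged against the budget $B_j$. Concretely, I would fix an arbitrary (partial) task assignment instance set $I_p$ that is being built greedily and suppose it contains the pair $\langle w_a, t_j\rangle$. I would then form $I_p' = (I_p \setminus \{\langle w_a, t_j\rangle\}) \cup \{\langle w_b, t_j\rangle\}$ (removing any prior assignment of $w_b$ if necessary, which only frees up resources) and show that $I_p'$ is feasible whenever $I_p$ is, and that its score is at least as large.

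First I would check feasibility. Since $X_a \subseteq X_b$, the union of skill sets of workers assigned to $t_j$ in $I_p'$ contains the union in $I_p$, so if $Y_j$ was covered before it is still covered. The arrival-deadline and maximum-moving-distance constraints for $w_b$ with respect to $t_j$ are exactly the conditions for $\langle w_b, t_j\rangle$ to be in the valid pair set $S$, which holds by hypothesis. For the budget constraint, $c_{bj} \le c_{aj}$ gives $\sum_{\langle w_i, t_j\rangle \in I_p'} c_{ij} \le \sum_{\langle w_i, t_j\rangle \in I_p} c_{ij} \le B_j$. Hence $I_p'$ is a valid assignment.

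Next I would compare scores. Using Eq.~(\ref{eq:flexible_budget}), the flexible budget of $t_j$ under $I_p'$ is $B_j - \sum_{\langle w_i, t_j\rangle \in I_p'} c_{ij} \ge B_j - \sum_{\langle w_i, t_j\rangle \in I_p} c_{ij}$, the flexible budget under $I_p$; no other task's contribution decreases, and if $t_j$ was completable under $I_p$ it remains completable under $I_p'$ (coverage only grew), so $t_j \in CT_p$ is preserved. Therefore $S_p(I_p') \ge S_p(I_p)$ by Eq.~(\ref{eq:assignment_score}). This shows that for every assignment using $\langle w_a, t_j\rangle$ there is an at-least-as-good assignment not using it, so discarding $\langle w_a, t_j\rangle$ from $S$ cannot worsen the optimum reachable — equivalently, the greedy search loses nothing by never considering it.

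The main subtlety — not really an obstacle, but the point that needs care — is the bookkeeping when $w_b$ is already assigned (to $t_j$ or to another task): I must argue that removing $w_b$'s old assignment does not destroy feasibility or score of the rest, which follows because removing an assignment pair only relaxes the budget constraints of the affected task and can only drop that task out of $CT_p$, and in the $\langle w_b, t_j\rangle$-to-itself case there is nothing to do. A clean way to sidestep this is to phrase the statement, as the lemma does, purely at the level of the candidate-pair set $S$ fed to the greedy algorithm: it suffices that no optimal-within-greedy choice ever needs $\langle w_a, t_j\rangle$ when $\langle w_b, t_j\rangle$ is available, which the substitution argument above establishes directly.
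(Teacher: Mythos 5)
Your substitution argument is sound only when $w_b$ is not already committed elsewhere, and that is exactly where the gap lies. If the assignment $I_p$ you start from gives $w_b$ to another task $t_k$, your modified set $I_p'$ must delete $\langle w_b, t_k\rangle$, and you assert that this ``only relaxes the budget constraints of the affected task and can only drop that task out of $CT_p$.'' But dropping $t_k$ out of $CT_p$ removes its entire flexible budget $B_k'$ from the score in Eq.~(\ref{eq:assignment_score}), so the inequality $S_p(I_p')\ge S_p(I_p)$ that your argument needs can fail badly. Worse, the global claim you are driving at --- that deleting $\langle w_a, t_j\rangle$ from $S$ never worsens the best reachable assignment --- is simply false: let $t_j$ and $t_k$ each require a single skill possessed by both $w_a$ and $w_b$, with $w_a$ valid only for $t_j$ (say $t_k$ is beyond $d_a$) and $w_b$ valid for both and dominating $w_a$ at $t_j$; then $\{\langle w_a,t_j\rangle,\langle w_b,t_k\rangle\}$ completes two tasks, whereas after pruning $\langle w_a,t_j\rangle$ at most one of the two tasks can be completed. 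So the exchange argument cannot be repaired into a proof of optimum preservation, and your closing suggestion to ``sidestep'' the issue at the level of the candidate set does not actually carry out the argument that is needed.

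The paper proves a weaker, purely local statement, which is the sense in which the pruning is ``safe'': it is a per-round rule inside the greedy algorithm, applied while both pairs are still candidates. With $\widetilde{Y_j}$ the skills of $t_j$ already covered, $X_a\subseteq X_b$ gives $|X_a\cap(Y_j-\widetilde{Y_j})|\le |X_b\cap(Y_j-\widetilde{Y_j})|$, and together with $c_{aj}\ge c_{bj}$, Eq.~(\ref{eq:score_increase}) yields $\Delta S_p(w_a,t_j)\le \Delta S_p(w_b,t_j)$; hence the dominated pair is never the greedy algorithm's best choice as long as $\langle w_b,t_j\rangle$ is available, and once $w_b$ is assigned and leaves $W_p$ the rule no longer prunes $w_a$ because the candidate pairs are re-examined each round. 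Your feasibility bookkeeping (coverage via $X_a\subseteq X_b$, budget via $c_{bj}\le c_{aj}$) mirrors the ingredients of that comparison, but the lemma rests on the marginal score-increase comparison, not on a whole-assignment exchange, and your write-up is missing exactly that step.
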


\begin{proof}
	Please refer to Appendix B.
\end{proof}

Lemma \ref{lemma:dominate_worker} indicates that if there exists a
better worker $w_b$ than worker $w_a$ to do task $t_j$ (in terms of
both the skill set and the traveling cost), then we can safely
filter out the assignment of worker $w_a$ to task $t_j$.

\begin{lemma} (Pruning High-Wage Workers) Let $\widetilde{c_{\cdot j}}$
	be the total traveling cost for those workers that have already been
	assigned to task $t_j$. If the traveling cost $c_{ij}$ of assigning a worker
	$w_i$ to task $t_j$ is greater than the remaining budget
	$(B_j-\widetilde{c_{\cdot j}})$ of task $t_j$, then we will not
	assign worker $w_i$ to task $t_j$. \label{lemma:expensive_worker}
\end{lemma}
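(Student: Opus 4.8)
The plan is to derive the claim directly from the budget constraint, condition~(2) of Definition~\ref{definition:MS_SC}, which requires $\sum_{\forall\langle w_i,t_j\rangle\in I_p}c_{ij}\le B_j$ for every task $t_j$. First I would fix the task $t_j$ and write $\widetilde{c_{\cdot j}}=\sum_{\langle w_k,t_j\rangle\in I_p}c_{kj}$ for the traveling cost already committed to $t_j$ by the workers the greedy procedure has placed on $t_j$ so far in the current round; the key structural fact is that the greedy algorithm only ever adds pairs to $I_p$, so none of these already-assigned workers will later be removed and $\widetilde{c_{\cdot j}}$ can only grow.

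Next, suppose toward a contradiction that the pair $\langle w_i,t_j\rangle$ with $c_{ij}>B_j-\widetilde{c_{\cdot j}}$ were nevertheless added to $I_p$. Then the total traveling cost on $t_j$ becomes $\widetilde{c_{\cdot j}}+c_{ij}>\widetilde{c_{\cdot j}}+\bigl(B_j-\widetilde{c_{\cdot j}}\bigr)=B_j$, which violates condition~(2) of Definition~\ref{definition:MS_SC}; consequently $t_j$ cannot belong to $CT_p$ and the assignment instance set is infeasible. Hence no valid assignment can contain $\langle w_i,t_j\rangle$ together with the workers already assigned to $t_j$, so this pair is discarded.

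Finally, to see that the pruning is truly safe rather than merely a feasibility test at the current moment, I would observe that traveling costs are nonnegative, so any further worker subsequently assigned to $t_j$ only increases $\widetilde{c_{\cdot j}}$ and makes the violation strictly worse, while no already-committed worker is ever unassigned within the round. Thus removing $\langle w_i,t_j\rangle$ from the candidate set $S$ loses no feasible (in particular, no optimal-within-the-greedy-search) assignment. The whole argument is essentially a one-line consequence of the budget constraint, so there is no real obstacle; the only point that needs care is to state precisely that $\widetilde{c_{\cdot j}}$ denotes cost that is already irrevocably committed in this round, so that the hypothesis $c_{ij}>B_j-\widetilde{c_{\cdot j}}$ genuinely forecloses feasibility rather than being a mere heuristic.
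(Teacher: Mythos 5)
Your proof is correct and follows essentially the same route as the paper's: derive the violation of the budget constraint $\sum_{\langle w_i,t_j\rangle\in I_p}c_{ij}\le B_j$ from $c_{ij}>B_j-\widetilde{c_{\cdot j}}$, then invoke the non-increasing remaining budget (costs only accumulate within the round) to conclude the pruning stays safe in later rounds. No gaps; the contradiction framing versus the paper's direct statement is an immaterial difference.
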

\begin{proof}
	Please refer to Appendix C.
\end{proof}

Intuitively, Lemma \ref{lemma:expensive_worker} shows that, if the
wage of a worker $w_i$ (including the traveling cost $c_{ij}$)
exceeds the maximum budget $B_j$ of task $t_j$ (i.e., $c_{ij} >
B_j-\widetilde{c_{\cdot j}}$), then we can safely prune the
worker-and-task assignment pair $\langle w_i, t_j\rangle$.

\noindent\textbf{The Task-Pruning Strategy.} Let $W(t_j)$ be a set of valid
workers that can be assigned to task $t_j$, and $\widetilde{W(t_j)}$
be a set of valid workers that have already been assigned to task
$t_j$. We give the lemma of pruning those tasks with insufficient
budgets below.

\begin{lemma} (Pruning Tasks with Insufficient Budgets) If an unassigned worker $w_i\in
	(W(t_j)-\widetilde{W(t_j)})$ has the highest value of $\frac{\Delta
		S_p}{|X_i \cap (Y_j - \widetilde{Y_j})|}$, and the traveling cost,
	$c_{ij}$, of worker $w_i$ exceeds the remaining budget
	$(B_j-\widetilde{c_{\cdot j}})$ of task $t_j$, then we can safely
	prune task $t_j$. \label{lemma:prune_insufficient_task}
	
\end{lemma}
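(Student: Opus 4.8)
The plan is to show that, under the stated condition, task $t_j$ can never be completed within its remaining budget by any set of additional workers; consequently no assignment involving $t_j$ can ever increase the true score $S_p$ (recall from Eq.~(\ref{eq:assignment_score}) that only completed tasks, those in $CT_p$, contribute to $S_p$), so $t_j$ and all valid pairs containing it may be removed from $S$ safely.

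First I would rewrite the criterion that singles out $w_i$. Writing $n_i=|X_i\cap(Y_j-\widetilde{Y_j})|$ and substituting Eq.~(\ref{eq:score_increase}) gives $\frac{\Delta S_p}{n_i}=\frac{B_j}{|Y_j|}-\frac{c_{ij}}{n_i}$, so choosing the worker with the largest value of $\frac{\Delta S_p}{|X_i\cap(Y_j-\widetilde{Y_j})|}$ is exactly choosing the unassigned valid worker with the smallest cost per newly-covered skill $\frac{c_{ij}}{n_i}$. Hence every $w_{i'}\in W(t_j)-\widetilde{W(t_j)}$ satisfies $\frac{c_{i'j}}{|X_{i'}\cap(Y_j-\widetilde{Y_j})|}\ge\frac{c_{ij}}{n_i}$.

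Next I would lower-bound the cost of any hypothetical completion. Suppose for contradiction that assigning some workers $w_{i_1},\dots,w_{i_s}\in W(t_j)-\widetilde{W(t_j)}$ completes $t_j$ with $\sum_k c_{i_kj}\le B_j-\widetilde{c_{\cdot j}}$. When $w_{i_k}$ is added it newly covers $\delta_k$ skills of $Y_j$; since the covered skill set only grows over time, $\delta_k\le|X_{i_k}\cap(Y_j-\widetilde{Y_j})|$ with the right-hand side evaluated at the current state, so $c_{i_kj}\ge\frac{c_{ij}}{n_i}\,\delta_k$. Because the added workers together cover all of $Y_j-\widetilde{Y_j}$, $\sum_k\delta_k=|Y_j-\widetilde{Y_j}|\ge n_i$ (using $n_i\le|Y_j-\widetilde{Y_j}|$). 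Summing, $\sum_k c_{i_kj}\ge\frac{c_{ij}}{n_i}\sum_k\delta_k\ge c_{ij}>B_j-\widetilde{c_{\cdot j}}$, contradicting the budget constraint; hence $t_j$ is uncompletable and pruning it is safe.

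The step I expect to be the main obstacle is the middle one: one has to be careful that ``newly-covered skills'' is counted relative to the evolving covered set and that this count is bounded by the quantity computed at the present timestamp, which is precisely what permits applying the single per-skill cost bound $c_{ij}/n_i$ uniformly to every future worker; the algebraic rewriting of the ratio and the final inequality chain are routine.
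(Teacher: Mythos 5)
Your proposal is correct and follows essentially the same route as the paper's proof: rewriting the maximality of $\frac{\Delta S_p}{|X_i \cap (Y_j - \widetilde{Y_j})|}$ as $w_i$ having the smallest cost per newly covered skill, bounding every other unassigned worker's cost by $\frac{c_{ij}}{n_i}$ times its contribution, and summing over any hypothetical covering set to get total cost at least $c_{ij} > B_j - \widetilde{c_{\cdot j}}$. The only cosmetic difference is that you count newly covered skills $\delta_k$ along the evolving assignment while the paper bounds $|X_i \cap (Y_j - \widetilde{Y_j})| \leq \sum_{w_r \in R} |X_r \cap (Y_j - \widetilde{Y_j})|$ via the union over the covering set, which amounts to the same inequality.
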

\begin{proof}
	Please refer to Appendix D.
\end{proof}

Intuitively, Lemma \ref{lemma:prune_insufficient_task} provides the
conditions of pruning tasks. That is, if any unassigned worker
subset of $(W(t_j)-\widetilde{W(t_j)})$ either cannot fully cover
the required skill set $Y_j$, or exceeds the remaining budget of
task $t_j$, then we can directly prune all assignment pairs that
contain task $t_j$.

To summarize, by utilizing Lemmas \ref{lemma:dominate_worker},
\ref{lemma:expensive_worker} and
\ref{lemma:prune_insufficient_task}, we do not have to check all
worker-and-task assignments iteratively in our greedy algorithm.
Instead, we can now apply our proposed three pruning methods, and
effectively filter out those false alarms of assignment pairs, which
can significantly reduce the number of times to compute the score
increases.

\subsection{The Greedy Algorithm}
\label{sub:greedy_algorithm}

According to the definition of the score increase $\Delta S_p$ (as
mentioned in Section \ref{subsec:score_increase}), we propose a
greedy algorithm, which iteratively assigns a worker to a spatial
task that can always achieve the highest score increase.

\begin{figure}[ht]\vspace{-3ex}
	\begin{center}
		\begin{tabular}{l}
			\parbox{3.1in}{
				\begin{scriptsize}
					\begin{tabbing}
						12\=12\=12\=12\=12\=12\=12\=12\=12\=12\=12\=\kill
						{\bf Procedure {\sf MS-SC\_Greedy}} \{ \\
						\> {\bf Input:} $n$ workers in $W_p$ and $m$ time-constrained spatial tasks in $T_p$\\
						\> {\bf Output:} a worker-and-task assignment instance set, $I_p$\\
						\> (1) \> \> $I_p = \emptyset$;\\
						\> (2) \> \> compute all valid worker-and-task pairs $\langle w_i, t_j \rangle$ from $W_p$ and $T_p$\\
						\> (3) \> \> while $W_p \neq \emptyset$ and $T_p \neq \emptyset$\\
						\> (4) \> \> \> $S_{cand} = \emptyset;$\\
						\> (5) \> \> \> for each task $t_j\in T_p$\\
						\> (6) \> \> \> \> for each worker $w_i$ in the valid pair $\langle w_i, t_j \rangle$\\
						\> (7) \> \> \> \> \> if we cannot prune dominated worker $w_i$ by Lemma \ref{lemma:dominate_worker}\\
						\> (8) \> \> \> \> \> \> if we cannot prune high-wage worker $w_i$ by Lemma \ref{lemma:expensive_worker}\\
						\> (9) \> \> \> \> \> \> \> add $\langle w_i, t_j \rangle$ to $S_{cand}$\\
						\> (10)\> \> \> \> if we cannot prune task $t_j$ w.r.t. workers in $S_{cand}$ by Lemma \ref{lemma:prune_insufficient_task}\\
						\> (11)\> \> \> \> \> for each pair $\langle w_i, t_j \rangle$ w.r.t. task $t_j$ in $S_{cand}$\\
						\> (12)\> \> \> \> \> \> compute the score increase, $\Delta S_p(w_i, t_j)$\\
						\> (13)\> \> \> \> else\\
						\> (14)\> \> \> \> \> $T_p=T_p - \{t_j\}$\\
						\> (15)\> \> \> obtain a pair, $\langle w_r, t_j \rangle \in S_{cand}$, with the highest score increase, \\
						\> \> \> \> \> \> $\Delta S_p(w_r, t_j)$, and add this pair to $I_p$\\
						\> (16)\> \> \> $W_p=W_p - \{w_r\}$\\
						\> (17)\> \> return $I_p$\\
						\}
					\end{tabbing}
				\end{scriptsize}
			}
		\end{tabular}
	\end{center}\vspace{-5ex}
	\caption{\small The MS-SC Greedy Algorithm.}\vspace{-4ex}
	\label{alg:greedy}
\end{figure}

Figure \ref{alg:greedy} shows the pseudo code of our MS-SC greedy
algorithm, namely {\sf MS-SC\_Greedy}, which obtains one
worker-and-task pair with the highest score increase each
time, and returns a task assignment instance set $I_p$ with
high score.

Initially, we set $I_p$ to be empty, since no workers are assigned
to any tasks (line 1). Next, we find out all valid worker-and-task
pairs $\langle w_i, t_j \rangle$ in the crowdsourcing system at
timestamp $p$ (line 2). Here, the validity of pair $\langle w_i, t_j
\rangle$ satisfies 4 conditions: (1) the distance between the
current location, $l_i(p)$, of worker $w_i$ and the location, $l_j$
of task $t_j$ is less than the maximum moving distance, $d_i$ of
worker $w_i$, that is, $dist(l_i(p), l_j) \leq d_i$; (2) worker
$w_i$ can arrive at the location, $l_j$, of task $t_j$ before the
arrival deadline $e_j$; (3) worker $w_i$ have skills that task $t_j$
requires; and (4) the traveling cost, $c_{ij}$, of worker $w_i$
should not exceed the budget $B_j$ of task $t_j$.

Then, for each round, we would select one valid worker-and-task
assignment pair with the highest score increase, and add it to set
$I_p$ (lines 3-16). Specifically, in each round, we check every task
$t_j$ that is involved in valid pairs $\langle w_i, t_j \rangle$,
and then prune those dominated and high-wage workers $w_i$, via
Lemmas \ref{lemma:dominate_worker} and \ref{lemma:expensive_worker},
respectively (lines 7-8). If worker $w_i$ cannot be pruned by both
pruning methods, then we add it to a candidate set $S_{cand}$ for
further checking (line 9). After obtaining all workers that match
with task $t_j$, we apply Lemma \ref{lemma:prune_insufficient_task}
to filter out task $t_j$ (if workers cannot be successfully assigned
to $t_j$). If task $t_j$ cannot be pruned, we will calculate the
score increase, $\Delta S_p(w_i, t_j)$, for each pair $\langle w_i,
t_j \rangle$ in $S_{cand}$; otherwise, we remove task $t_j$ from
task set $T_p$ (lines 10-14).

After we scan all tasks in $T_p$, we can retrieve one
worker-and-task assignment pair, $\langle w_r, t_j \rangle$, from
the candidate set $S_{cand}$, which has the highest score increase,
and insert this pair to $I_p$ (line 15). Since worker $w_r$ has been
assigned, we remove it from the worker set $W_p$ (line 16). The
process above repeats, until all workers have been assigned (i.e.,
$W_p = \emptyset$) or there are no tasks left (i.e.,
$T_p=\emptyset$) (line 3).

Figure \ref{subfig:validpairs} illustrates an example of valid
pairs, where $n$ available workers and $m$ spatial tasks are denoted
by triangular and circular nodes, respectively, and valid
worker-and-task pairs are represented by dashed lines. Figure
\ref{subfig:assignment} depicts the result of one assignment with
high score, where the bold lines indicate assignment pairs in $I_p$.

\begin{figure}[ht]\vspace{-2ex}\centering
	\subfigure[][{\scriptsize Valid Pairs}]{
		\scalebox{0.47}[0.47]{\includegraphics{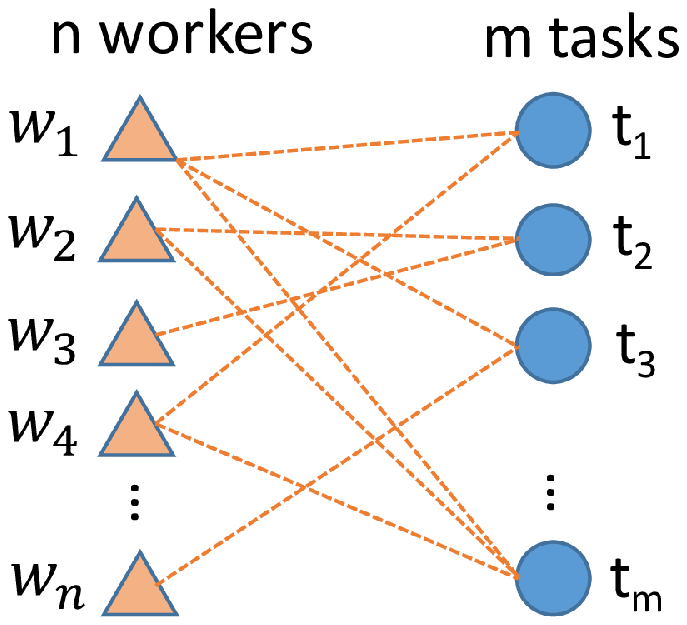}}
		\label{subfig:validpairs}}\hspace{8ex}
	\subfigure[][{\scriptsize Assignment Instance}]{
		\scalebox{0.45}[0.45]{\includegraphics{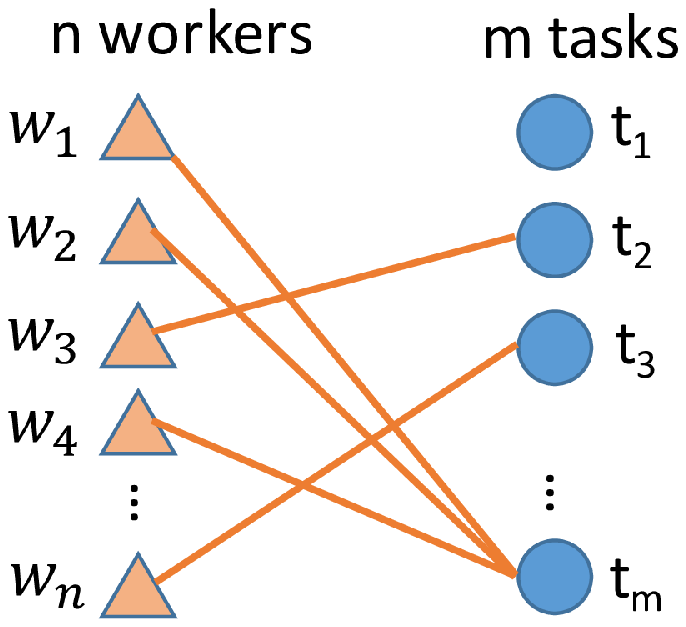}}
		\label{subfig:assignment}}\vspace{-2ex}
	\caption{\small Illustration of the Worker-and-Task
		Assignment.}\vspace{-4ex}
	\label{fig:assignment}
\end{figure}

\vspace{0.5ex}\noindent {\bf The Time Complexity.} We next present
the time complexity of the greedy algorithm, {\sf MS-SC\_Greedy} (in
Figure \ref{alg:greedy}). Specifically, the time cost of computing
valid worker-and-task assignment pairs (line 2) is given by $O(m
\cdot n)$ in the worst case, where any of $n$ workers can be
assigned to any of $m$ tasks (i.e., $m\cdot n$ valid worker-and-task
pairs). Then, for each round (lines 3-16), we apply pruning methods
to $m\cdot n$ pairs, and select the pair with the highest score
increase. In the worst case, pairs cannot be pruned, and thus the
time complexity of computing score increases for these pairs is
given by $O(m \cdot n)$. Moreover, since each of $n$ workers can
only be assigned to one spatial task, the number of iterations is at
most $n$ times. Therefore, the total time complexity of our greedy
algorithm can be given by $O(m \cdot n^2)$.

\section{The $g$-Divide-and-Conquer Approach}
\label{sec:g_D&C}

Although the greedy algorithm incrementally finds one
worker-and-task assignment (with the highest score increase) at a
time, it may incur the problem of only achieving local optimality.
Therefore, in this section, we propose an efficient
\textit{$g$-divide-and-conquer algorithm} ($g$-D\&C), which first
divides the entire MS-SC problem into $g$ subproblems, such that
each subproblem involves a smaller subgroup of $\lceil m/g\rceil$
spatial tasks, and then conquers the subproblems recursively (until
the final group size becomes 1). Since different numbers, $g$, of
the divided subproblems may incur different time costs, in this
paper, we will propose a novel cost-model-based method to estimate
the best $g$ value to divide the problem.

Specifically, for each subproblem/subgroup (containing $\lceil
m/g\rceil$ tasks), we will tackle the worker-and-task assignment
problem via recursion (note: the base case with the group size equal
to 1 can be solved by the greedy algorithm
\cite{vazirani2013approximation}, which has an approximation ratio
of $\ln(N)$, where $N$ is the total number of skills). During the
recursive process, we will combine/merge assignment results from
subgroups, and obtain the assignment strategy for the merged groups,
by resolving the assignment conflict among subgroups. Finally, we
can return the task assignment instance set $I_p$, with respect to
the entire worker and tasks sets.

In the sequel, we first discuss how to decompose the MS-SC problem
into subproblems in Section \ref{subsec:MS_SC_decomposition}. Then,
we will illustrate our $g$-divide-and-conquer approach in Section
\ref{subsec:D&C}, which utilizes the decomposition and merge (as
will be discussed in Section \ref{subsec:merge}) algorithms.
Finally, we will provide a cost model in Section
\ref{subsec:DC_cost_model} to determine the best number $g$ of
subproblems during the $g$-D\&C process.

\begin{figure}[ht]\vspace{-2ex}
	\centering
	\subfigure[][{\scriptsize Original MS-SC Problem}]{
		\scalebox{1}[1]{\includegraphics{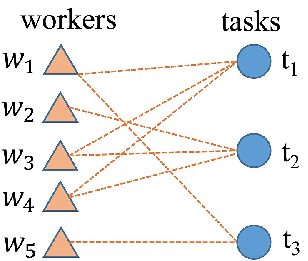}}
		\label{subfig:beforedecomposing}}\hspace{4ex}
	\subfigure[][{\scriptsize Decomposed Subproblems}]{
		\scalebox{0.9}[0.9]{\includegraphics{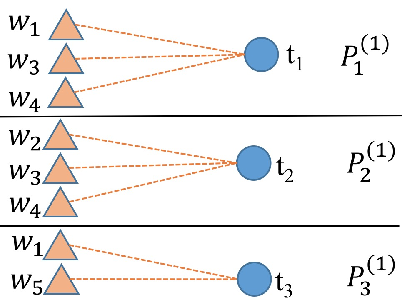}}
		\label{subfig:decomposed}}\vspace{-2ex}
	\caption{\small Illustration of Decomposing the MS-SC Problem.}\vspace{-5ex}
	\label{fig:decomposing}
\end{figure}

\subsection{MS-SC Problem Decompositions}
\label{subsec:MS_SC_decomposition}

In this subsection, we discuss how to decompose a MS-SC problem into
subproblems. In order to illustrate the decomposition, we first
convert our original MS-SC problem into a representation of a
bipartite graph.

\vspace{0.5ex}\noindent {\bf Bipartite Graph Representation of the
	MS-SC Problem.} Specifically, given a worker set $W_p$ and a spatial
task set $T_p$, we denote each worker/task (i.e., $w_i$ or $t_j$) as
a vertex in the bipartite graph, where worker and task vertices have
distinct vertex types. There exists an edge between a worker vertex
$w_i$ and a task vertex $t_j$, if and only if worker $w_i$ can reach
spatial task $t_j$ under the constraints of skills (i.e., $X_i \cap
Y_j \ne \emptyset$), time (i.e., arrival time is before deadline
$e_j$ of arrival), distance (i.e., the traveling distance is below $d_i$), and
budget (i.e., the traveling cost is below task budget $B_j$). We say
that the worker-and-task assignment pair $\langle w_i, t_j\rangle$
is \textit{valid}, if there is an edge between vertices $w_i$ and
$t_j$ in the graph.

As an example in Figure \ref{subfig:beforedecomposing}, we have a
worker set $W_p = \{w_i | 1\leq i\leq 5\}$, and a spatial task set
$T_p = \{t_j | 1\leq j\leq 3\}$, which are denoted by two types of
vertices (i.e., represented by triangle and circle shapes,
respectively) in a bipartite graph. Any edge connects two types of
vertices $w_i$ and $t_j$, if worker $w_i$ can reach the location of
task $t_j$ and do tasks with the required skills from $t_j$. For
example, there exists an edge between $w_1$ and $t_1$, which
indicates that worker $w_1$ can move to the location of $t_1$ before
the arrival deadline $e_1$, with the traveling distance under $d_1$,
with the traveling cost below budget $B_1$, and moreover with some
skill(s) in the required skill set $Y_1$ of task $t_1$.

Note that, one or multiple worker vertices (e.g., $w_1$, $w_3$, and
$w_4$) may be connected to the same task vertex (e.g., $t_1$).
Furthermore, multiple task vertices, say $t_1$ and $t_2$, may also
share some conflicting workers (e.g., $w_3$ or $w_4$), where the
conflicting worker $w_3$ (or $w_4$) can be assigned to either task
$t_1$ or task $t_2$ mutual exclusively.

\begin{figure}[ht]
	\begin{center}
		\begin{tabular}{l}
			\parbox{3.1in}{
				\begin{scriptsize} \vspace{-4ex}
					\begin{tabbing}
						12\=12\=12\=12\=12\=12\=12\=12\=12\=12\=12\=\kill
						{\bf Procedure {\sf MS-SC\_Decomposition}} \{ \\
						\> {\bf Input:} $n$ workers in $W_p$, $m$ time-constrained spatial tasks in $T_p$, and the number\\
						\> \> \> \> of groups $g$\\
						\> {\bf Output:} decomposed MS-SC subproblems, $P_s$ ($1\leq s\leq g$)\\
						\> (1) \> \> for $s$ = 1 to $g$\\
						\> (2) \> \> \> $P_s = \emptyset$\\
						\> (3) \> \> compute all valid worker-and-task pairs $\langle w_i, t_j\rangle$ from $W_p$ and $T_p$, \\
						\> \> \> \> \> and obtain a bipartite graph $G$\\
						\> (4) \> \> for $s = 1$ to $g$\\
						\> (5) \> \> \> let set $T_p^{(j)}$ contain the next anchor task $t_j$ and its top-$(\lceil m/g \rceil -1)$\\
						\>  \> \>  \> \> \> nearest tasks // the task, $t_j$, whose longitude is the smallest\\
						\> (6) \> \> \> for each task vertex $t_j \in T_p^{(j)}$ in graph $G$\\
						\> (7) \> \> \> \> obtain all worker vertices $w_i$ that connect with task vertex $t_j$\\
						\> (8) \> \> \> \> add all pairs $\langle w_i, t_j\rangle$ to $P_s$\\
						\> (9) \> \>  return $P_s$ (for $1\leq s\leq g$)\\
						\}
					\end{tabbing}
				\end{scriptsize}
			}
		\end{tabular}
	\end{center}
	\vspace{-5ex}
	\caption{\small The MS-SC Problem Decomposition Algorithm.}
	\vspace{-4ex}
	\label{alg:decomposing}
\end{figure}

\vspace{0.5ex}\noindent {\bf Decomposing the MS-SC Problem.} Next,
we will illustrate how to decompose the MS-SC problem, with respect
to task vertices in the bipartite graph. Figure
\ref{fig:decomposing} shows an example of decomposing the MS-SC
problem (as shown in Figure \ref{subfig:beforedecomposing}) into 3
subproblems (as depicted in Figure \ref{subfig:decomposed}), where
each subproblem contains a subgroup of one single spatial task
(i.e., group size = 1), associated with its connected worker
vertices. For example, the first subgroup in Figure
\ref{subfig:decomposed}) contains task vertex $t_1$, as well as its
connecting worker vertices $w_1$, $w_3$, and $w_4$. Different task
vertices may have conflicting workers, for example, tasks $t_1$ and
$t_2$ share the same (conflicting) worker vertices $w_3$ and $w_4$.

In a general case, given $n$ workers and $m$ spatial tasks, we
partition the bipartite graph into $g$ subgroups,
each of which contains $\lceil m/g \rceil$ spatial tasks, as well as their
connecting workers. Figure \ref{alg:decomposing} presents the pseudo
code of our MS-SC problem decomposition algorithm, namely {\sf
	MS-SC\_Decomposition}, which returns $g$ MS-SC
subproblems (each corresponding to a subgroup with $\lceil m/g \rceil$ tasks),
$P_s$, after decomposing the original MS-SC problem.

Specifically, we first initialize $g$ empty
subproblems, $P_s$, where $1\leq s\leq g$ (lines
1-2). Then, we find out all valid worker-and-task pairs $\langle
w_i, t_j\rangle$ in the crowdsourcing system at timestamp $p$, which
can can form a bipartite graph $G$, where valid pairs satisfy the
constraints of skills, times, distances, and budgets (line 3).

Next, we want to obtain one subproblem $P_s$ at a time (lines 4-8).
In particular, for each round, we retrieve an anchor task $t_j$ and
its top-$(\lceil m/g \rceil -1)$ nearest tasks, which form a task
set $T_p^{(j)}$ of size $\lceil m/g \rceil$ (line 5). Here, we
choose anchor tasks with a sweeping style, that is, we always choose
the task whose longitude is smallest (in the case where multiple
tasks have the same longitude, we choose the one with smallest
latitude). Then, for each task $t_j\in T_p^{(j)}$, we obtain its
corresponding vertex in $G$ and all of its connecting worker
vertices $w_i$, and add pairs $\langle w_i, t_j\rangle$ to
subproblem $P_s$ (lines 6-8). Finally, we return all the $g$
decomposed subproblems $P_s$.

\begin{figure}[ht]
	\begin{center}
		\begin{tabular}{l}
			\parbox{3.1in}{
				\begin{scriptsize} \vspace{-4ex}
					\begin{tabbing}
						12\=12\=12\=12\=12\=12\=12\=12\=12\=12\=12\=\kill
						{\bf Procedure {\sf MS-SC\_$g$D\&C}} \{ \\
						\> {\bf Input:} $n$ workers in $W_p$, and $m$ time-constrained spatial tasks in $T_p$\\
						\> {\bf Output:} a worker-and-task assignment instance set, $I_p$\\
						\> (1) \> \> $I_p = \emptyset$\\
						\> (2) \> \> estimate the best number of groups, $g$, for $W_p$ and $T_p$\\
						\> (3) \> \> invoke {\sf MS-SC\_Decomposition}$(W_p, T_p, g)$, and obtain subproblems $P_s$\\
						\> (4) \> \> for $s=1$ to $g$\\
						\> (5) \> \> \> if the number of tasks in subproblem $P_s$ (group size) is greater than 1\\
						\> (6) \> \> \> \> $I_p^{(s)}$ = {\sf MS-SC\_$g$D\&C}($W_p (P_s)$, $T_p(P_s)$)\\
						\> (7) \> \> \> else\\
						\> (8) \> \> \> \> invoke classical greedy set cover algorithm to solve subproblem $P_s$,\\
						\> \> \> \> \> \> \>  and obtain assignment results $I_p^{(s)}$\\
						\> (9) \> \> for $i=1$ to $g$\\
						\> (10) \> \>  \> find the next subproblem, $P_s$\\
						\> (11) \> \>  \> $I_p$ = {\sf MS-SC\_Conflict\_Reconcile} ($I_p$, $I_p^{(s)}$) \\
						\> (12) \> \>  return $I_p$\\
						\}
					\end{tabbing}
				\end{scriptsize}
			}
		\end{tabular}
	\end{center}\vspace{-5ex}
	\caption{\small The $g$-Divide-and-Conquer Algorithm.}
	\vspace{-5ex}
	\label{alg:incrementalFrame}
\end{figure}

\subsection{The $g$-D\&C Algorithm}
\label{subsec:D&C}

In this subsection, we propose an efficient
\textit{$g$-divide-and-conquer} ($g$-D\&C) algorithm, namely {\sf
	MS-SC\_$g$D\&C}, which recursively partitions the original MS-SC
problem into subproblems, solves each subproblem (via recursion),
and merges assignment results of subproblems by resolving the
conflicts.

Specifically, in Algorithm {\sf MS-SC\_$g$D\&C}, we first estimate
the best number of groups, $g$, to partition, with respect to $W_p$ and
$T_p$, which is based on the cost model proposed later in Section
\ref{subsec:DC_cost_model} (line 2). Then, we will call the {\sf
	MS-SC\_Decomposition} algorithm (as mentioned in Figure
\ref{alg:decomposing}) to obtain subproblems $P_s$ (line 3). For
each subproblem $P_s$, if $P_s$ involves more than 1 task, then we
can recursively call Algorithm {\sf MS-SC\_$g$D\&C} itself, by
further dividing the subproblem $P_s$ (lines 5-6). Otherwise, when
subproblem $P_s$ contains only one single task, we apply the greedy
algorithm of the classical set cover problem for task set $T_p(P_s)$ and worker set
$W_p(P_s)$ (lines 7-8).

After that, we can obtain an assignment instance set $I_p^{(s)}$ for
each subproblem $P_s$, and merge them into one single
worker-and-task assignment instance set $I_p$, by reconciling the
conflict (lines 9-11). In particular, $I_p$ is initially empty (line
1), and each time merged with an assignment set $I_p^{(s)}$ from
subproblem $P_s$ (lines 10-11). Due to the confliction among
subproblems, we call function {\sf MS-SC\_Conflict\_Reconcile}
$(\cdot, \cdot)$ (discussed later in Section \ref{subsec:merge}) to
resolve the confliction issue during the merging process. Finally,
we can return the merged assignment instance set $I_p$ (line 12).
\vspace{-1ex}

\subsection{Merging Conflict Reconciliation}
\label{subsec:merge}

In this subsection, we introduce the merging conflict reconciliation
procedure, which resolves the conflicts while merging assignment
results of subproblems (i.e., line 11 of Procedure {\sf
	MS-SC\_$g$D\&C}). Assume that $I_p$ is the current assignment
instance set we have merged so far. Given a new subproblem $P_s$
with assignment set $I_p^{(s)}$, Figure \ref{alg:conflict_reconcile}
shows the merging algorithm, namely {\sf MS-SC\_Conflict\_Reconcile},
which combines two assignment sets $I_p$ and $I_p^{(s)}$ by
resolving conflicts.

\begin{figure}[ht]
	\begin{center}
		\begin{tabular}{l}
			\parbox{3.1in}{
				\begin{scriptsize} \vspace{-2ex}
					\begin{tabbing}
						12\=12\=12\=12\=12\=12\=12\=12\=12\=12\=12\=\kill
						{\bf Procedure {\sf MS-SC\_Conflict\_Reconcile}} \{ \\
						\> {\bf Input:} the current assignment instance set, $I_p$, of subproblem $P$ we have merged, \\
						\> \> \> \> and the assignment instance set, $I_p^{(s)}$, of subproblem $P_s$\\
						\> {\bf Output:} a merged worker-and-task assignment instance set, $I_p$\\
						\> (1) \> \> let $W_c$ be a set of all conflicting workers between $I_p$ and $I_p^{(s)}$\\
						\> (2) \> \> while $W_c \neq \emptyset$\\
						\> (3) \> \> \> choose a worker $w_i\in W_c$ with the highest traveling cost in $I_p^{(s)}$\\
						\> (4) \> \> \> if we substitute $w_i$ with $w_i'$ in $P_s$ having the highest score $S_p^{(s)}$\\
						\> (5) \> \> \> \> compute the reduction of the assignment score, $\Delta S_p^{(s)}$\\
						\> (6) \> \> \> if we substitute $w_i$ with $w_i''$ in $P$ having the highest score $S_p$\\
						\> (7) \> \> \> \> compute the reduction of the assignment score, $\Delta S_p$\\
						\> (8) \> \> \> if $\Delta S_p > \Delta S_p^{(s)}$\\
						\> (9) \> \> \> \> substitute worker $w_i$ with $w_i'$ in $I_p^{(s)}$\\
						\> (10)\> \> \> else\\
						\> (11)\> \> \> \> substitute worker $w_i$ with $w_i''$ in $I_p$\\
						\> (12)\> \> \> $W_c = W_c - \{w_i\}$\\
						\> (13)\> \> $I_p = I_p \cup I_p^{(s)}$\\
						\> (14) \> \>  return $I_p$\\
						\}
					\end{tabbing}
				\end{scriptsize}
			}
		\end{tabular}
	\end{center}\vspace{-5ex}
	\caption{\small The Merging Conflict Reconciliation Algorithm.}
	\vspace{-5ex}
	\label{alg:conflict_reconcile}
\end{figure}

In particular, two distinct tasks from two subproblems may be
assigned with the same (conflicting) worker $w_i$. Since each worker
can only be assigned to one spatial task at a time, we thus need to
avoid such a scenario when merging assignment instance sets of two
subproblems (e.g., $I_p$ and $I_p^{(s)}$). Our algorithm in Figure
\ref{alg:conflict_reconcile} first obtain a set, $W_c$, of all
conflicting workers between $I_p$ and $I_p^{(s)}$ (line 1). Then,
each time we greedily solve the conflicts for workers $w_i$ in an
non-decreasing order of the traveling cost (i.e., $c_{ij}$) in
$I^{(s)}_p$ (line 3). Next, in order to resolve the conflicts, we
try to replace worker $w_i$ with another worker $w_i'$ (or $w_i''$)
in $P_{s}$ (or $P$) with the highest score $S_p^{(s)}$ (or
$S_p$), and compute possible reduction of the assignment score,
$\Delta S_p^{(s)}$ (or $\Delta S_p$) (lines 4-7). Note that, here we replace worker $w_i$ with other available workers. If no other workers are available for replacing $w_i$, we may need to sacrifice task $t_j$ that worker $w_i$ is assigned to. For example, when we cannot find another worker to replace
$w_i$ in $P_s$, the substitute of $w_i$ will be set as an empty worker, which means the
assigned task $t_j$ for $w_i$ in $I_p^{(s)}$ will be sacrificed
and $\Delta S_p^{(s)}=B'_j$ (as calculated in Equation \ref{eq:flexible_budget}). In the case that
$\Delta S_p > \Delta S_p^{(s)}$, we substitute worker $w_i$ with
$w_i'$ in $I_p^{(s)}$ (since the replacement of $w_i$ in subproblem
$S_p^{(s)}$ leads to lower score reduction); otherwise, we resolve
conflicts by replacing $w_i$ with $w_i''$ in $I_p$ (lines 8-12). After
resolving all conflicts, we merge assignment instance set $I_p$ with
$I_p^{(s)}$ (line 13), and return the merged result $I_p$.

\subsection{Cost-Model-Based Estimation of the Best Number of Groups}
\label{subsec:DC_cost_model}

In this subsection, we discuss how to estimate the best number of
groups, $g$, such that the total cost of solving the MS-SC problem
in $g$-divide-and-conquer approach is minimized. Specifically, the
cost of the $g$-divide-and-conquer approach consists of 3 parts: the
cost, $F_D$, of decomposing subproblems, that, $F_C$, of conquering
subproblems recursively, and that, $F_M$, of merging subproblems by
resolving conflicts.

Without loss of generality, as illustrated in Figure
\ref{fig:gDC_cost_model}, during the $g$-divide-and-conquer process,
on level $k$, we recursively divide the original MS-SC problem into
$g^k$ subproblems, $P_1^{(k)}$, $P_2^{(k)}$, ..., and
$P_{g^k}^{(k)}$, where each subproblem involves $m/g^k$ spatial
tasks.
\begin{figure}[ht]\vspace{-3ex}
	\centering
	\scalebox{0.4}[0.4]{\includegraphics{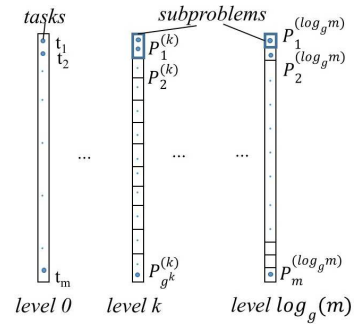}}\vspace{-1ex}
	\caption{\small Illustration of the Cost Model Estimation.}\vspace{-4ex}
	\label{fig:gDC_cost_model}
\end{figure}

\noindent {\bf The Cost, $F_D$, of Decomposing Subproblems.} From
Algorithm {\sf MS-SC\_Decomposition} (in Figure
\ref{alg:decomposing}), we first need to retrieve all valid
worker-and-task assignment pairs (line 3), whose cost is $O(m\cdot
n)$. Then, we will divide each problem into $g$ subproblems, whose
cost is given by $O(m\cdot g+m)$ on each level. For level $k$, we
have $m/g^k$ tasks in each subproblem $P^{(k)}_i$. We will further
divide it into $g$ more subproblems, $P^{(k+1)}_j$, and each one
will have $m/g^{k+1}$ tasks. To obtain $m/g^{k+1}$ tasks in each
subproblem $P^{(k+1)}_j$, we first need to find the anchor task, which
needs $O(m/g^{k})$ cost, and further retrieve the rest tasks,
which needs $O(m/g^{k+1})$ cost. Moreover, since we will have $g^{k+1}$
subproblems on level $k+1$, the cost of decomposing tasks on level $k$
is given by $O(m\cdot g+m)$.

Since there are totally $log_g(m)$ levels, the total cost of
decomposing the MS-SC problem is given by:
{\small
	$$F_D=  m\cdot n+(m \cdot g+m)\cdot log_g(m).$$
	\vspace{-2ex}}

\noindent {\bf The Cost, $F_C$, of Recursively Conquering
	Subproblems.} Let function $F_C(x)$ be the total cost of conquering
a subproblem which contains $x$ spatial tasks. Then, we have the
following recursive function: 
{\small\vspace{-3ex}
	$$F_C(m) = g\cdot F_C(\left\lceil \frac{m}{g}\right\rceil).$$
	\vspace{-2ex}}

Assume that $deg_t$ is the average degree of task nodes in the
bipartite group $G$. Then, the base case of function $F_C(x)$ is the
case that $x=1$, in which we apply the greedy algorithm on just one
single task and $deg_t$ workers. Thus, by the analysis of the time
complexity in Section \ref{sub:greedy_algorithm}, we have:
{\small\vspace{-1ex}
	$$F_C(1) = cost_{greedy}(deg_t, 1) = deg_t^2.$$
	\vspace{-3ex}
}

From the recursive function $F_C(x)$ and its base case, we can
obtain the total cost of the recursive invocation on levels from 1
to $log_g(m)$ below:
{\small\vspace{-1.5ex}
	$$\sum_{k = 1}^{log_g(m)}F_c(m/g^k) =
	\frac{1-m}{1-g}deg_t^2$$
	
}

\noindent {\bf The Cost, $F_M$, of Merging Subproblems.} Next, we
provide the cost, $F_M$, of merging subproblems by resolving
conflicts. Assume that we have $n_s$ workers who could be assigned
to more than one spatial task (i.e., conflicting workers). Moreover,
each worker node has an average degree $deg_w$ in the bipartite
graph. During the subproblem merging processing, we can estimate the
worst-case cost of resolving conflicts for these $n_s$ workers, and
we may resolve conflicts for each worker at most $(deg_w-1)$ times.

Therefore, the worst-case cost of merging subproblems can be given
by: \vspace{-1ex}$$F_M = n_s\cdot (deg_w-1).$$

\noindent {\bf The Total Cost of the $g$-D\&C Approach.} The total
cost, $cost_{gD\&C}$, of the $g$-D\&C algorithm can be given by
summing up three costs, $F_D$, $F_C$, and $F_M$. That is, we have

{\small
	\vspace{-3ex}
	\begin{eqnarray}
		cost_{gD\&C} &=& F_D + \sum_{k = 1}^{log_g(m)}F_c(m/g^k) + F_M \label{eq:D&C_cost}\\
		&=& (mg+m)\log_g (m) + \frac{1-m}{1-g}deg_t^2 + n_s (deg_w -
		1)\notag \vspace{-3ex}
	\end{eqnarray}
	\vspace{-3ex}
}

We take the derivation of $cost_{gD\&C}$ (given in
Eq.~(\ref{eq:D&C_cost})) over $g$, and let it be 0. In particular,
we have:

{\small
	\vspace{-3ex}
	\begin{eqnarray}
		&&\frac{\partial cost_{gD\&C}}{\partial g} \notag\\
		&=&\frac{m\log(m)(g\log(g) - g - 1)}{g\log(2g)} +
		\frac{1-m}{(1-g)^2}deg_t^2 = 0
	\end{eqnarray}
	\vspace{-1ex}
}

We notice that when $g=2$, $\frac{\partial cost_{gD\&C}}{\partial g}$ is much
smaller than 0 but increases quickly when $g$ grows. In addition, $g$ can only
be an integer. Then we can try the integers, (2, 3, 4... ), until
$\frac{\partial cost_{gD\&C}}{\partial g}$ is above 0.

\section{The Cost-Model-Based Adaptive Algorithm}
\label{sec:adpative}

In this section, we introduce a \textit{cost-model-based adaptive
	approach}, which adaptively decides the strategies to apply our
proposed greedy and $g$-divide-and-conquer ($g$-D\&C) algorithms.
The basic idea is as follows. Unlike the $g$-D\&C algorithm, we do
not divide the MS-SC problem into subproblems recursively until task
group sizes become $1$ (which can be solved by the greedy algorithm
of set cover problems). Instead, based on our proposed cost model,
we will partition the problem into subproblems, and adaptively
determine when to stop in some partitioning round (i.e., the total
cost of solving subproblems with the greedy algorithm is smaller
than that of continuing dividing subproblems).

\begin{figure}[ht]
	\begin{center}
		\begin{tabular}{l}
			\parbox{3.1in}{
				\begin{scriptsize} \vspace{-4ex}
					\begin{tabbing}
						12\=12\=12\=12\=12\=12\=12\=12\=12\=12\=12\=\kill
						{\bf Procedure {\sf MS-SC\_Adaptive}} \{ \\
						\> {\bf Input:} $n$ workers in $W_p$, and $m$ time-constrained spatial tasks in $T_p$\\
						\> {\bf Output:} a worker-and-task assignment instance set, $I_p$\\
						\> (1) \> \> $I_p = \emptyset$\\
						\> (2) \> \> estimate the cost, $cost_{greedy}$, of the greedy algorithm\\
						\> (3) \> \> estimate the best number of groups, $g$, and obtain the cost, $cost_{gdc}$, \\
						\> \> \>  of the $g$-D\&C approach\\
						\> (4) \> \> if $cost_{greedy} < cost_{gdc}$\\
						\> (5) \> \> \> $I_p$ = {\sf MS-SC\_Greedy}($W_p$, $T_p$) \\
						\> (6) \> \> else \qquad \textit{// $g$-D\&C algorithm}\\
						\> (7) \> \> \>  invoke {\sf MS-SC\_Decomposition}$(W_p, T_p, g)$, and obtain subproblems $P_s$\\
						\> (8) \> \> \> for each subproblem, $P_s$, \\
						\> (9) \> \> \> \> $I_p^{(s)}$ = {\sf MS-SC\_Adaptive}($W_p(P_s)$, $T_p(P_s)$)\\
						\> (10)\> \> \> for $i=1$ to $g$\\
						\> (11)\> \>  \> \> find the next subproblem, $P_s$\\
						\> (12)\> \> \> \> $I_p$ = {\sf MS-SC\_Conflict\_Reconcile} ($I_p$, $I_p^{(s)}$) \\
						\> (13)\> \>  return $I_p$\\
						\}
					\end{tabbing}
				\end{scriptsize}
			}
		\end{tabular}
	\end{center}\vspace{-5ex}
	\caption{\small The MS-SC Cost-Model-Based Adaptive Algorithm.}
	\label{alg:hybrid}\vspace{-5ex}
\end{figure}

\subsection{Algorithm of the Cost-Model-Based Adaptive Approach}

Figure \ref{alg:hybrid} shows the pseudo-code of our
cost-model-based adaptive algorithm, namely {\sf MS-SC\_Adaptive}.
Initially, we estimate the cost, $cost_{greedy}$, of applying the
greedy approach over worker/task sets $W_p$ and $T_p$ (line 2).
Similarly, we also estimate the best group size, $g$, and compute
the cost, $cost_{gd\&c}$ of using the $g$-D\&C algorithm (line 3).
If it holds that the cost of the greedy algorithm is smaller than
that of the $g$-D\&C approach (i.e., $cost_{greedy} < cost_{gdc}$),
then we will use the greedy algorithm by invoking function {\sf
	MS-SC\_Greedy}($\cdot, \cdot$) (due to its lower cost; lines 4-5).
Otherwise, we will apply the $g$-D\&C algorithm, and further
partition the problem into subproblems $P_s$ (lines 6-7). Then, for
each subproblem $P_s$, we recursively call the cost-model-based
adaptive algorithm, and retrieve the assignment instance set
$I_p^{(s)}$ (line 9). After that, we merge all the assignment
instance sets from subproblems by invoking function {\sf
	MS-SC\_Conflict\_Reconcile}($\cdot, \cdot$) (lines 10-12). Finally,
we return the worker-and-task assignment instance set $I_p$ (line
13).

\subsection{Cost Model for the Stoping Condition}

Next, we discuss how to determine the stopping level, when using our
cost-model-based adaptive approach to recursively solve the MS-SC
problem. Intuitively, at the current level $k$, we need to estimate
the costs, $cost_{greedy}$ and $cost_{gdc}$, of using greedy and
$g$-D\&C algorithms, respectively, to solve the remaining MS-SC
problem. If the greedy algorithm has lower cost, then we will stop
the divide-and-conquer, and apply the greedy algorithm for each
subproblems.

In the sequel, we discuss how to obtain the formulae of costs
$cost_{greedy}$ and $cost_{gdc}$.

\underline{\it The Cost, $cost_{greedy}$, of the Greedy Algorithm.}
Given a set, $W_p$, of $n$ workers and a set, $T_p$, of $m$ tasks,
the cost, $cost_{greedy}$, of our greedy approach (as given in
Figure \ref{alg:greedy}) has been discussed in Section
\ref{sub:greedy_algorithm}.

In the bipartite graph of valid worker-and-task pairs, denote
the average degree of workers as $deg_w$, and that of tasks as
$deg_t$. In Figure \ref{alg:greedy}, the computation of valid
worker-and-task pairs in line 2 needs
$O(m\cdot n)$ cost. Since there are at most $n$ iterations, for each
round (lines 3-16), we apply two worker-pruning methods to at most
$(2m \cdot deg_t)$ pairs, and select pairs with the highest score
increases, which need $O(3m \cdot n\cdot deg_t)$ cost in total.
For the cost of task-pruning, there are totally $n$ rounds (lines 3-16; 
i.e., removing one out of $n$ workers in each round in line 16). 
In each round, there are at most $deg_w$ out of $m$ tasks (line 5) that may be 
potentially pruned by Lemma \ref{lemma:prune_insufficient_task} (line10). To check each of $deg_w$ tasks, we need $O(deg_t)$ cost. 
Therefore, the total cost of task-pruning is given by $O(n\cdot deg_t \cdot deg_w)$. 
If we cannot prune a task that was assigned with a worker in the last round (lines 3-16), then we need to update score increases of $deg_t$ workers for that task. Each task will be assigned with workers for $deg_t$  times. Thus, the total update cost for one task is given by $O(deg^2_t)$ (line 12). Therefore, $cost_{greedy}(n, m)$ can be given by:\vspace{-3ex}

{\small
	\begin{eqnarray}
		&& cost_{greedy}(n, m) \notag\\
		&=&C_{greedy}\cdot(m\cdot n + n\cdot deg_t \cdot (3m + deg_w) + m\cdot deg^2_t), \qquad \label{eq:greedy_approach_cost}
	\end{eqnarray}}
	\vspace{-3ex}
	
	\noindent where parameter $C_{greedy}$ is a constant factor, which
	can be inferred from cost statistics of the greedy algorithm.

	\underline{\it The Cost, $cost_{gdc}$, of the $g$-D\&C Algorithm.}
	Assume that the current $g$-divide-and-conquer level is $k$. We can
	modify the cost analysis in Section \ref{subsec:DC_cost_model}, by
	considering the cost, $cost_{gdc}$, of the remaining
	divide-and-conquer levels. Specifically, we have the cost, $F_D'$,
	of the decomposition algorithm, that is:\vspace{-2ex}
	
	{\small
		$$F_D'=  m\cdot n+(m\cdot g+m)\cdot k.$$
		\vspace{-3ex}}

	Moreover, when the current level is $k$, the cost of conquering the
	remaining subproblems is given by:\vspace{-2ex}
	
	{\small
		$$\sum_{i= k}^{log_g(m)}F_c(m/g^i).$$
		\vspace{-3ex}}
	
	Finally, the cost of merging subproblems is given by $F_M$.
	
	As a result, the total cost, $cost_{gdc}$, of solving the MS-SC
	problem with our $g$-D\&C approach for the remaining partitioning
	levels (from level $k$ to $log_g(m)$) can be given by:\vspace{-3ex}
	
	{\small
		$$cost_{gdc}= C_{gdc}\cdot(F_D' + \sum_{i = k}^{log_g(m)}F_c(m/g^i)+ F_M),$$
		\vspace{-3ex}}
	
	\noindent where parameter $C_{gdc}$ is a constant factor, which can
	be inferred from time cost statistics of the $g$-D\&C algorithm.
	
	This way, we compare $cost_{greedy}$ with $cost_{gdc}$ (as mentioned
	in line 4 of {\sf MS-SC\_Adaptive} Algorithm). If $cost_{greedy}$ is
	smaller than $cost_{gdc}$, we stop at the current level $k$, and apply the greedy algorithm to tackle
	the MS-SC problem directly; otherwise, we keep dividing the original
	MS-SC problem into subproblems (i.e., $g$-D\&C).

\section{Experimental Study}
\label{sec:exper}

\subsection{Experimental Methodology}

\noindent \textbf{Data Sets.} We use both real and synthetic data to
test our proposed MS-SC approaches. Specifically, for real data, we
use Meetup data set from \cite{liu2012event}, which was crawled from
\textit{meetup.com} between Oct. 2011 and Jan. 2012. There are 5,153,886
users, 5,183,840 events, and 97,587 groups in Meetup, where each
user is associated with a location and a set of tags, each group is
associated with a set of tags, and each event is associated with a
location and a group who created the event. For an event, we use the
tags of the group who creates the event as its tags. To conduct the
experiments on our approaches, we use the locations and tags of
users in Meetup to initialize the locations and the practiced skills
of workers in our MS-SC problem. In addition, we utilize the
locations and tags of events to initialize the locations and the
required skills of tasks in our experiments. Since workers are
unlikely to move between two distant cities to conduct one spatial
task, and the constraints of time (i.e., $e_j$), budget (i.e.,
$B_j$) and distance (i.e., $d_i$) also prevent workers from moving
too far, we only consider those user-and-event pairs located in the
same city. Specifically, we select one famous and popular city, Hong
Kong, and extract Meetup records from the area of Hong Kong (with
latitude from \ang{22.209} to \ang{113.843} and longitude from
\ang{22.609} to \ang{114.283}), in which we obtain 1,282 tasks and
3,525 workers.

For synthetic data, we generate locations of workers and tasks in a
2D data space $[0, 1]^2$, following either Uniform (UNIFORM) or
Skewed (SKEWED) distribution. For Uniform distribution, 
we uniformly generate the locations of tasks/workers in the 2D data space.
Similarly, we also generate
tasks/workers with the Skewed distribution by locating 90\% of them into
a Gaussian cluster (centered at (0.5, 0.5) with variance = $0.2^2$),
and distribute the rest workers/tasks uniformly.  Then, for
skills of each worker, we randomly associate one user in Meetup data
set to this worker, and use tags of the user as his/her skills in
our MS-SC system. For the required skills of each task, we randomly 
select an event, and use its tags as the required skills of the task.

For both real and synthetic data sets, we simulate the velocity of
each worker with Gaussian distribution within range $[v^-, v^+]$,
for $v^-, v^+ \in (0, 1)$. For the unit price, $C_i$, w.r.t. the
traveling distance of each worker, we generate it following the Uniform
distribution within the range $[C^-, C^+]$. Furthermore, we produce
the maximum moving distance of each worker, following the Uniform
distribution within the range $[d^-, d^+]$ (for $d^-, d^+ \in
(0,1)$). For temporal constraints of tasks, we also generate the
arrival deadlines of tasks, $e$, within range $[rt^-, rt^+]$ with
Gaussian distribution. Finally, we set the budgets of tasks with
Gaussian distribution within the range $[B^-, B^+]$. 
Here, for Gaussian distributions, we linearly map data 
samples within $[-1, 1]$ of a Gaussian distribution $\mathcal{N}(0, 0.2^2)$ 
to the target ranges.

\noindent\textbf{MS-SC Approaches and Measures.} We conduct
experiments to compare our three approaches, GREEDY, $g$-D\&C and
ADAPTIVE, with a random method, namely RANDOM, which randomly
assigns workers to tasks.

In particular, GREEDY selects a ``best'' worker-and-task assignment
with the highest score increase each time, which is a local optimal
approach. The $g$-D\&C algorithm keeps dividing the problem into $g$
subproblems on each level, until finally the number of tasks in each
subproblem is 1 (which can be solved by the greedy algorithm on each
one-task subproblem). Here, the parameter $g$ can be estimated by a
cost model to minimize the computing cost. The cost-model-base
adaptive algorithm (ADAPTIVE) makes the trade-off between GREEDY and
$g$-D\&C, in terms of efficiency and accuracy, which adaptively
decides the stopping level of the divide-and-conquer. To evaluate
our three proposed approaches, we need to compare the results with
ground truth. However, as proved in Section \ref{sec:reduction}, the
MS-SC problem is NP-hard, and thus infeasible to calculate the real
optimal result as the ground truth. Alternatively, we will compare
the effectiveness of our three approaches with that of a random
(RANDOM) method, which randomly chooses a task then randomly assigns
worker to the task. For each MS-SC instance, we run RANDOM for 10
times, and report the result with the highest score.

Table \ref{table2} depicts our experimental settings, where the
default values of parameters are in bold font. In each set of
experiments, we vary one parameter, while setting other parameters
to their default values. For each experiment, we report the running
time and the assignment score of our tested approaches. All our
experiments were run on an Intel Xeon X5675 CPU @3.07 GHZ with 32 GB
RAM in Java.

\begin{table}[t]
	\begin{center}\vspace{-6ex}
		\caption{\small Experiments Settings.} \label{table2}
		{\small\scriptsize
			\begin{tabular}{l|l}
				{\bf \qquad \qquad \quad Parameters} & {\bf \qquad \qquad \qquad Values} \\ \hline \hline
				the number of tasks $m$ & 1K, 2K, \textbf{5K}, 8K, 10K \\
				the number of workers $n$ & 1K, 2K, \textbf{5K}, 8K, 10K \\
				the task budget range $[B^-, B^+]$  & [1, 5], \textbf{[5, 10]}, [10, 15], [15, 20], [20, 25]\\
				the velocity range $[v^-, v^+]$   & [0.1, 0.2], \textbf{[0.2, 0.3]}, [0.3, 0.4], [0.4, 0.5]\\
				the unit price w.r.t. distance $[C^-, C^+]$ & [10, 20], \textbf{[20, 30]}, [30, 40], [40, 50]\\
				the moving distance range $[d^-, d^+]$ & [0.1, 0.2], [0.2, 0.3], \textbf{[0.3, 0.4]}, [0.4, 0.5]\\
				the expiration time range $[rt^-, rt^+]$  & [0.25, 0.5], [0.5, 1], \textbf{[1, 2]}, [2, 3], [3, 4]\\
				\hline
			\end{tabular}
		}
	\end{center}\vspace{-7ex}
\end{table}

\subsection{Experiments on Real Data}

In this subsection, we show the effects of the range of task budgets
$[B^-, B^+]$, the range of workers' velocities $[v^-, v^+]$, and
the range of unit prices w.r.t. distance $[C^-, C^+]$.

\noindent\textbf{Effect of the Range of Task Budgets $[B^-, B^+]$.}
Figure \ref{fig:budget_b} illustrates the experimental results on
different ranges, $[B^-, B^+]$, of task budgets $B_j$ from $[1,5]$
to $[20,25]$. In Figure \ref{subfig:b_score}, the assignment scores
of all the four approaches increase, when the value range of task
budgets gets larger. When the average budgets of tasks increase, the
flexible budget $B'$ of each task will also increase. $g$-D\&C and
ADAPTIVE can achieve higher score than GREEDY. In contrast, RANDOM
has the lowest score, which shows that our proposed three approaches
are more effective. As shown in Figure \ref{subfig:b_cpu}, the
running times of our three approaches increase, when the range of
task budgets becomes larger. The reason is that, when $B_j \in [B^-,
B^+]$ increases, each task has more valid workers, which thus leads
to higher complexity of the MS-SC problem and the increase of the
running time. The RANDOM approach is the fastest (however, with the
lowest assignment score), since it does not even need to find local
optimal assignment. The ADAPTIVE algorithm achieves much lower
running time than $g$-D\&C (a bit higher time cost than GREEDY), but
has comparable score with $g$-D\&C (much higher score than GREEDY),
which shows the good performance of ADAPTIVE, compared with GREEDY
and $g$-D\&C.

\begin{figure}[ht]\vspace{-2ex}
	\centering
	\subfigure[][{\scriptsize Scores of Assignment}]{
		\scalebox{0.2}[0.2]{\includegraphics{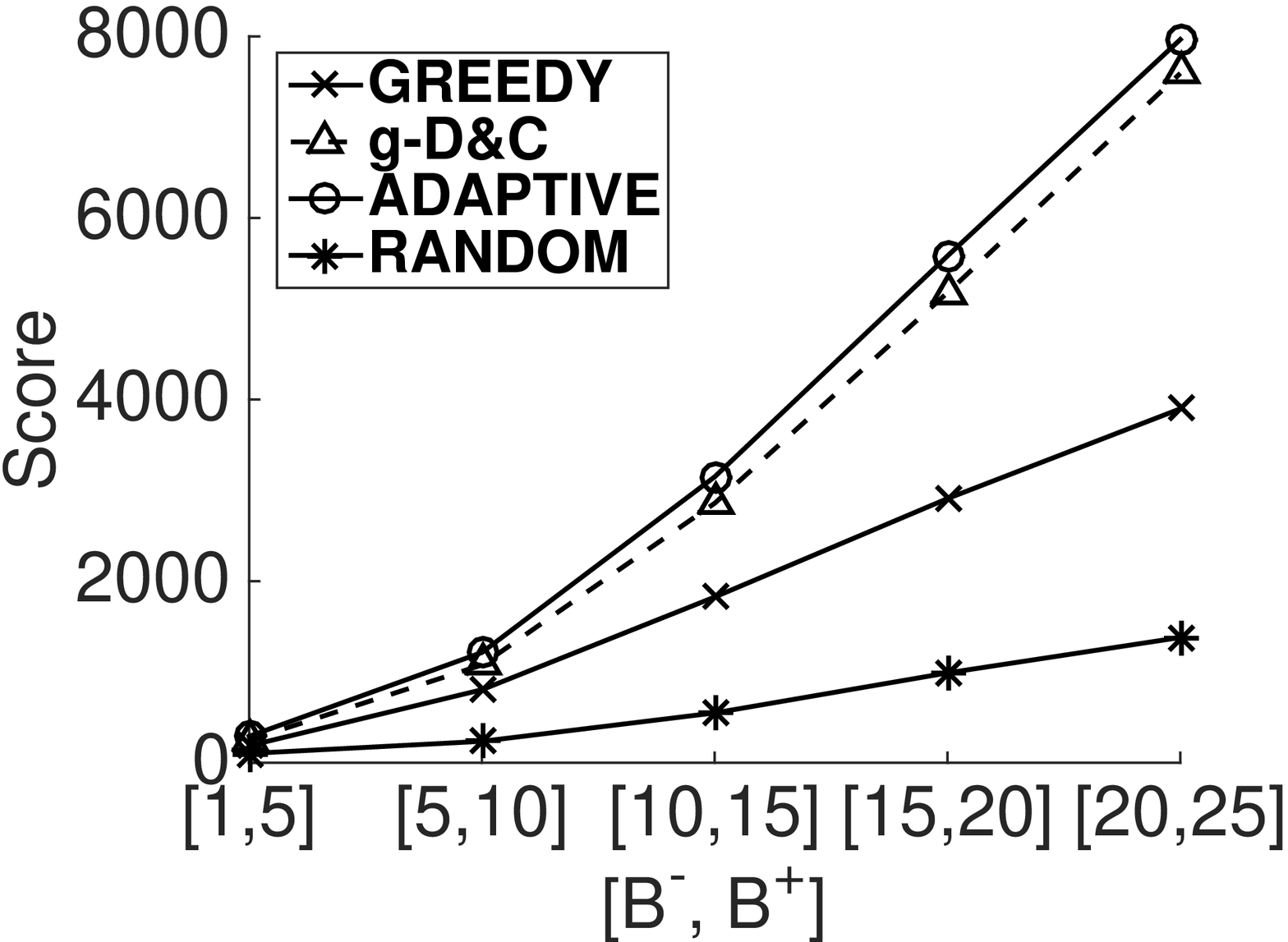}}
		\label{subfig:b_score}}
	\subfigure[][{\scriptsize Running Times}]{
		\scalebox{0.2}[0.2]{\includegraphics{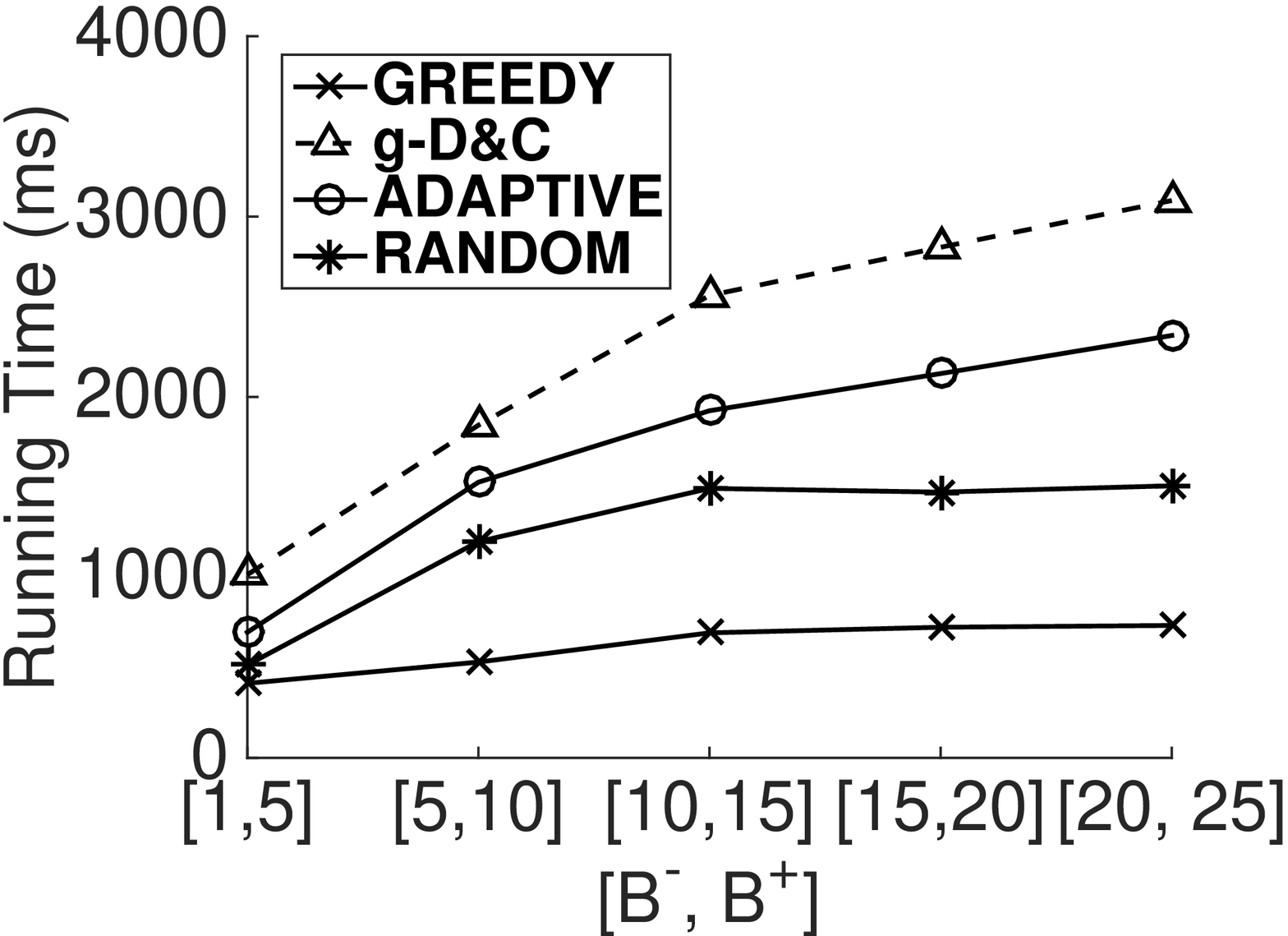}}
		\label{subfig:b_cpu}}\vspace{-2ex}
	\caption{\small Effect of the Range of Task Budgets $[B^-, B^+]$ (Real Data).}\vspace{-4ex}
	\label{fig:budget_b}
\end{figure}

\noindent\textbf{Effect of the Workers' Velocity Range $[v^-,
	v^+]$.} Figure \ref{fig:velocity_v} reports the effect of the range
of velocities, $[v^-, v^+]$, of workers over real data. As shown in
Figure \ref{subfig:v_score}, when the range of velocities increases
from $[0.1,0.2]$ to $[0.2,0.3]$, the scores of all the approaches
first increase; then, they stop growing for the velocity range 
varying from [0.2, 0.3] to [0.4, 0.5]. The reason is that, at the
beginning, with the increase of velocities, workers can reach more
tasks before their arrival deadlines. Nevertheless, workers are also
constrained by their maximum moving distances, which prevents them
from reaching more tasks. ADAPTIVE can achieve a bit higher scores
than $g$-D\&C, and much better assignment scores than GREEDY.

In Figure \ref{subfig:v_cpu}, when the range of velocity $[v^-,
v^+]$ increases, the running times of our tested approaches also
increase, due to the cost of more valid worker-and-task pairs to be
handled. Similar to previous results, RANDOM is the fastest, and
$g$-D\&C is the slowest. ADAPTIVE requires about 0.5-1.5 seconds,
and has lower time cost than $g$-D\&C, which shows the efficiency of
our proposed approaches.

\begin{figure}[ht]\vspace{-2ex}
	\centering
	\subfigure[][{\scriptsize Scores of Assignment}]{
		\scalebox{0.2}[0.2]{\includegraphics{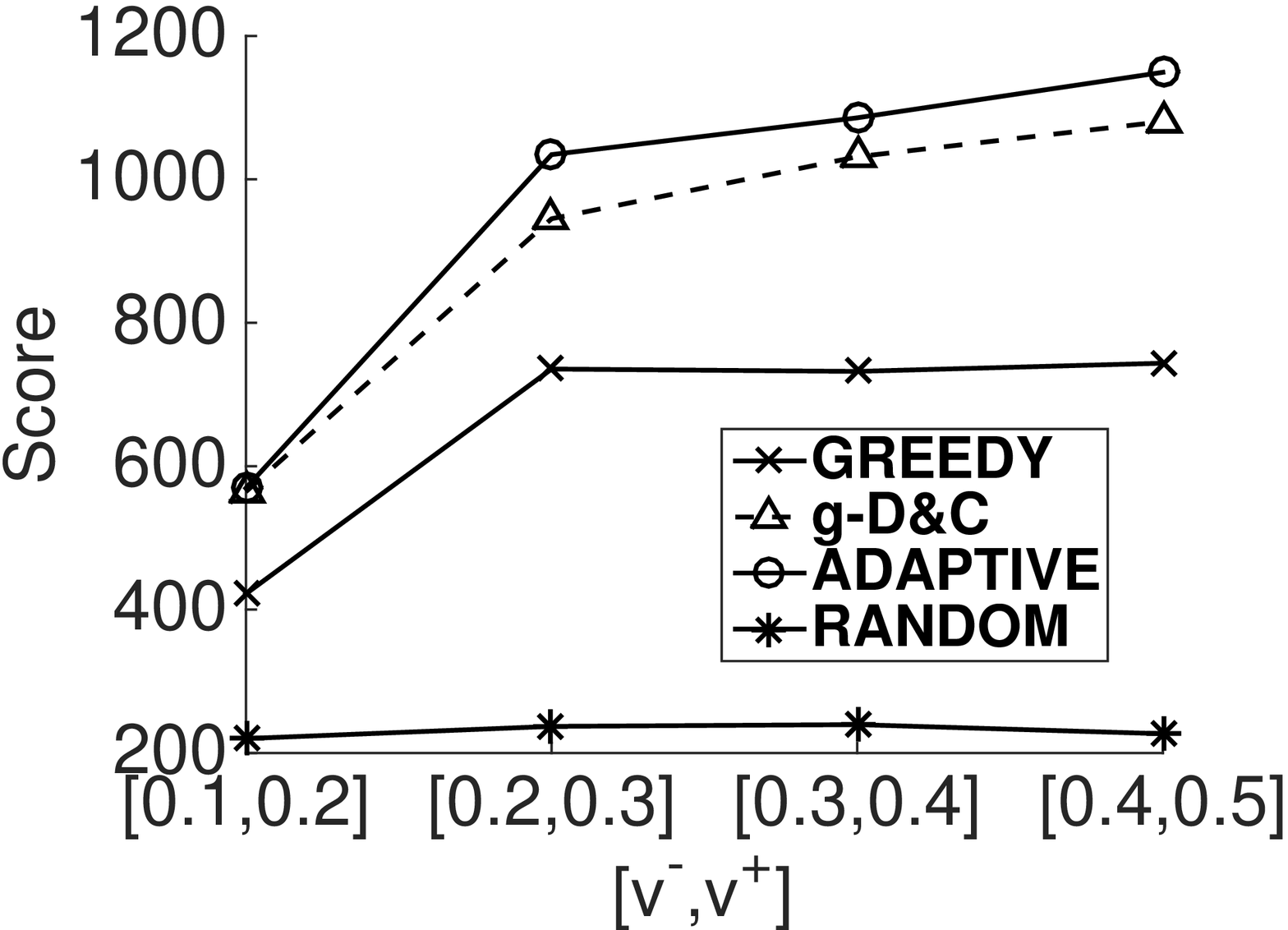}}
		\label{subfig:v_score}}
	\subfigure[][{\scriptsize Running Times}]{
		\scalebox{0.2}[0.2]{\includegraphics{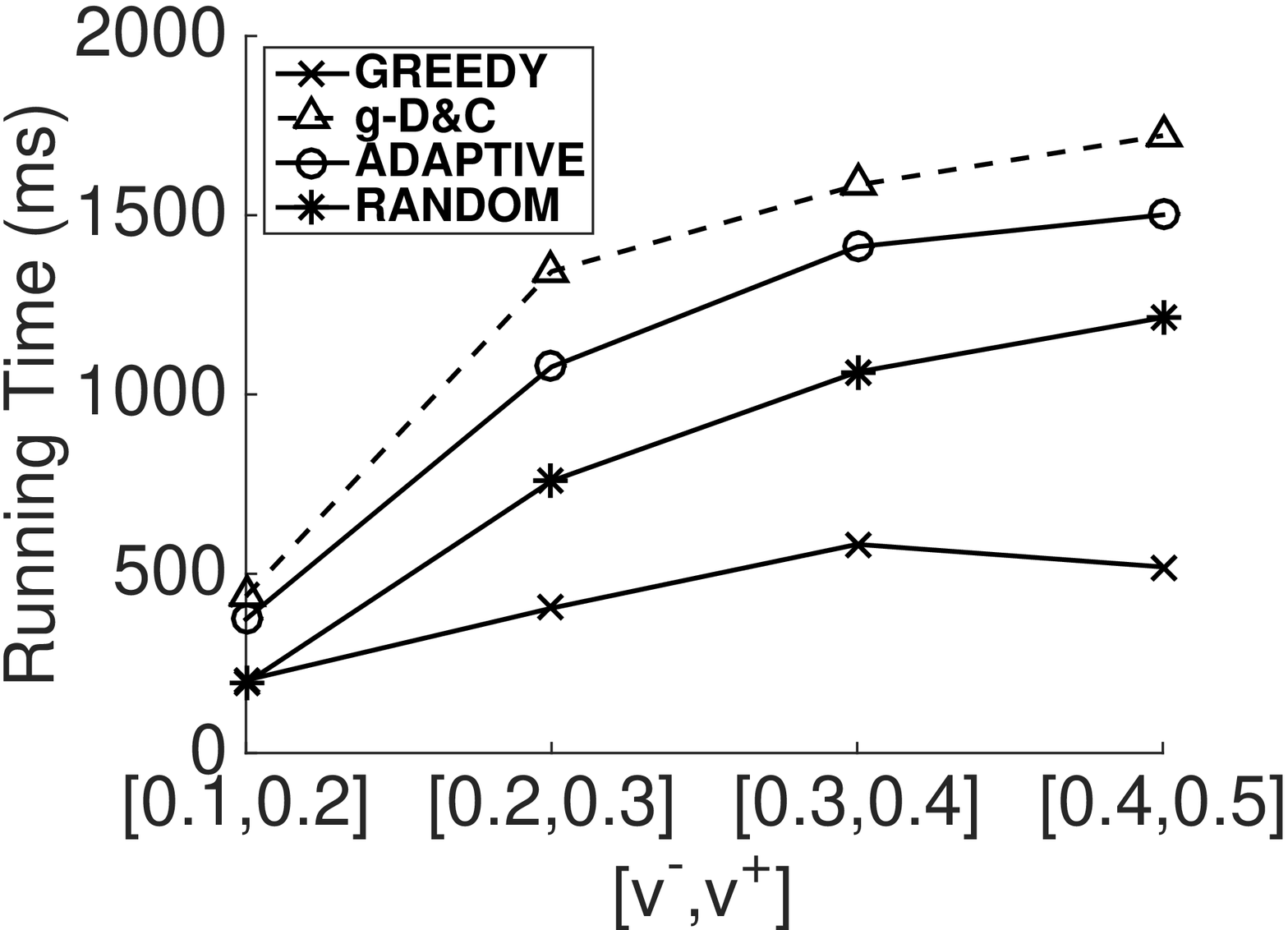}}
		\label{subfig:v_cpu}}\vspace{-2ex}
	\caption{\small Effect of the Range of Velocities $[v^-, v^+]$ (Real Data).}\vspace{-4ex}
	\label{fig:velocity_v}
\end{figure}

\noindent\textbf{Effect of the Range of Unit Prices w.r.t. Traveling Distance
	$[C^-, C^+]$.} In Figure \ref{subfig:c_score}, when the unit prices w.r.t.
traveling distance $C_i \in[C^-, C^+]$ increase, the scores of all the approaches
decrease. The reason is that, when the range of unit prices $[C^-, C^+]$ increases, we need to pay
more wages containing the traveling costs of workers (to do spatial
tasks), which in turn decreases the flexible budget of each task.
However, ADAPTIVE can still achieve the highest score among all four
approaches; scores of $g$-D\&C are close to the scores of ADAPTIVE
and higher than GREEDY.

In Figure \ref{subfig:c_cpu}, when the range of unit prices, $[C^-, C^+]$, of the
traveling cost increases, the number of valid worker-and-task pairs
decreases, and thus the running time of all the approaches also
decreases. Our ADAPTIVE algorithm is faster than $g$-D\&C, and
slower than GREEDY.

\begin{figure}[ht]\vspace{-2ex}
	\centering
	\subfigure[][{\scriptsize Scores of Assignment}]{
		\scalebox{0.2}[0.2]{\includegraphics{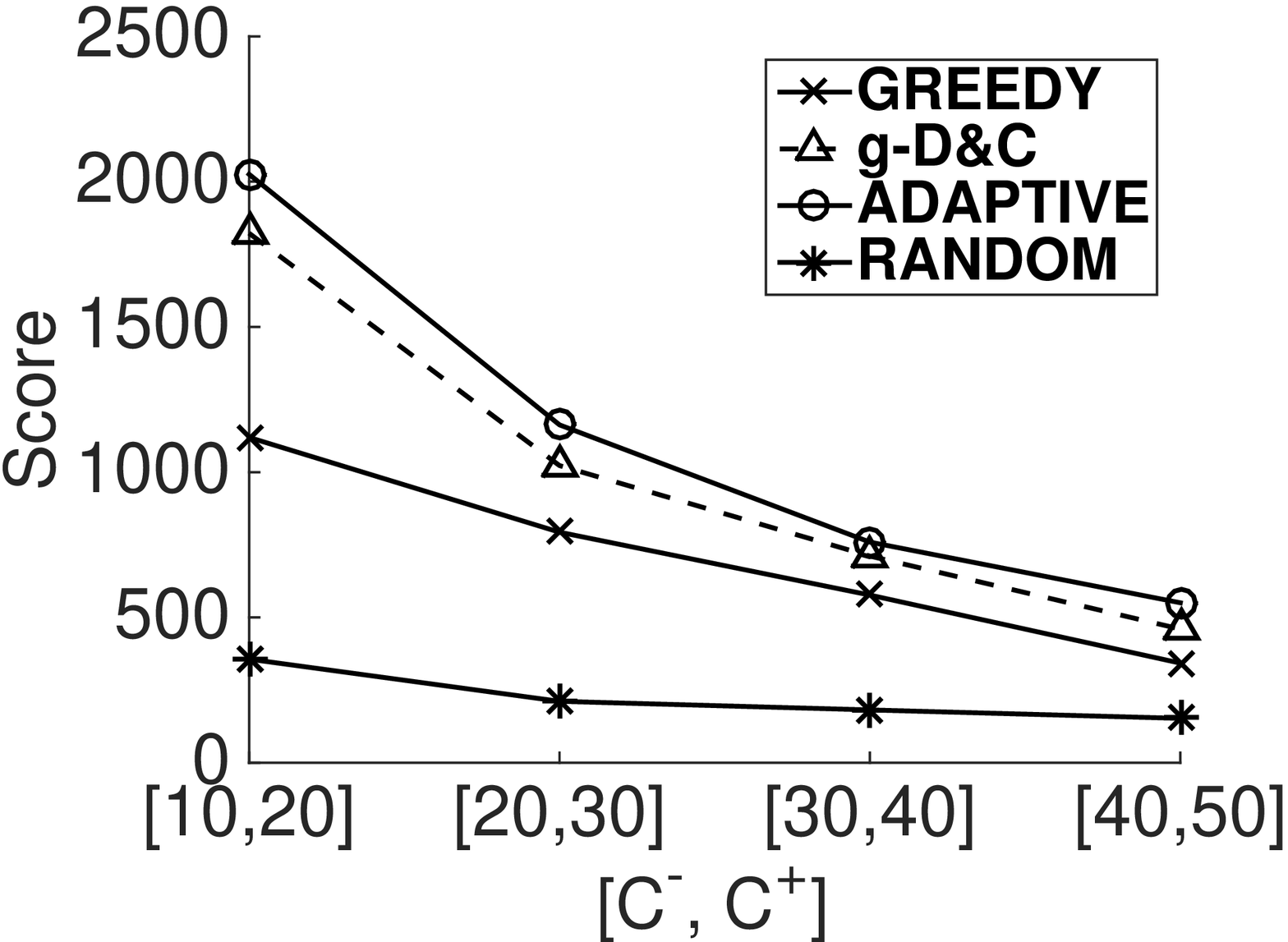}}
		\label{subfig:c_score}}
	\subfigure[][{\scriptsize Running Times}]{
		\scalebox{0.2}[0.2]{\includegraphics{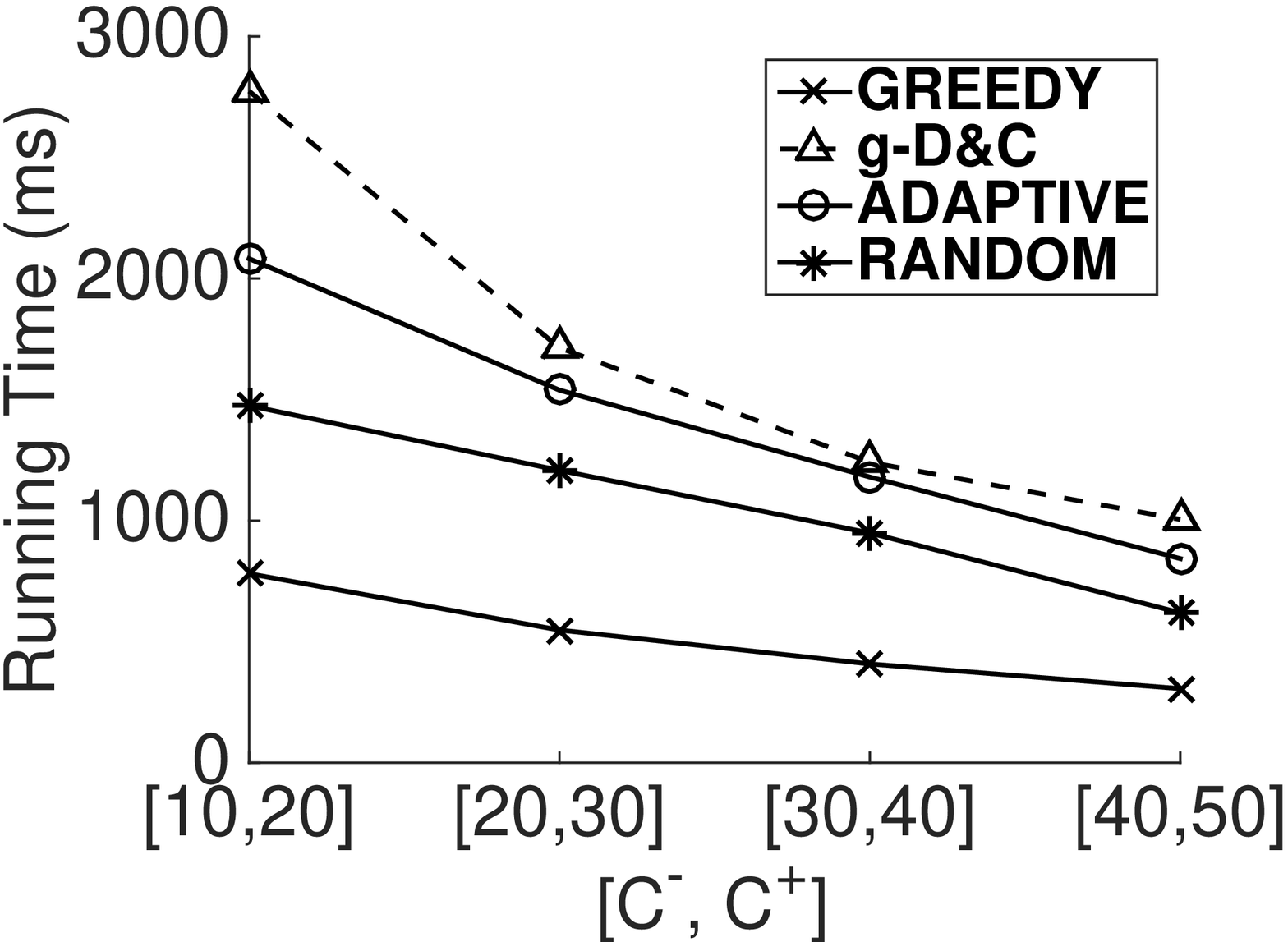}}
		\label{subfig:c_cpu}}\vspace{-2ex}
	\caption{\small Effect of the Range of Unit Prices w.r.t. Traveling Distance $[C^-, C^+]$ (Real Data).} \vspace{-1ex}
	\label{fig:constanct_c}\vspace{-3ex}
\end{figure}

In addition, we also tested the effects of the range, $[d^-, d^+]$,
of maximum moving distances for workers, and the expiration time
range, $[rt^-, rt^+]$, of tasks over the real data set, Meetup. Due to
space limitations, please refer to experimental results with respect
to other parameters (e.g., $[d^-, d^+]$ and $[rt^-, rt^+]$) in
Appendix F.

From experimental results on the real data above, ADAPTIVE can
achieve higher scores than Greedy and $g$-D\&C, and it is faster
than $g$-D\&C and slower than GREEDY. Although $g$-D\&C can achieve
good scores close to ADAPTIVE, it is the slowest among all the 4
approaches.

\subsection{Experiments on Synthetic Data}

In this subsection, we test the effectiveness and robustness of our
three MS-SC approaches, GREEDY, $g$-D\&C, and ADAPTIVE, compared
with RANDOM, by varying parameters (e.g., the number of tasks $m$
and the number of workers $n$) on synthetic data sets. Due to space limitations, we present the experimental results for tasks/workers with Uniform distributions. For similar results with tasks/workers following skewed distributions, please refer to Appendix G.

\noindent\textbf{Effect of the Number of Tasks $m$.} Figure
\ref{fig:tasks_m} illustrates the effect of the number, $m$, of
spatial tasks, by varying $m$ from $1K$ to $10K$, over synthetic
data sets, where other parameters are set to default values. For
assignment scores in Figure \ref{subfig:m_score}, $g$-D\&C obtains
results with the highest scores among all the four approaches.
ADAPTIVE performs similar to $g$-D\&C, and achieves good results
similar to $g$-D\&C. GREEDY is not as good as $g$-D\&C and ADAPTIVE,
but is still much better than RANDOM. When the number, $m$, of
spatial tasks becomes larger, all our approaches can achieve higher
scores.

In Figure \ref{subfig:m_cpu}, when $m$ increases, the running time
also increases. This is because, we need to deal with more
worker-and-task assignment pairs for large $m$. The ADAPTIVE
algorithm is slower than GREEDY, and faster than $g$-D\&C. In
addition, we find that the running time of GREEDY performs, with the
same trend as that estimated in our cost model (as given in
Eq.~(\ref{eq:greedy_approach_cost})).

\begin{figure}[ht]\vspace{-2ex}
	\centering
	\subfigure[][{\scriptsize Scores of Assignment}]{
		\scalebox{0.2}[0.2]{\includegraphics{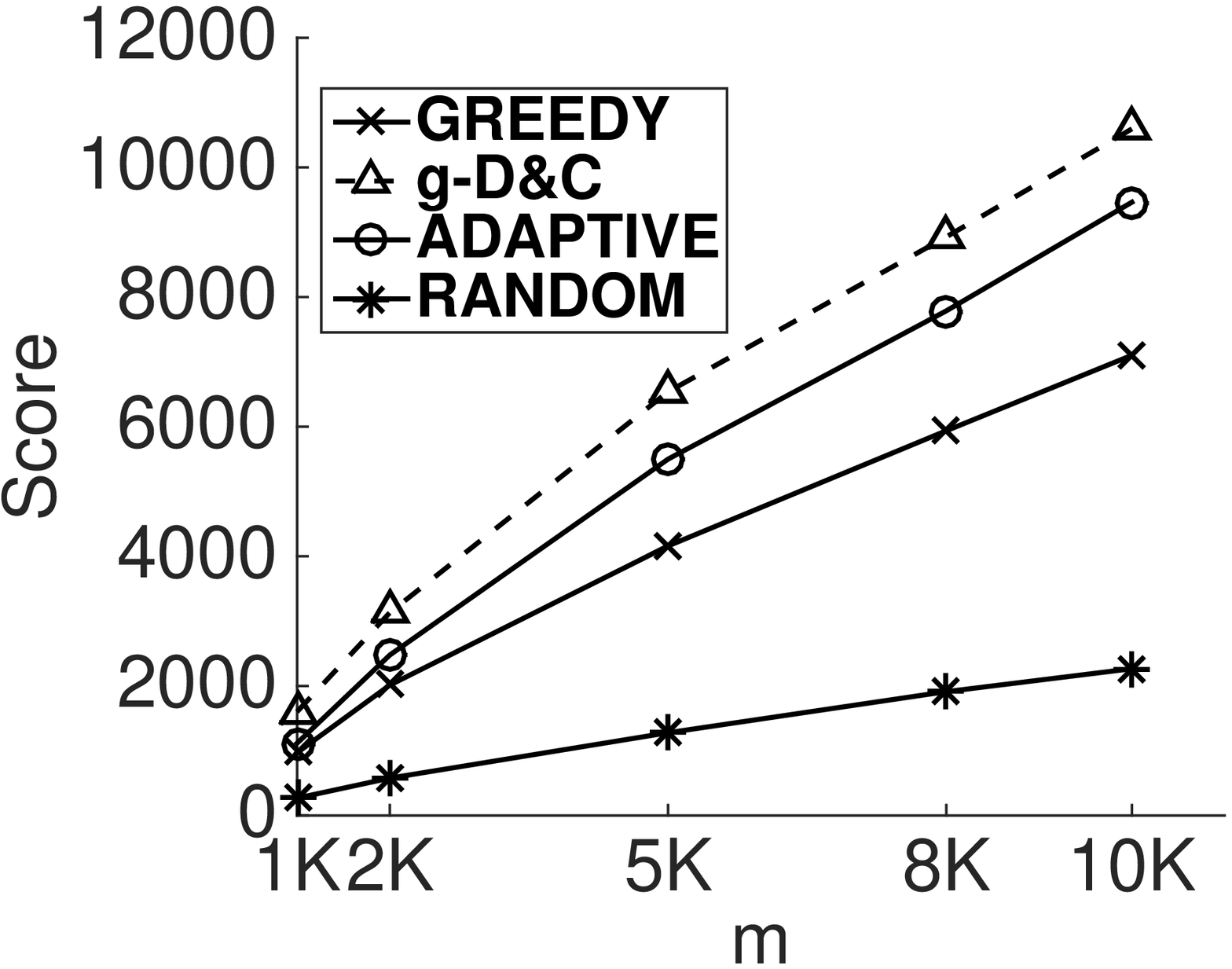}}
		\label{subfig:m_score}}
	\subfigure[][{\scriptsize Running Times}]{
		\scalebox{0.2}[0.2]{\includegraphics{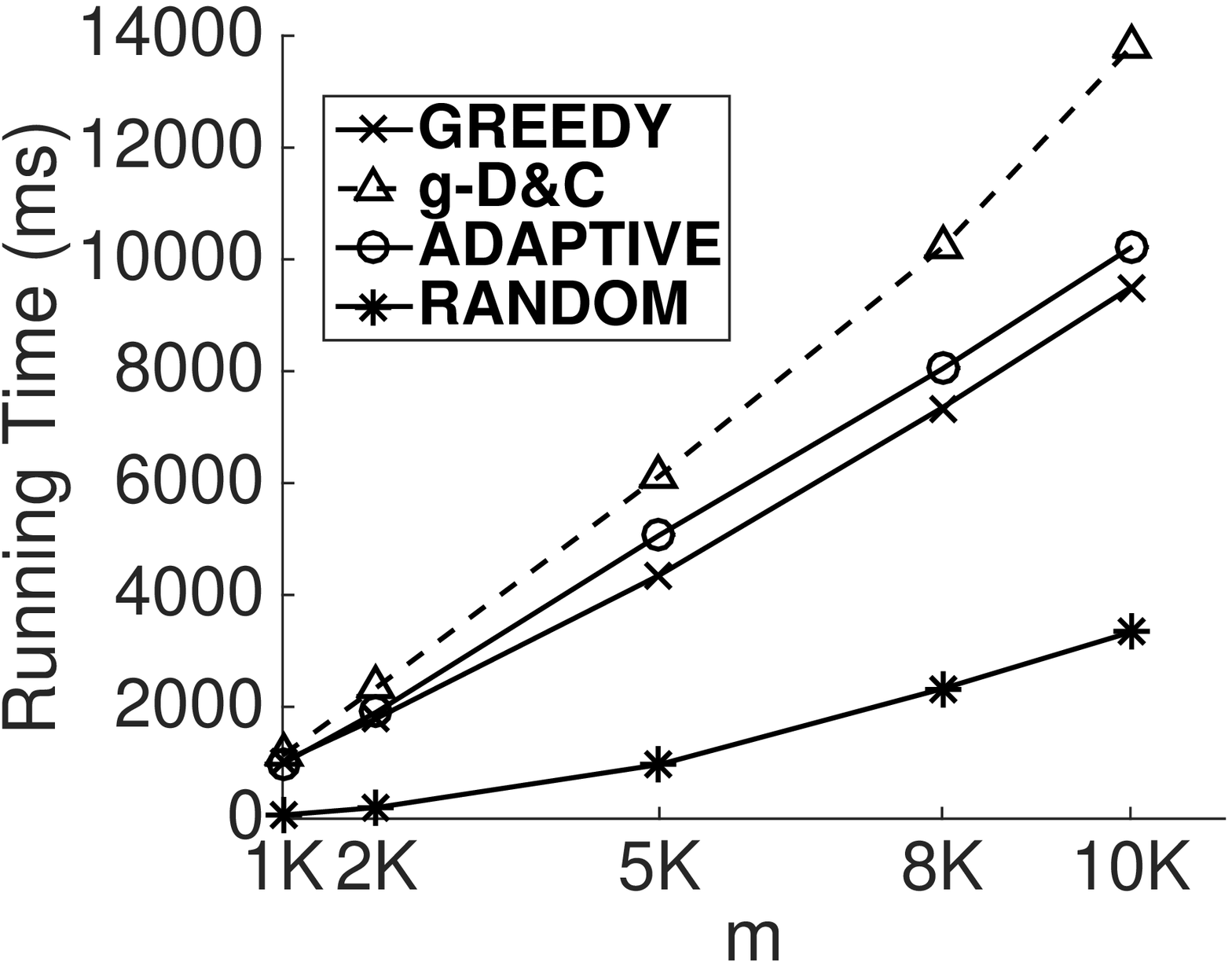}}
		\label{subfig:m_cpu}}\vspace{-2ex}
	\caption{\small Effect of the Number of Tasks $m$ (Synthetic Data).}\vspace{-4ex}
	\label{fig:tasks_m}
\end{figure}

\noindent\textbf{Effect of the Number of Workers $n$.} Figure
\ref{fig:workers_n} shows the experimental results with different
numbers of workers, $n$, from $1K$ to $10K$ over synthetic data,
where other parameters are set to their default values. Similar to
previous results about the effect of $m$,  in Figure \ref{subfig:n_score}, our proposed three
approaches can obtain good results with high assignment scores,
compared with RANDOM. Moreover, when the number, $n$, of workers
increases, the scores of all our approaches also increase. The
reason is that, when $n$ increases, we have more potential workers,
who can be assigned to nearby tasks, which may lead to even larger
scores.

In Figure \ref{subfig:n_cpu}, the running time of our approaches
increases, with the increase of the number of workers . This is due
to higher cost to process more workers (i.e., larger $n$).
Similarly,  ADAPTIVE has higher time cost than GREEDY, and lower
time cost than $g$-D\&C.

\begin{figure}[ht]\vspace{-2ex}
	\centering
	\subfigure[][{\scriptsize Scores of Assignment}]{
		\scalebox{0.2}[0.2]{\includegraphics{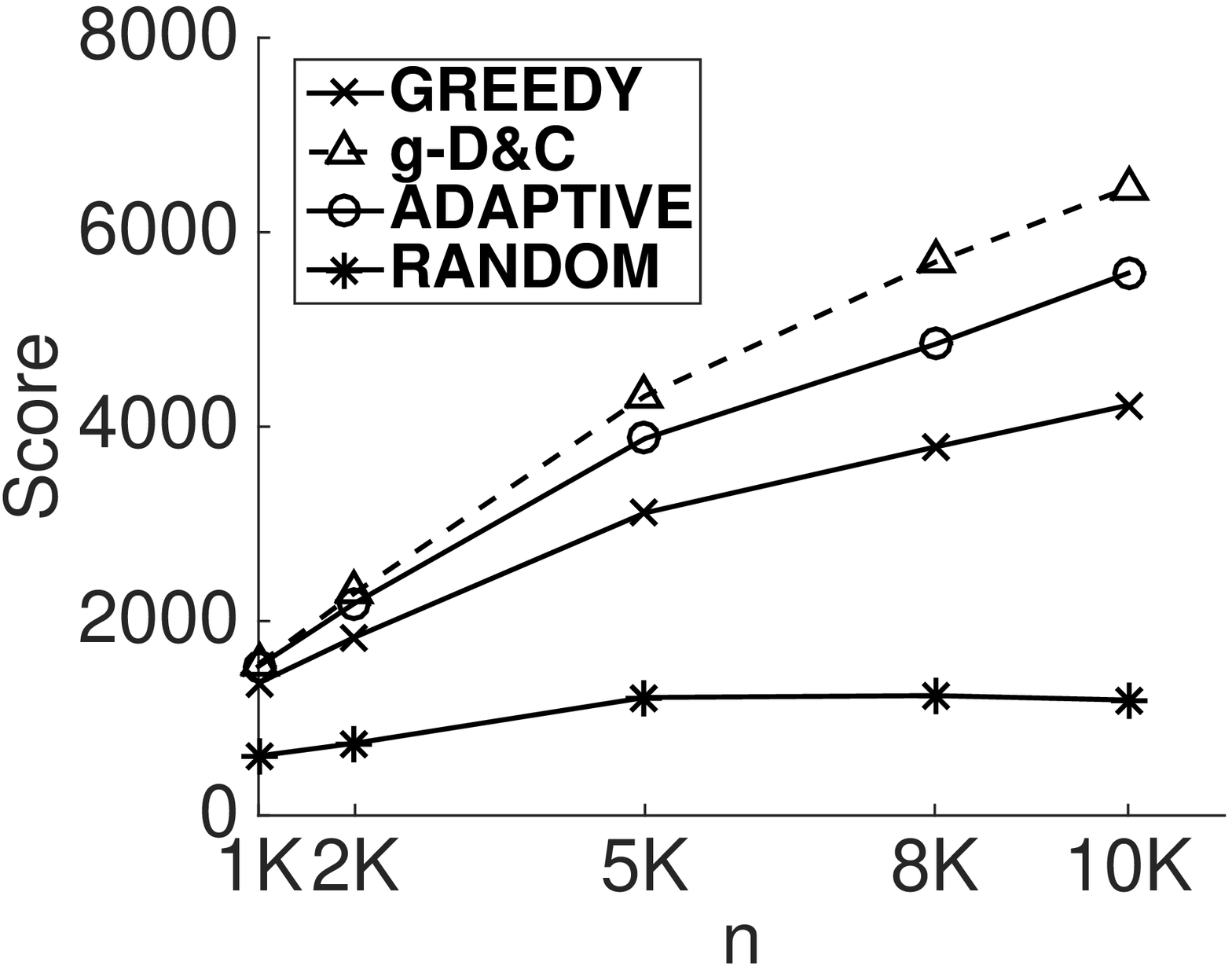}}
		\label{subfig:n_score}}
	\subfigure[][{\scriptsize Running Times}]{
		\scalebox{0.2}[0.2]{\includegraphics{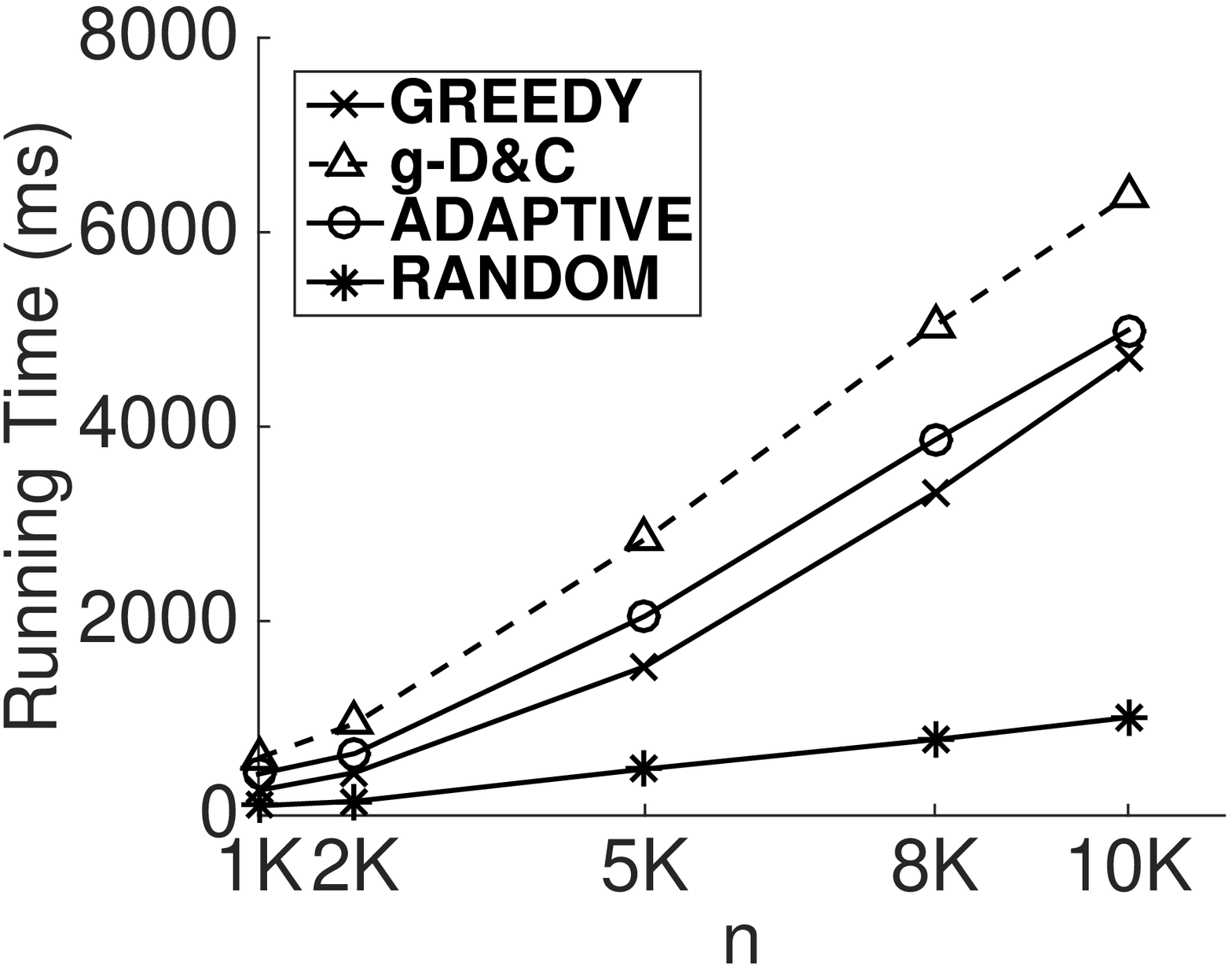}}
		\label{subfig:n_cpu}}\vspace{-2ex}
	\caption{\small Effect of the Number of Workers $n$ (Synthetic Data).}\vspace{-1ex}
	\label{fig:workers_n}\vspace{-3ex}
\end{figure}

In summary, over synthetic data sets, our ADAPTIVE algorithm trades
the accuracy for efficiency, and thus has the trade-off of
scores/times between GREEDY and $g$-D\&C.

\section{Related Work}
\label{sec:related}

Recently, with the popularity of GPS-equipped smart devices and
wireless networks (e.g., Wi-Fi and 4G), the spatial crowdsourcing
\cite{deng2013maximizing, kazemi2012geocrowd} that performs
location-based tasks has emerged and become increasingly important
in both academia and industry. In this section, we review the
related work on spatial crowdsourcing, as well as the set cover
problem (and its variants).

\vspace{0.5ex}\noindent {\bf Spatial Crowdsourcing.} Prior works on
crowdsourcing \cite{alt2010location, bulut2011crowdsourcing} usually
studied crowdsourcing problems, which  treat the location
information as a parameter and distribute tasks to workers. In these
problems, workers are not required to accomplish tasks on sites.

In contrast, the spatial crowdsourcing platform
\cite{kazemi2012geocrowd} requires workers to physically move to
some specific locations of tasks, and perform the requested
services, such as taking photos/videos, waiting in line at shopping
malls, and decorating a room. As an example, some previous works
\cite{cornelius2008anonysense, kanhere2011participatory} studied the
small-scale or specified campaigns benefiting from
\textit{participatory sensing} techniques, which utilize smart
devices (equipped by workers) to sense/collect data for real
applications.

Kazemi and Shahabi \cite{kazemi2012geocrowd} classified the spatial
crowdsourcing systems from two perspectives: people's motivation and
publishing models. From the perspective of people's motivation, the
spatial crowdsourcing can be categorized into two groups:
reward-based, in which workers can receive rewards according to the
services they supplied, and self-incentivised, in which workers
conduct tasks voluntarily. In our work, we study our MS-SC problem
based on the reward-based model, where workers are paid for doing
tasks. However, with a different goal, our MS-SC problem targets at
assigning workers to tasks by using our proposed algorithms, such
that the required skills of tasks can be covered, and the total
reward budget (i.e., flexible budget $B_j'$ in
Eq.~(\ref{eq:flexible_budget})) can be maximized. Note that, we can
embed incentive mechanisms from existing works \cite{rula2014no,
	yang2012crowdsourcing} into our MS-SC framework to distribute
rewards (flexible budgets) among workers, which is however not the
focus of our problem.

According to the publishing modes of spatial tasks, the spatial
crowdsourcing can be also classified into two categories:
\textit{worker selected tasks} (WST) and \textit{server assigned
	tasks} (SAT) \cite{kazemi2012geocrowd}. In particular, for the WST
mode, spatial tasks are broadcast to all workers, and workers can
select any tasks by themselves. In contrast, for the SAT mode, the
spatial crowdsourcing server will directly assign tasks to workers,
based on location information of tasks/workers.

Some prior works \cite{alt2010location, deng2013maximizing} on the
WST mode allowed workers to select available tasks, based on their
personal preferences. However, for the SAT mode, previous works
focused on assigning available workers to tasks in the system, such
that the number of assigned tasks on the server side
\cite{kazemi2012geocrowd}, the number of worker's self-selected
tasks on the client side \cite{deng2013maximizing}, or the
reliability-and-diversity score of assignments
\cite{cheng2014reliable} is maximized. For example, Peng et al.
\cite{cheng2014reliable} aims to obtain a worker-and-task assignment
strategy such that the assignment score (w.r.t. spatial/temporal
diversity and reliability of tasks) is maximized.

In contrast, our MS-SC problem has a different, yet more general,
goal, which maximizes the total assignment score (i.e., flexible
budget, given by the total budget of the completed tasks minus the
total traveling cost of workers). Most importantly, in our MS-SC
problem, we need to consider several constraints, such as
skill-covering, budget, time, and distance constraints. That is, the
required skill sets of spatial tasks should be fully covered by
skills of those assigned workers, which is NP-hard and intractable.
Thus, previous techniques
\cite{cheng2014reliable,deng2013maximizing,kazemi2012geocrowd} on
different spatial crowdsourcing problems cannot be directly applied
to our MS-SC problem.

Some research communities studied the theory of SAT problems and
developed some SAT (Satisfiability) solvers. However, standard SAT
solvers can only solve decision problems (i.e., NP-complete
problems), but not optimization problems (i.e., NP-hard problems,
like our MS-SC problem). Thus, we need to design specific heuristic
algorithms for tackling the MS-SC problem.

Moreover, some previous works \cite{pournajaf2014spatial,
	to2014framework} utilized \textit{differential privacy} techniques
\cite{dwork2008differential} to protect the location information,
which is used to do the assignment, but may release some sensitive
location/trajectory data (leading to malicious attacks).
Nevertheless, this privacy issue is out of the scope of this paper.

\vspace{0.5ex}\noindent {\bf Set Cover Problem.} As mentioned in
Lemma \ref{lemma:np}, the \textit{set cover problem} (SCP) is a
classical NP-hard problem, which targets at choosing a set of
subsets to cover a universe set, such that the number of the
selected subsets is minimized. SCP is actually a special case of our
MS-SC problem, in which there exists only one spatial task. However,
in most situations, we have more than one spatial task in the
spatial crowdsourcing system, which is more complex, and thus more
challenging, to tackle.

A direct variant of SCP is the \textit{weighted set cover problem},
which associates each subset with a weight. The well-known greedy
algorithm \cite{vazirani2013approximation} can achieve an
approximation ratio of $\ln(N) ( \approx H(N)$ here $H(N) =
\sum_{i=1}^{N}1/i)$, where $N$ is the size of the universe set.
Other SCP variants, such as the \textit{set multicover problem}
(SMC) and \textit{multiset multicover problem} (MSMC), focused on
covering each element of the universe set for at least specified
times using sets (in SMC, each element in subsets has just one copy)
or multisets (in MSMC, each element in subsets has a specified
number of copies). Sun and Li \cite{sun2005mechanism} studied
\textit{set cover games problem}, which covers multiple sets. However, they focused on designing a good mechanism to enable each single task to obtain a local optimal result. In contrast, our work aims to obtain a global optimal solution to maximize the score of assignment.

Different from SCP and its variants that cover only one universe
set, our MS-SC problem is targeting to cover multiple sets, such
that the assignment score is maximized. Furthermore, our MS-SC
problem is also constrained by budget, time, and distance, which is
much more challenging than SCP. To the best of our knowledge, no
prior works on SCP (and its variants) have studied the MS-SC
problem, and existing techniques cannot be used directly to tackle
the MS-SC problem.

\section{Conclusion}
\label{sec:conclusion}

In this paper, we propose the problem of the \textit{multi-skill
	oriented spatial crowdsourcing} (MS-SC), which assigns the
time-constrained and multi-skill-required spatial tasks with
dynamically moving workers, such that the required skills of tasks
can be covered by skills of workers and the assignment score is
maximized. We prove that the processing of the MS-SC problem is
NP-hard, and thus we propose three approximation approaches (i.e.,
greedy, $g$-D\&C, and cost-model-based adaptive algorithms), which
can efficiently retrieve MS-SC answers. Extensive experiments have
shown the efficiency and effectiveness of our proposed MS-SC
approaches on both real and synthetic data sets.

\balance

\bibliographystyle{abbrv}
\bibliography{add_short}

\appendix
\subsection{Proof of Lemma \ref{lemma:np}}
\begin{proof}
	We prove the lemma by a reduction from the set cover problem (SCP).
	A set cover problem can be described as follows: Given a universe
	set $U = \{a_1, a_2, ..., a_n\}$ and m subsets $Z_1, Z_2, ...,$ $Z_m
	\subseteq U$. For each subset $Z_i$, it is associated with a cost
	$c_i$. The set cover problem is to find a set $K \subseteq \{1, 2,
	..., m\}$ that minimizes $\sum_{i \in K}c_i$, such that $\cup_{i \in
		K}Z_i = U$.
	
	For a given set cover problem, we can transform it to an instance of
	MS-SC as follows: at timestamp $p$, we give only one task $t_j$ with
	required skills set $Y_j \subseteq \Psi$ and $Y_j = U$, whose budget
	value, $B_j = \sum_{w_i \in W}c_{ij}$, is big enough to hire workers
	to satisfy the required skills. In addition, the arrival deadline of
	the task is late enough for any worker to arrive in time. For $m$
	workers, each worker $w_i$ has a skills set $X_i$, such that $X_i =
	Z_i$, and a cost $c_{ij} = c_i$. In addition, the maximum moving
	distance $d_i$ is larger than $dist(l_i(p), l_j)$. Then, for this
	MS-SC instance, we want to select $K' \subseteq \{1, 2, ..., m\}$
	workers to support task $t_j$ that maximizes the score, $S_p = B_j -
	\sum_{i \in K'}c_{ij}$ and $Y_j \subseteq \cup_{i \in K'}X_i$.
	
	The answer of this MS-SC is:
	\begin{flalign}
	&\text{maximize } \left(B_j - \sum_{i \in I'}c_{ij}\right) \notag\\
	\implies &\text{minimize } \sum_{i \in K'}c_{ij}\notag
	\implies \text{minimize } \sum_{i \in K}c_{i}\notag
	\end{flalign}

	As $B_j$ is a constant, to maximize the score $S_p$ is same as to
	minimize the total cost, $\sum_{i \in K'}c_{ij}$, of the assigned
	workers, which is identical to $\sum_{i \in K}c_{i}$. Given this
	mapping it is easy to show that the set cover problem instance can
	be solved if and only if the transformed MS-SC problem can be
	solved.
	
	This way, we can reduce SCP to the MS-SC problem. Since SCP is known
	to be NP-hard \cite{vazirani2013approximation}, MS-SC is also
	NP-hard, which completes our proof.
\end{proof}

\subsection{Proof of Lemma \ref{lemma:dominate_worker}}
\begin{proof}
	We want to prove that, if worker $w_a$ is assigned to task $t_j$,
	then any skill $k_a \in X_a \cap Y_j$ can be also covered by worker
	$w_b$. From the lemma assumption, since worker $w_a$ is dominated by
	$w_b$, we have $X_a \subseteq X_b$ and $c_{aj}\geq c_{bj}$. Thus,
	from the condition that $X_a \subseteq X_b$, we have $\frac{|X_a
		\cap (Y_j - \widetilde{Y_j})|}{|Y_j|}\leq \frac{|X_b \cap (Y_j -
		\widetilde{Y_j})|}{|Y_j|}$. Moreover, since $c_{aj}\geq c_{bj}$,
	from Eq.~(\ref{eq:score_increase}), we have $\Delta S_p(w_a)\leq
	\Delta S_p(w_b)$. Therefore, worker $w_a$ is not better than worker
	$w_b$, in terms of the score increase. Hence, we can safely prune
	the worker-and-task pair $\langle w_a, t_j\rangle$.
\end{proof}

\subsection{Proof of Lemma \ref{lemma:expensive_worker}}
\begin{proof}
	From Definition \ref{definition:MS_SC}, we have the budget
	constraint that $\sum_{\forall \langle w_i, t_j\rangle \in
		I_p}c_{ij}$ $\leq B_j$. From the lemma assumption, if it holds that
	$c_{ij}> B_j-\widetilde{c_{\cdot j}}$, then we have $c_{ij} +
	\widetilde{c_{\cdot j}})> B_j$, which violates the constraint that
	the total traveling cost should not exceed the maximum budget $B_j$.
	Thus, we should not assign worker $w_i$ to task $t_j$.
	
	Due to the non-increasing property of the remaining budget
	$(B_j-\widetilde{c_{\cdot j}})$, for the rest of assignment rounds,
	we still cannot assign worker $w_i$ to $t_j$ (since the task cannot
	afford the traveling cost of the worker $w_i$). Hence, we can safely
	prune worker $w_i$.
\end{proof}

\subsection{Proof of Lemma \ref{lemma:prune_insufficient_task}}

\begin{proof} Since the traveling cost, $c_{ij}$, of worker $w_i$ is
	greater than the remaining budget, $(B_j-\widetilde{c_{\cdot j}})$,
	according to Lemma \ref{lemma:expensive_worker}, worker $w_i$ should
	not be assigned to task $t_j$.
	
	Therefore, we only need to prove that, for any set of the remaining
	unassigned workers $w_r \in (W(t_j)-\widetilde{W(t_j)})$ who can
	cover the required skill set $Y_j$, their total traveling cost is
	always greater than the remaining budget $(B_j-\widetilde{c_{\cdot
			j}})$.
	
	Without loss of generality, assume that we have a subset, $R$, of
	unassigned workers, $w_r$, in $(W(t_j)-\widetilde{W(t_j)})$ that can
	be assigned to task $t_j$, and cover the skill set $Y_j$ (note: if
	such a subset does not exist, then task $t_j$ cannot be fully
	covered by workers' skills and can be safely pruned). Then, we have
	the relationship of skill sets between worker $w_i$ (with the
	highest $\frac{\Delta S_p}{|X_i \cap (Y_j - \widetilde{Y_j})|}$
	value) and workers $w_r$ below:
	
	$(X_i \cap (Y_j - \widetilde{Y_j})) \subseteq \cup_{\forall w_r\in
		R} (X_r \cap (Y_j - \widetilde{Y_j})).$
	
	Alternative, we can derive the relationship of their set sizes, that
	is:

	\begin{eqnarray}
	|X_i \cap (Y_j - \widetilde{Y_j})| &\leq& |\cup_{\forall w_r\in R}
	(X_r \cap (Y_j -
	\widetilde{Y_j}))|\notag\\
	&\leq& \sum_{\forall w_r\in R}|X_r \cap (Y_j -
	\widetilde{Y_j})|.\label{eq:set_size}
	\end{eqnarray}
	
	On the other hand, according to Eq.~(\ref{eq:score_increase}) and
	our lemma assumption (i.e., worker $w_i$ has the largest value of
	$\frac{\Delta S_p}{|X_i \cap (Y_j - \widetilde{Y_j})|}$), for any
	worker $w_r\in R$, we have the following relationship between $w_i$
	and $w_r$:
	
	\begin{eqnarray}
	&&\frac{\Delta S_p(w_i, t_j)}{|X_i \cap (Y_j - \widetilde{Y_j})|}
	\geq \frac{\Delta
		S_p (w_r, t_j)}{|X_r \cap (Y_j - \widetilde{Y_j})|}\notag\\
	\Leftrightarrow && \frac{B_j}{|Y_j|} - \frac{c_{ij}}{|X_i \cap (Y_j
		- \widetilde{Y_j})|}
	\geq \dfrac{B_j}{|Y_j|} - \frac{c_{rj}}{|X_r \cap (Y_j - \widetilde{Y_j})|} \notag\\
	\Leftrightarrow && c_{rj} \geq c_{ij}\cdot \frac{|X_r \cap (Y_j -
		\widetilde{Y_j})|}{|X_i \cap (Y_j - \widetilde{Y_j})|}\notag
	\end{eqnarray}
	
	As a result, the total traveling cost for all $w_r \in R$ has the
	property below:
	\begin{eqnarray}
	\sum_{\forall w_r\in R} c_{rj} \geq c_{ij}\cdot \frac{\sum_{\forall
			w_r\in R} |X_r \cap (Y_j - \widetilde{Y_j})|}{|X_i \cap (Y_j -
		\widetilde{Y_j})|}. \label{eq:pruning_derivation}
	\end{eqnarray}
	
	By combining Eq.~(\ref{eq:set_size}) with
	Eq.~(\ref{eq:pruning_derivation}), we can apply the inequality
	transition, and obtain:
	
	\begin{eqnarray}
	\sum_{\forall w_r\in R} c_{rj} \geq c_{ij}.
	\end{eqnarray}

	Since it holds that $c_{ij}> B_j-\widetilde{c_{\cdot j}}$ by the
	lemma assumption, we thus have:
	\begin{eqnarray}
	\sum_{\forall w_r\in R} c_{rj} >B_j-\widetilde{c_{\cdot j}},
	\end{eqnarray}
	\noindent which exactly indicates that any subset, $R$, of those
	unassigned workers has the total traveling cost exceeding the
	remaining budget. Hence, we can safely prune task $t_j$, and the
	lemma holds.
\end{proof}

\subsection{MS-SC Grid Index}

In order to facilitate the processing of the MS-SC problem, we present an efficient cost-model-based indexing
mechanism, which can maintain workers and tasks and help the
retrieval of MS-SC answers.

\noindent {\bf Index Structure.}
We first introduce the index structure, namely {\sf MS-SC-Grid}, for
the MS-SC system. In particular, we divide a 2-dimensional data
space, $[0,1]^2$, into $1/\tau^2$ square cells with side length
$\tau$, where $\tau < 1$. Similar to the cost model of the grid
index in \cite{cheng2014reliable}, by utilizing the power
law \cite{belussi1998self} for the \textit{correlation fractal dimension} $D_2$ of
the tasks and workers in the 2D data space, we can construct a cost
model of updating the grid index after insert or delete a worker with respect to the side length
$\tau$. Then we can estimate the best value for $\tau$ to minimize the update cost.

Below, we will illustrate the content of each cell, $cell_i$, in the
grid index, which includes worker/task lists, statistics of
workers/tasks, and a cell list.

\noindent {\bf Worker and Task Lists in Cells.} Within the grid
index, each cell, $cell_i$, has a unique ID, $cid$, and is
associated with two lists, which store sets of tasks and workers,
respectively, that reside in the cell. In the worker list, we
maintain records, in the form of sextuple:
$$\langle wid, l, C, d, v, X\rangle,$$
\noindent where $wid$ is the worker ID, $l$, $C$, $d$, and $v$
represent the location, constant related to the traveling cost of
the distance, maximum moving distance, and velocity of the worker,
respectively, and $X$ is a set of skills that the worker has.

In the task list, we keep records in the form:
$$\langle tid, l, e, B, Y\rangle,$$
\noindent where $tid$ is the task ID, $l$ is the location of the
task, $e$ denotes the arrival deadline of the task, $B$ represents
the budget of the task, and $Y$ is the set of skills required by the
task.

\noindent {\bf Statistics/Aggregates in Cells.} For each cell
$cell_i$, we maintain statistics/aggregates for workers and tasks in
it, including:
\begin{itemize}
	\item the minimum constant for the traveling cost $C^{(i)}_{min}$;
	\item the largest maximum moving distance $d^{(i)}_{max}$;
	\item the maximum velocity, $v^{(i)}_{max}$, for all workers in the cell;
	\item the latest arrival deadline $e^{(i)}_{max}$;
	\item the maximum budget, $B^{(i)}_{max}$, for all tasks in the cell;
	\item the union, $X^{(i)}_{cell}$, of sets of workers' skills; and
	\item the union, $Y^{(i)}_{cell}$, of sets of the required skills by tasks in the cell.
\end{itemize}

\underline{\it Synopses for Skill Sets.} In order to time- and
space-efficiently organize/manipulate the sets of skills for workers
and tasks (i.e., $X^{(i)}_{cell}$ and $Y^{(i)}_{cell}$ above,
respectively), an alternative is to maintain two bitmap synopses,
$BM_{X}$ and $BM_{Y}$, in which each bit corresponds to a skill.
That is, for any skill in the skill set $X^{(i)}_{cell}$ (or
$Y^{(i)}_{cell}$), its corresponding bit in bitmap $BM_{X}$ (or
$BM_{Y}$) is set to ``1''; the remaining bits in $BM_{X}$ (or
$BM_{Y}$) are set to ``0''. This way, we can apply bit operations
(e.g., bit-AND or bit-OR) between any two synopses, and check the
relationship (e.g., containment or intersection) between their
corresponding skill sets.

\noindent {\bf Cell List.} Furthermore, each cell $cell_i$ is also
associated with a cell list, $clist^{(i)}$, which contains all the
cell IDs that can be reachable to at least one worker in cell
$cell_i$.

\noindent {\bf Pruning Strategy on the Cell Level.}
When calculate the valid worker-and-task pairs for worker $w_i$ in
our approaches, we can use the grid index to accelerate the
searching time by just checking the cells in the $clist^{(j)}$ of
the cell $cell_j$  and the cell $cell_j$ itself, where worker $w_i$
locates in cell $cell_j$. However, we do not need to check all the
tasks in the cells in $clist^{(j)}$. We propose 4 pruning strategies
to further reduce the search space.

Assuming we are searching the valid worker-and-task pairs for worker
$w_i$ located in cell $cell_j$, we now show how to prune a cell
$cell_k$ in $clist^{(j)}$ before further checking all the tasks in
$cell_k$. We first calculate the minimum distance $MIND_{k}$ between
the location $l_i$ of worker $w_i$ and any points in $cell_k$. Then
we have 4 pruning strategies.

\begin{itemize}
	\item If $d_i < MIND_{k}$, all the tasks in $cell_k$ are out of the working range of worker $w_i$, that is, worker $w_i$ will not accept any task in $cell_k$;
	\item If $MIND_{k}/v_i > e^{(i)}_{max}$, worker cannot arrive any
	task in $cell_k$ before the arrival deadline of that task;
	\item If $MIND_{k}C_i > B^{(k)}_{max}$, no task in $cell_k$ can
	afford the traveling cost of worker $w_i$;
	\item If $X_i \cap Y^{(k)}_{cell} = \emptyset$, worker $w_i$ cannot support any skills of any task in $cell_k$.
\end{itemize}

If any one strategy is met, we can safely prune cell $cell_k$. After pruning those unreachable or
unsupportable cells, we further check the rest cells one by
one to construct the valid worker-and-task pairs for worker $w_i$.

\noindent {\bf Dynamic Index Maintenance.} Since workers/tasks can
join and leave the spatial crowdsourcing system, our grid index
should be efficient for handling worker/task updates. Specifically,
for each incoming/expired worker/task, we update contents of cells
that workers/tasks fall into, as well as the cell list. Due to space
limitations, we will not discuss insertion/detetion of workers/tasks
in detail.

\begin{figure}[ht]\vspace{-2ex}\centering
	\scalebox{0.3}[0.3]{\includegraphics{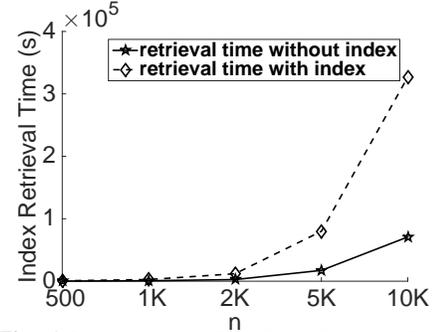}}\vspace{-2ex}
	\caption{\small Worker-and-Task Pairs Retieval Time}
	\label{fig:index}\vspace{-3ex}
\end{figure}
Figure \ref{fig:index} presents the \textit{index retrieval time} 
(i.e., the time cost of retrieving valid worker-and-task pairs) 
over UNIFORM data, where $n=5m$ and $m$ varies from 500 to 
10K. The MS-SC grid index can reduce the time of obtaining 
worker-and-task pairs dramatically (up to 79\%), compared 
	with the running time of directly enumerating and checking 
	all possible worker-and-task pairs.

\subsection{Effects of Moving Distance and Expiration Times}
In this subsection, we show the effects of the range of maximum moving distances of workers, $d$, and the range of the expiration time of tasks, $rt$.

\noindent\textbf{Effect of Range of Workers' Maximum Moving Distances $d$.}
Figure \ref{fig:distance_d} shows the effect of the range 
$[d^-, d^+]$ of worker's maximum moving distances on 
the scores of assignments and the running times, where 
we vary the range of $d$ from $[0.1, 0.2]$ to $[0.4, 0.5]$.
In Figure \ref{subfig:d_score}, all the 3 approaches can 
achieve good scores of assignment. They still has a 
comparison sequence on the score of the results: the 
score of the results obtained by ADAPTIVE is highest 
among other approaches in our experiments.
Then, $g$-D\&C can also get a higher score of assignment 
than GREEDY. Similar to the discuss of the effect of the 
velocities of workers, the increase of $d$ enlarges the access 
range of workers at the beginning. However, when the
constraint of $d$ is relaxed, the constraints from other 
parameters prevent the scores keeping growing. 

For the running times, Adaptive is faster than $g$-D\&C, but slower than Greedy. Random runs fastest, however, least effectively.

\begin{figure}[ht]\vspace{-2ex}
	\centering
	\subfigure[][{\scriptsize Scores of Assignment}]{
		\scalebox{0.2}[0.2]{\includegraphics{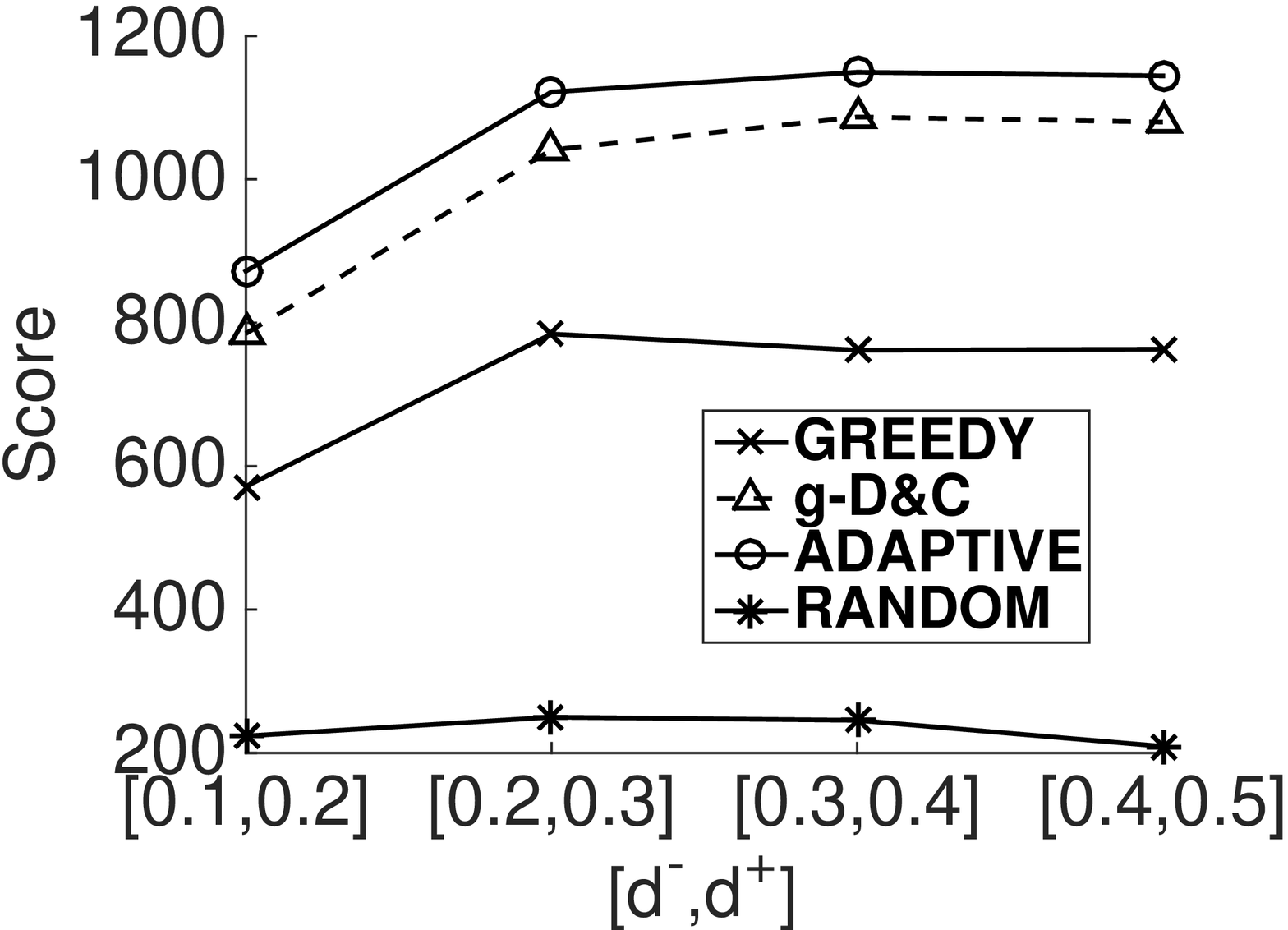}}
		\label{subfig:d_score}}
	\subfigure[][{\scriptsize Running Times}]{
		\scalebox{0.2}[0.2]{\includegraphics{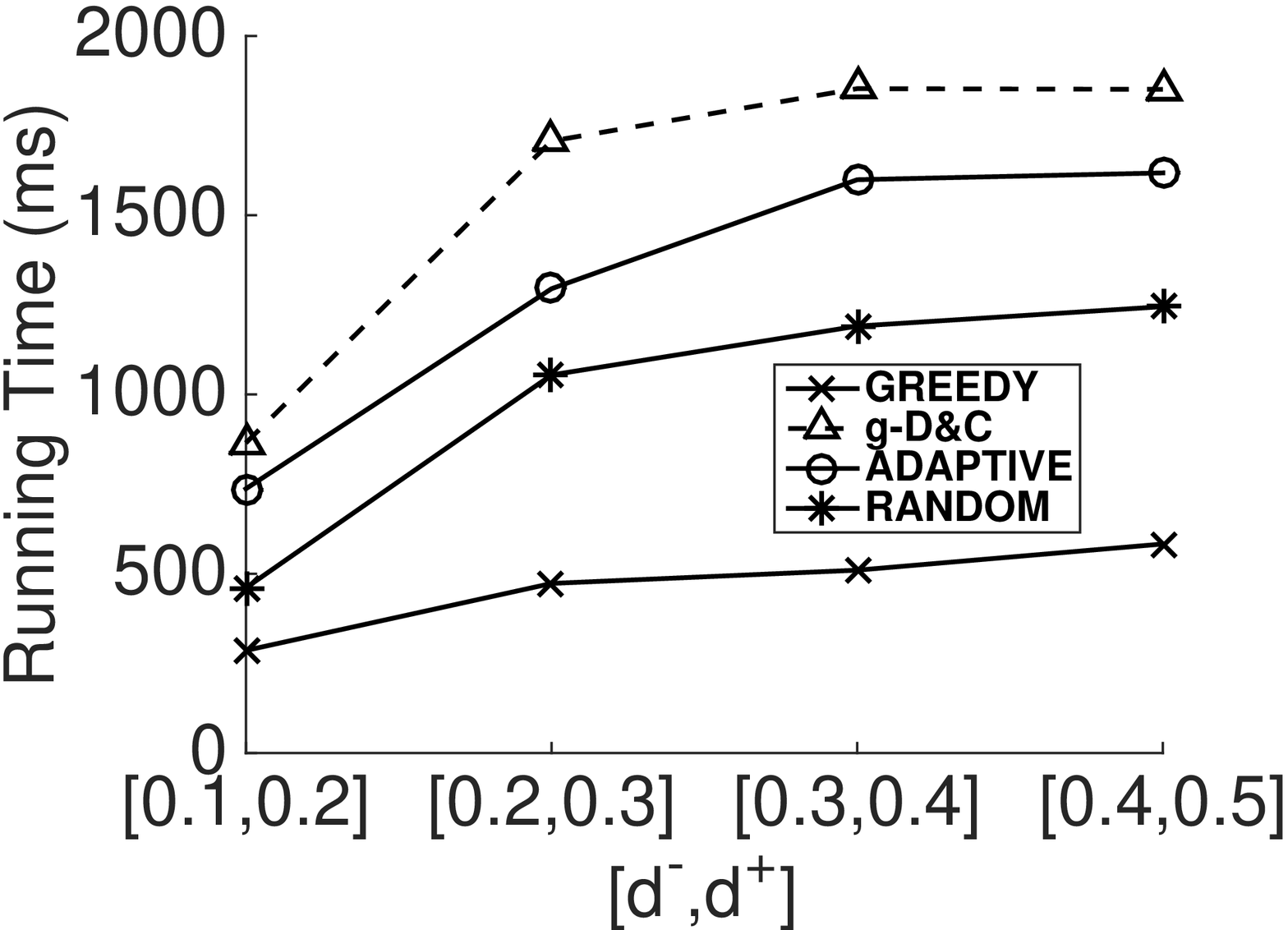}}
		\label{subfig:d_cpu}}\vspace{-2ex}
	\caption{\small Effect of Range of Maximum Moving Distances $[d^-, d^+]$ (Real Data).}\vspace{-4ex}
	\label{fig:distance_d}
\end{figure}

\noindent\textbf{Effect of Range of Expiration Times of Tasks $rt$.}
Figure \ref{fig:expiration_e} shows the effect of the range 
$[rt^-, rt^+]$ of tasks' expiration times on 
the scores of assignments and the running times, where 
we vary the range of $rt$ from $[0.25, 0.5]$ to $[3, 4]$.
In Figure \ref{subfig:e_score}, all the 3 approaches can 
achieve good scores of assignment. ADAPTIVE still obtains highest scores compared with other approaches, and $g$-D\&C is better than Greedy in scores, however not as good as Adaptive. Random just can receive lower scores.
Similar to the discuss of the effect of the 
velocities of workers, the increase of $rt$ enlarges the access 
range of workers at the beginning. However, when the
constraint of $rt$ is relaxed, the constraints from other 
parameters prevent the scores from keeping growing.

\begin{figure}[ht]
	\centering
	\subfigure[][{\scriptsize Scores of Assignment}]{
		\scalebox{0.2}[0.2]{\includegraphics{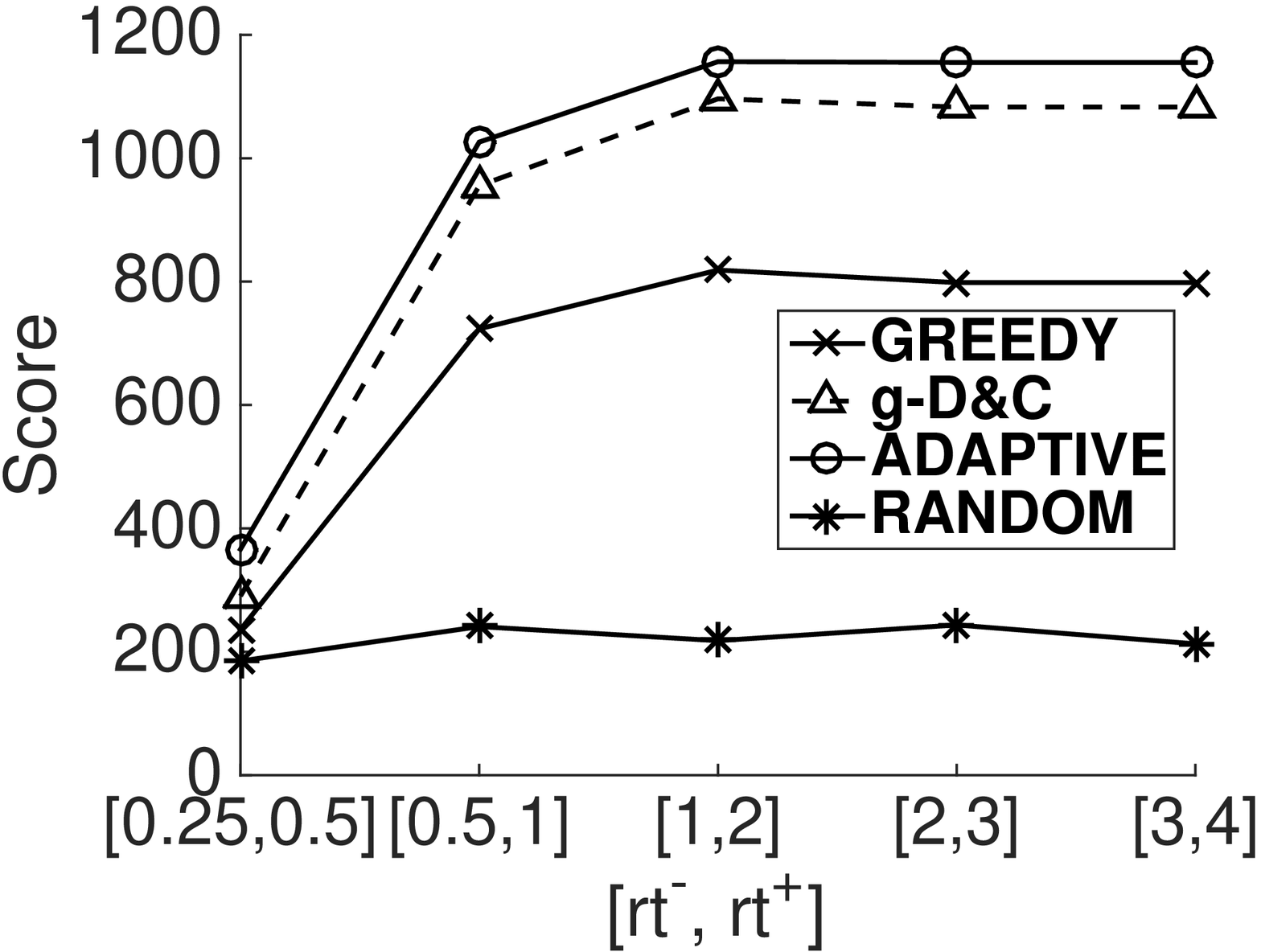}}
		\label{subfig:e_score}}
	\subfigure[][{\scriptsize Running Times}]{
		\scalebox{0.2}[0.2]{\includegraphics{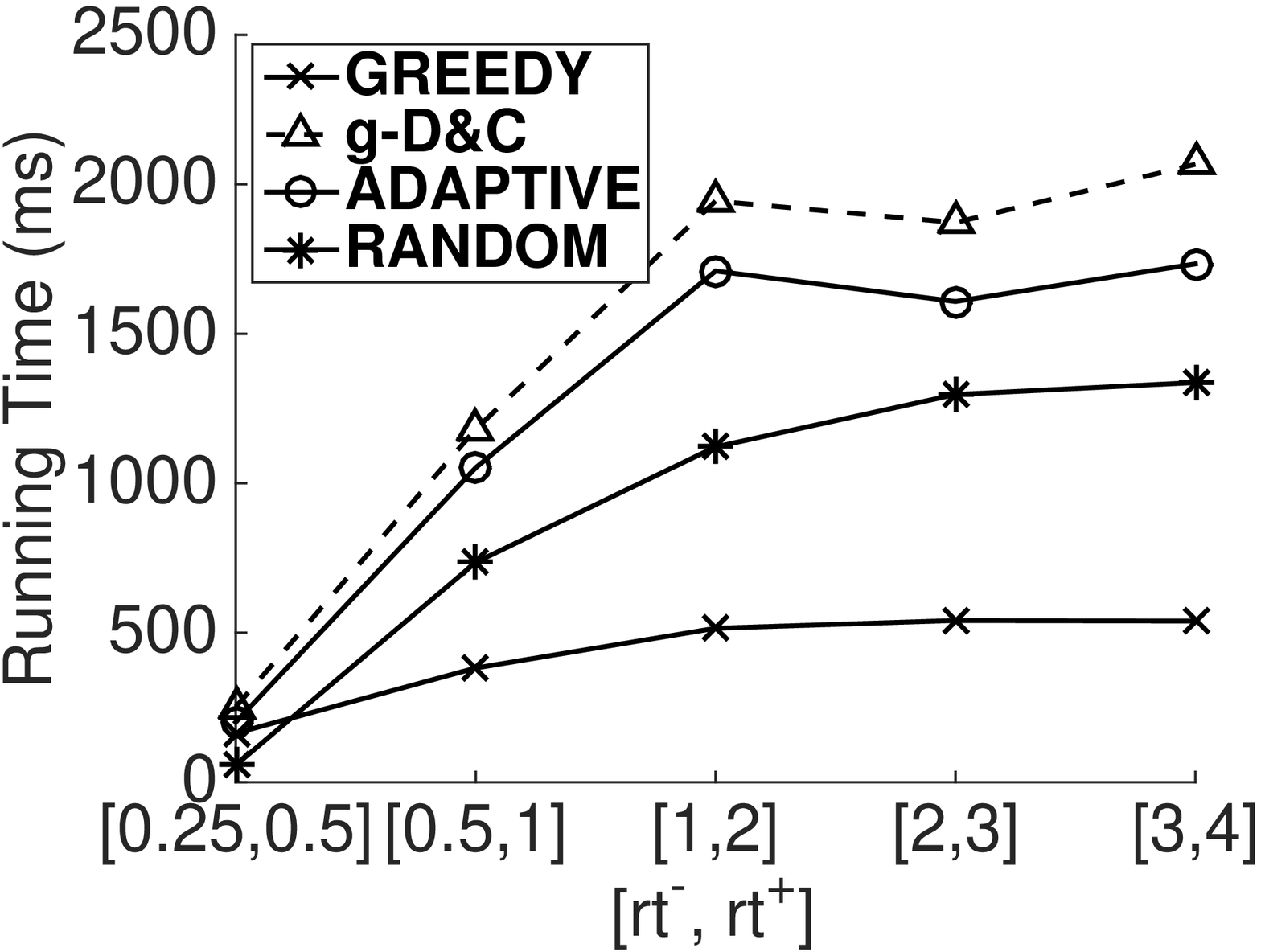}}
		\label{subfig:e_cpu}}
	\caption{\small Effect of Range of Expiration Times of Task $[rt^-, rt^+]$ (Real Data).}\vspace{-5ex}
	\label{fig:expiration_e}
\end{figure}
For the running times, all the approaches use more time when the range of expiration times increases, which is because workers can arrive at more tasks before their arrival deadlines leading to the problem space increases. Comparatively, Adaptive is faster than $g$-D\&C, but slower than Greedy. Random runs fastest, however least effectively.

\subsection{Effect of Number of Task and Workers (SKEWED)}

\noindent\textbf{Effect of Number of Tasks $m$.}
Figure \ref{fig:tasks_m_s} illustrates the effect of the 
number of tasks, when $m$ changing from 1K to 10K, 
and the distribution of the location of workers and tasks are SKEWED.
For the scores of assignments in Figure \ref{subfig:m_score_s}, 
$g$-D\&C obtains results with highest scores. ADAPTIVE 
performs similar to $g$-D\&C and also achieves good results. 
GREEDY is not as good as $g$-D\&C and ADAPTIVE, but still 
much better than RANDOM. All the approaches can 
obtain a higher score when the number of tasks become 
larger. In Figure \ref{subfig:m_cpu_s}, 
the running times increase when the number of tasks increase, because the problem space increases. The Adaptive is slower than Greedy and faster than $g$-D\&C. 

\begin{figure}[ht]\vspace{-2ex}
	\centering
	\subfigure[][{\scriptsize Scores of Assignment}]{
		\scalebox{0.2}[0.2]{\includegraphics{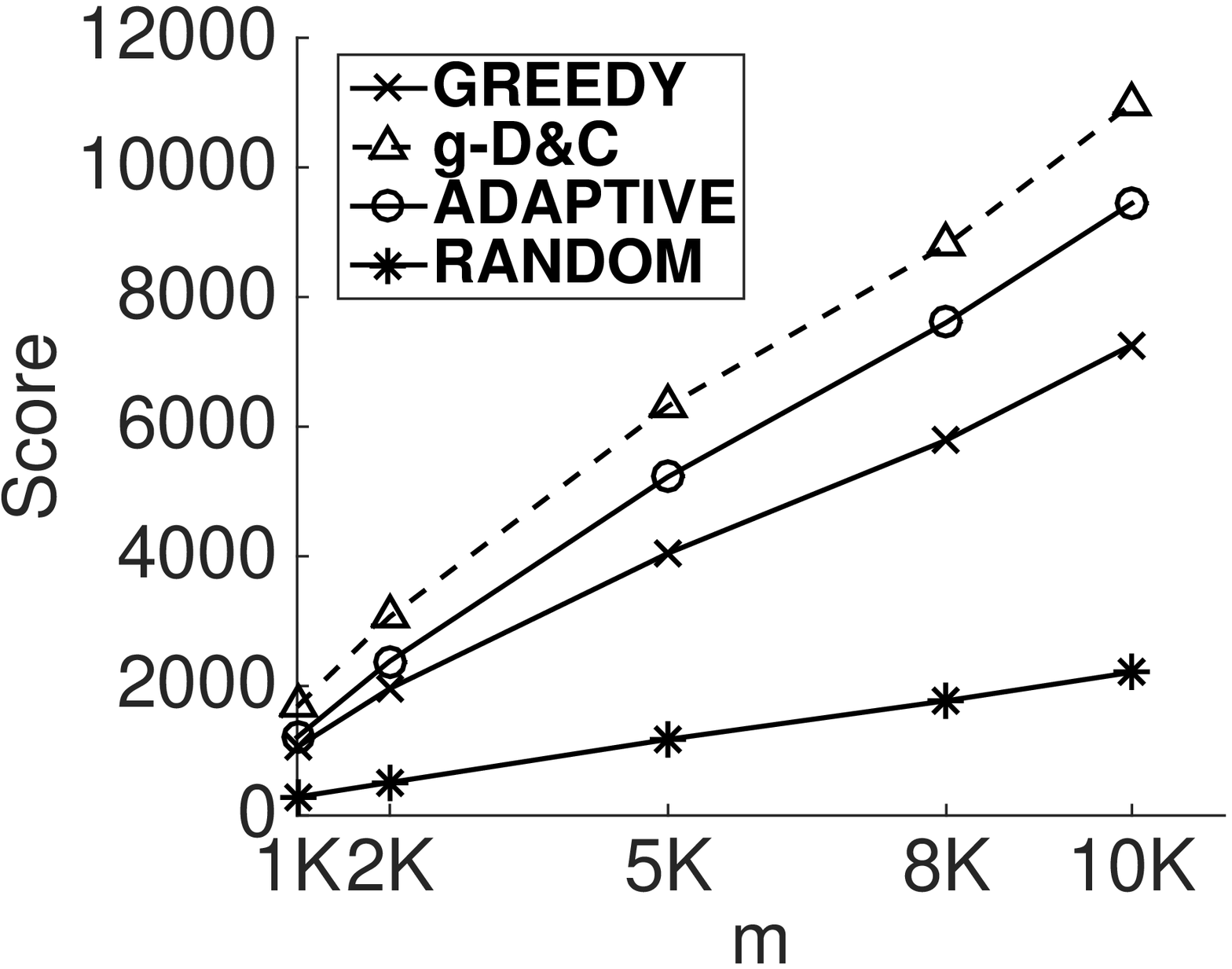}}
		\label{subfig:m_score_s}}
	\subfigure[][{\scriptsize Running Times}]{
		\scalebox{0.2}[0.2]{\includegraphics{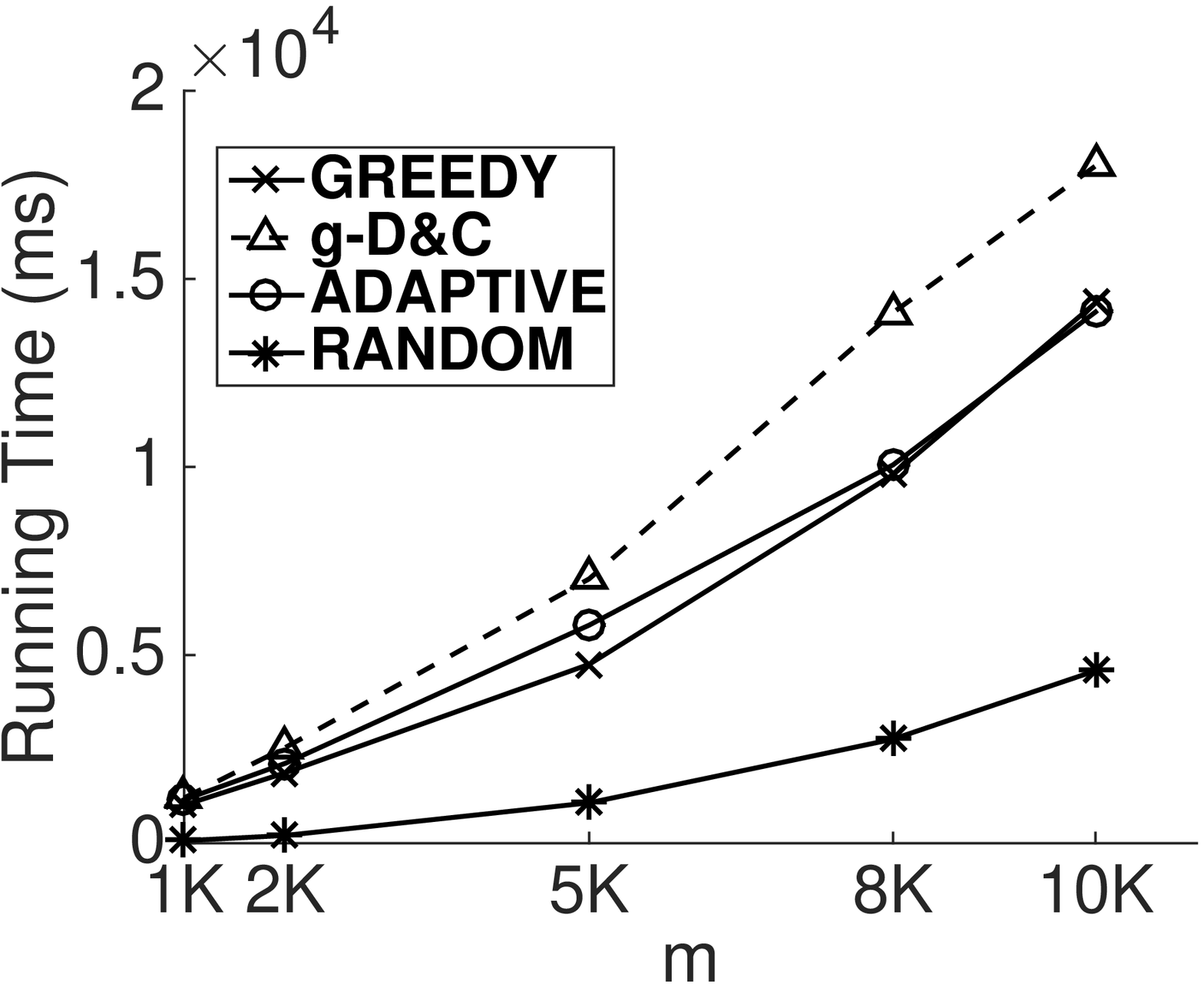}}
		\label{subfig:m_cpu_s}}\vspace{-2ex}
	\caption{\small Effect of Number of Tasks $m$ (Synthetic Data).}\vspace{-4ex}
	\label{fig:tasks_m_s}
\end{figure}

\noindent\textbf{Effect of Number of Workers $n$.}
Figure \ref{fig:workers_n_s} shows the experimental results 
when the number of workers, $n$, changing from 1K to 
10K, and the distribution of the location of workers and tasks 
are SKEWED. Similar to the results in previous discussion of the 
effect of $m$, our three approaches can obtain good results 
with high scores of assignment. In addition, the scores 
of all the approaches increase when the number of workers 
increases. The reason is that, when the number of workers increases, we 
have more workers, who are close to tasks and cost 
less, to select, which leads to the score increases. 
In Figure \ref{subfig:n_cpu_s}, the running times increase when 
the number of workers increases, because the space of problem 
increases when there are more workers. The speed of Adaptive is 
higher than $g$-D\&C and lower than Greedy.

\begin{figure}[ht]\vspace{-2ex}
	\centering
	\subfigure[][{\scriptsize Scores of Assignment}]{
		\scalebox{0.2}[0.2]{\includegraphics{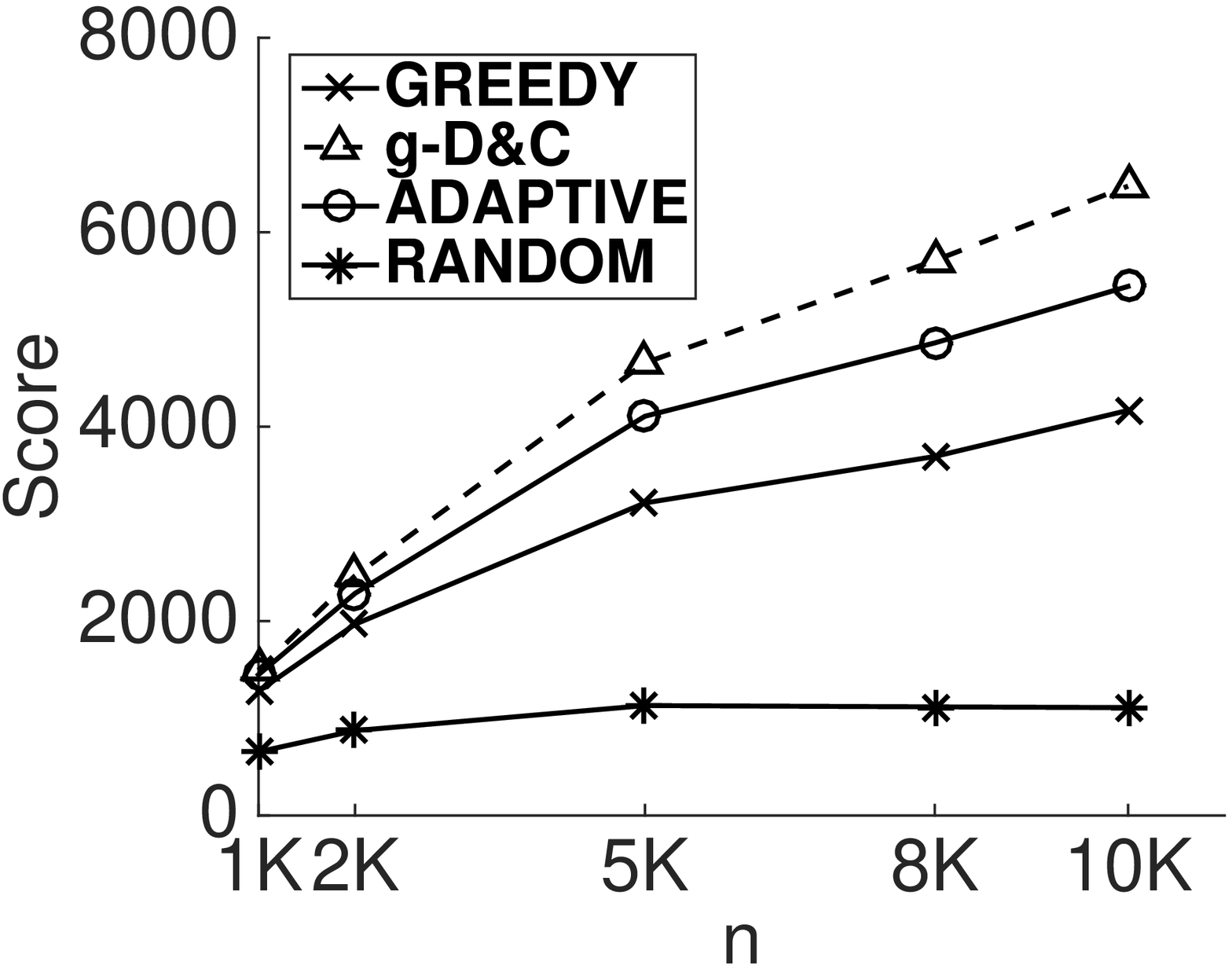}}
		\label{subfig:n_score_s}} 
	\subfigure[][{\scriptsize Running Times}]{
		\scalebox{0.2}[0.2]{\includegraphics{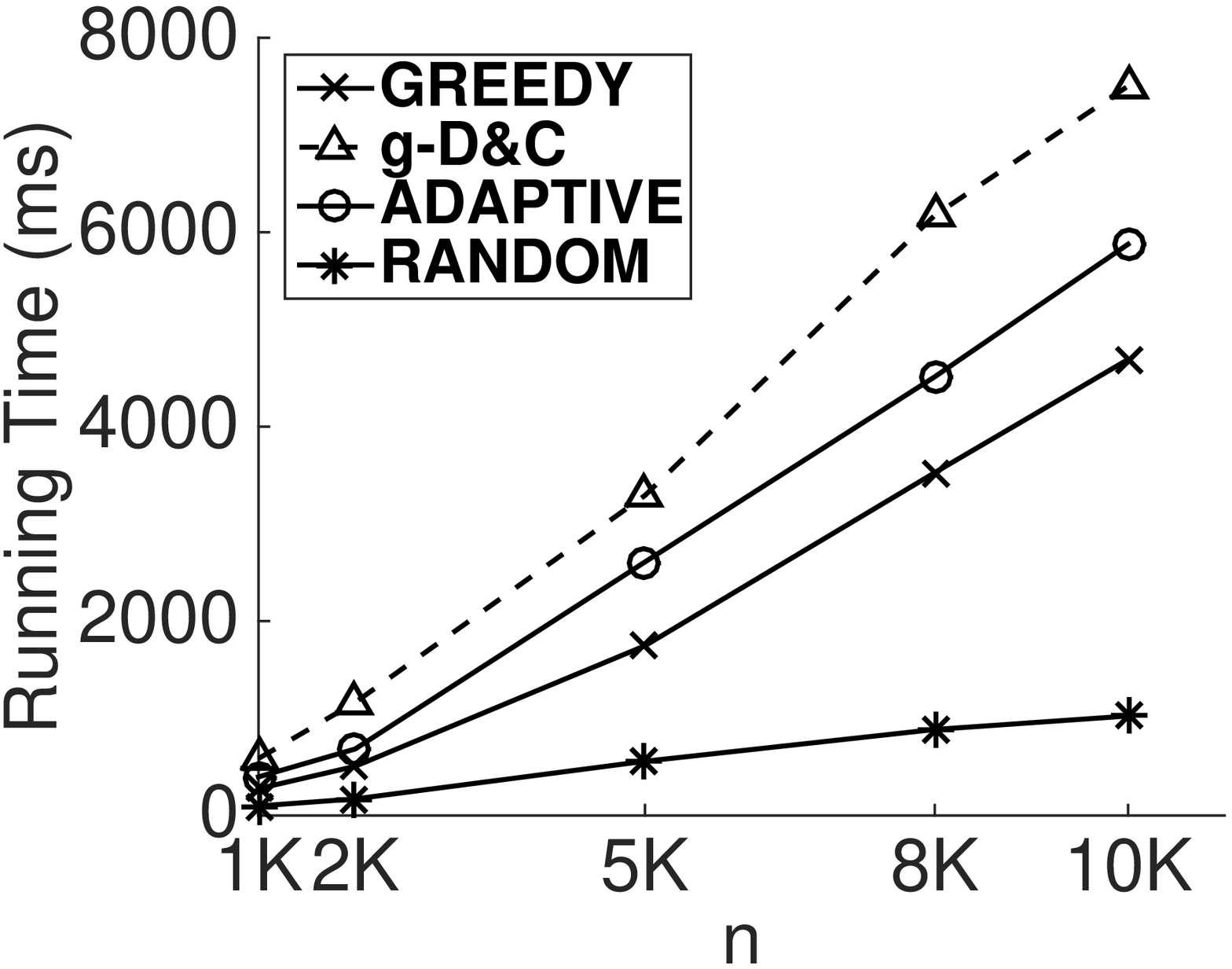}} 
		\label{subfig:n_cpu_s}}\vspace{-2ex}
	\caption{\small Effect of Number of Workers $n$ (Synthetic Data).}\vspace{-1ex}
	\label{fig:workers_n_s}\vspace{-4ex}
\end{figure}

\end{document}